\newcommand{\mbZ}{\mathbb Z}
\newcommand{\mbC}{\mathbb C}
\newcommand{\cP}{\mathcal P}
\newcommand{\oM}{\overline{\mathcal M}}
\def\cM{{\mathcal{M}}}
\def\oM{{\overline{\mathcal{M}}}}
\def\CP{{{\mathbb C}{\mathbb P}}}
\def\d{{\partial}}
\newcommand{\cA}{\mathcal A}
\newcommand{\cF}{\mathcal F}
\renewcommand{\th}{\widetilde h}
\newcommand{\Coef}{\mathrm{Coef}}
\DeclareMathOperator{\Aut}{Aut}
\newcommand{\mbCP}{\mathbb{CP}}
\newcommand{\Cr}{\mathrm{Cr}}
\newcommand{\lb}{\left(}
\newcommand{\rb}{\right)}
\newcommand{\res}{\mathrm{res}}
\newtheorem{theorem}{Theorem}[section]
\newtheorem{proposition}[theorem]{Proposition}
\newtheorem{lemma}[theorem]{Lemma}
\theoremstyle{definition}
\newtheorem{definition}[theorem]{Definition}
\newtheorem{example}[theorem]{Example}
\newtheorem{remark}[theorem]{Remark}
\def\&{\vspace{-5pt}&}
\newcommand{\mcP}{\mathcal P}
\newcommand{\bft}{\mathbf{t}}
\newcommand{\bfp}{\mathbf{p}}
\newcommand{\bff}{\mathbf{f}}
\newcommand{\wtL}{\widetilde{L}}
\newcommand{\whL}{\widehat{L}}
\newcommand{\bPhi}{\bar{\Phi}}
\newcommand{\bD}{\bar{D}}
\newcommand{\bfa}{\mathbf{a}}
\newcommand{\mbN}{\mathbb{N}}
\newcommand{\ttau}{\widetilde{\tau}}
\newcommand{\tf}{\widetilde{f}}
\newcommand{\tC}{\widetilde{C}}
\newcommand{\tK}{\widetilde{K}}
\newcommand{\tq}{\widetilde{q}}
\newcommand{\HH}{\mathrm{H}}
\newcommand{\Gr}{\mathrm{Gr}}
\DeclareMathOperator{\coker}{coker}
\DeclareMathOperator{\spn}{span}
\DeclareFontFamily{U}{mathx}{}
\DeclareFontShape{U}{mathx}{m}{n}{<-> mathx10}{}
\DeclareSymbolFont{mathx}{U}{mathx}{m}{n}
\DeclareMathAccent{\widehat}{0}{mathx}{"70}
\DeclareMathAccent{\widecheck}{0}{mathx}{"71}
\numberwithin{theorem}{section}
\numberwithin{equation}{section}
\begin{document}

\title{Open Hurwitz numbers and the mKP hierarchy}

\author{Alexandr Buryak}
\address{A. Buryak:\newline 
Faculty of Mathematics, National Research University Higher School of Economics, Usacheva str. 6, Moscow, 119048, Russian Federation;\smallskip\newline 
Skolkovo Institute of Science and Technology, Bolshoy Boulevard 30, bld. 1, Moscow, 121205, Russian Federation}
\email{aburyak@hse.ru}

\author{Ran J. Tessler}
\address{R.~J.~Tessler:\newline Department of Mathematics, Weizmann Institute of Science, POB 26, Rehovot 7610001, Israel}
\email{ran.tessler@weizmann.ac.il}

\author{Mikhail Troshkin}
\address{M. Troshkin:\newline 
Faculty of Mathematics, National Research University Higher School of Economics, Usacheva str. 6, Moscow, 119048, Russian Federation}
\email{mntroshkin@gmail.com}

\begin{abstract}
We give a natural definition of open Hurwitz numbers, where the weight of each ramified covering includes an integer parameter~$N$ taken to the power that is equal to the number of boundary components of a Riemann surface with boundary mapping to $\CP^1$. We prove that the resulting sequence of partition functions, depending on~$N\in\mbZ$, is a tau-sequence of the mKP hierarchy, or in other words it is a sequence of tau-functions of the KP hierarchy where each tau-function is obtained from the previous one by a B\"acklund--Darboux transformation. Our result is motivated by a previous observation of Alexandrov and the first two authors that the refined intersection numbers on the moduli spaces of Riemann surfaces with boundary give a tau-sequence of the mKP hierarchy.
\end{abstract}

\date{\today}

\maketitle

\section{Introduction}

The appearance of integrable systems in the study of curve-counting invariants is a well-known phenomenon, which was first manifested by Witten's conjecture~\cite{Wit91}, claiming that the generating series~$F^c$ of the intersection numbers of psi-classes on the moduli spaces of stable algebraic curves $\oM_{g,n}$ is the logarithm of a tau-function of the KP hierarchy. Witten's conjecture was proved by Kontsevich~\cite{Kon92}, and it inspired many similar results for Gromov--Witten invariants (see, e.g.,~\cite{DZ04,OP06,MST16}), Fan--Jarvis--Ruan--Witten invariants (see, e.g.,~\cite{Wit93,FJR13}), and more generally for the correlators of cohomological field theories (see, e.g.,~\cite{Kaz09,Bur15,LYZZ21,DZ00,BS12,BS24,LWZ25}). A new direction was opened by Okounkov~\cite{Oko00}, who proved that the generating series~$H^c$ of simple Hurwitz numbers is the logarithm of a tau-function of the KP hierarchy. This motivated an extensive study of the integrability of Hurwitz numbers of various types~(see, e.g., \cite{KL15,Har16,BDKS22}.

\medskip

More recently, studying the integrability of curve-counting invariants for Riemann surfaces with boundary became very active. The paper~\cite{PST24} initiated a study of the intersection theory on a compactification $\oM_{g,k,l}$ of the moduli space of Riemann surfaces with (nonempty) boundary $\cM_{g,k,l}$, where $g$ is the genus of the double a Riemann surface with boundary, $k$ is the number of boundary marked points, and $l$ is the number of internal marked points. The authors of~\cite{PST24} proposed an open (this adjective is often used to indicate that Riemann surfaces with boundary are considered) analog of Witten's conjecture, claiming that the exponent $e^{F^o}$ of the generating series $F^o$ of the intersection numbers on $\oM_{g,k,l}$ satisfies a certain system of PDEs, which was shown in~\cite{Bur16} to be equivalent to the system of equations for the wave function of the KP hierarchy. Based on the work~\cite{Tes23}, the Pandharipande--Solomon--Tessler conjecture was proved in~\cite{BT17}. An important clarification of the structure of $e^{F^o}$ was obtained in~\cite{Ale15}, where the author proved that $e^{F^c+F^o}$ is a tau-function of the KP hierarchy, which is related to $e^{F^c}$ by a B\"acklund--Darboux transformation.

\medskip

In the paper~\cite{ABT17}, the authors considered a refinement of the open intersection numbers by distinguishing contributions from surfaces with different numbers of boundary components. To be more precise, there is a decomposition $\oM_{g,k,l}=\bigsqcup_{1\le b\le g+1}\oM_{g,k,l,b}$, where $\oM_{g,k,l,b}$ parameterizes surfaces with boundary with $b$ boundary components. In~\cite{ABT17}, the authors considered the intersection numbers on each $\oM_{g,k,l,b}$, assigned the weight $N^b$ to them, and formed the generating series denoted by $F^{o,N}$, $N\in\mbZ$. Then the authors of~\cite{ABT17} conjectured that $e^{F^c+F^{o,N}}$ coincides with the tau-function given by the Kontsevich--Penner matrix model, which, according to~\cite{Ale15}, gives a tau-sequence of the modified KP (mKP) hierarchy.

\medskip

Our motivation was to give a Hurwitz-type analog of the observation from~\cite{ABT17}, i.e., to propose a natural definition of open Hurwitz numbers, counting maps from Riemann surfaces with boundary to $\CP^1$, that would give a tau-sequence of the mKP hierarchy including the KP tau-function $e^{H^c}$. This is done in this paper, which is organized as follows. After a brief review, in Section~\ref{section:closed Hurwitz numbers}, of simple Hurwitz numbers, which we call closed Hurwitz numbers, we move on, in Section~\ref{section:open Hurwitz numbers}, to our definition of open Hurwitz numbers counting certain maps from Riemann surfaces with boundary to $\CP^1$. Very similarly to what was done in~\cite{ABT17}, the weight of each map includes an integer parameter~$N$ taken to the power that is equal to the number of boundary components of a Riemann surface with boundary mapping to $\CP^1$. Then we prove Theorem~\ref{theorem:open Hurwitz and BD} claiming that, after a simple rescaling, the sequence of the exponents $\tau^o_N$ of the generating series of open Hurwitz numbers is a tau-sequence of the mKP hierarchy. Also, in Theorem~\ref{theorem:BD from tauzero to tauone}, we describe explicitly the B\"acklund--Darboux transformation relating $\tau^o_0$ and $\tau^o_1$. In order to make the paper more self-contained, we review in Appendix~\ref{section:kp} and Appendix~\ref{section:bd} the necessary facts about the KP hierarchy and the B\"acklund--Darboux transformations.

\medskip

\noindent{\bf Acknowledgements.} A.~B. is grateful to M. Kazarian for various useful discussions regarding the KP hierarchy and Hurwitz numbers. The work of A.~B and M.~T. is an output of a research project implemented as part of the Basic Research Program at the National Research University Higher School of Economics (HSE University). R.~T. was supported by ISF grant 1729/23.

\medskip

\section{Closed Hurwitz numbers}\label{section:closed Hurwitz numbers}

For $n\in\mbZ_{\ge 0}$, we denote by~$\cP_n$ the set of all partitions~$\lambda$ of~$n$. For a partition~$\lambda=(\lambda_1,\ldots,\lambda_l)$, $\lambda_1\ge\ldots\ge\lambda_l\ge 1$, we denote $l(\lambda):=l$.

\medskip

\begin{definition}
Let $d\in \mbZ_{\ge 1}$, $m\in\mbZ_{\ge 0},\lambda\in\mcP_d$. Let us fix pairwise distinct points $x_1,\ldots,x_m\in\mbC\subset\CP^1$.
\begin{enumerate}
\item Consider a connected closed Riemann surface $C$. A \emph{closed simple ramified covering} of degree~$d$ is a nonconstant holomorphic map $f\colon C\to\mbCP^1$ such that $x_1,\ldots,x_m$ are simple critical values of $f$, the ramification profile over~$\infty$ is given by the partition $\lambda$, and the map $f$ is not ramified over $\mbC\backslash\{x_1,\ldots,x_m\}$.

\smallskip

\item A \emph{closed Hurwitz number} $h^c(\lambda,m)$ is the number of isomorphism classes $[f]$ of closed simple ramified coverings counted with the weight $\frac{1}{\sharp\Aut(f)}$, where $\Aut(f)$ is the automorphism group of the covering. 
\end{enumerate}
\end{definition}

\medskip

Introduce formal variables $p_1,p_2,\ldots$, $\beta$, $q$, and consider the generating series 
$$
H^c(\bfp,\beta,q):=\sum_{\substack{d\ge 1\\m\ge 0}}\sum_{\lambda\in\mcP_d} h^c(\lambda,m)\frac{\beta^m}{m!}q^d p_\lambda\in\mbC[[p_1,p_2,\ldots,\beta,q]],
$$
where $p_\lambda:=\prod_{i=1}^{l(\lambda)}p_{\lambda_i}$ and we denote by $\bfp$ the sequence of variables $(p_1,p_2,\ldots)$. Define a \emph{closed partition function} $\tau^c$ by 
$$
\tau^c:=e^{H^c}.
$$

\medskip

It is well known that $\tau^c$ is a tau-function of the KP hierarchy (this was proved in~\cite{Oko00}, but see also~\cite{KL07} for a nice presentation), where the variables $p_i$ are related to the times $t_i$ of the KP hierarchy (see Appendix~\ref{section:kp} for a brief review) by $p_i=it_i$. Moreover, the tau-function can be explicitly described using the infinite wedge formalism in the following way:
\begin{gather}\label{eq:Fock and Hurwitz}
\tau^c=\Psi_0\Big(\bigwedge_{i\ge 1}f^\HH_i(z,\beta,q)\Big),\quad\text{where}\quad f^\HH_i(z,\beta,q):=z^{-i}\left(\sum_{l\ge 0}e^{\beta\frac{(l-i+\frac{1}{2})^2-(i-\frac{1}{2})^2}{2}}\frac{(qz)^l}{l!}\right),\quad i\ge 1.
\end{gather}

\medskip

\begin{remark}
In literature, tau-functions of the KP hierarchy are usually elements of the algebra of formal power series $\mbC[[\bft]]:=\mbC[[t_1,t_2,\ldots]]$. However, in many constructions and results,~$\mbC$ can be replaced by an arbitrary commutative associative $\mbC$-algebra $K$ (see Remark~\ref{remark:about base algebra} for more details). In this case, we will sometimes say that the \emph{base algebra} is $K$. So, $\tau^c$ can be considered as a family of KP tau-functions depending on the complex parameters $\beta$ and $q$ or as a tau-function of the KP hierarchy where the base algebra is $\mbC[[\beta,q]]$ (a bit more precisely, we have $\tau^c\in\mbC[e^\beta,e^{-\beta},q][[\bft]]$). 
\end{remark}

\medskip


\section{Open Hurwitz numbers}\label{section:open Hurwitz numbers}

\subsection{Definition}

We view $S^1$ as the unit circle in the complex plane~$\mbC$. Let us define the \emph{north} and the \emph{south} hemisphere of the complex projective line $\CP^1=\mbC\cup\{\infty\}$ by
$$
\CP^1_+:=\{z\in\mbC||z|<1\},\qquad \CP^1_-:=\{z\in\mbC||z|>1\}\cup\{\infty\}.
$$
Denote $\overline{\CP^1_+}:=\{z\in\mbC||z|\le 1\}$ and $\overline{\CP^1_-}:=\{z\in\mbC||z|\ge 1\}\cup\{\infty\}$. We will identify the relative homology group $H_2(\CP^1,S^1;\mbZ)$ with $\mbZ\oplus\mbZ$.

\medskip

\begin{definition}
Let $m_1,m_2\in\mbZ_{\ge 0}$ and $(d_1,d_2)\in\mbZ_{\ge 0}^2\backslash\{0\}$. Let us fix~$m_1$ pairwise distinct points in $\CP^1_+$ and~$m_2$ pairwise distinct points in $\CP^1_-\backslash\{\infty\}$ and denote the union of these points by $\Cr^s\subset\CP^1$. Let us fix a partition~$\lambda\in\mcP_{d_2}$.
\begin{enumerate}
\item Consider a connected compact Riemann surface with nonempty boundary $(C,\d C)$. An \emph{open simple ramified covering} of bidegree~$(d_1,d_2)$ is a nonconstant holomorphic map 
$$
f\colon(C,\d C)\to (\mbCP^1,S^1)
$$
satisfying the following properties:
\begin{enumerate}
\item The points from the set $\Cr^s$ are simple critical values of $f$. 

\smallskip

\item The ramification profile of the map $f$ over $\infty$ is given by the partition $\lambda$.

\smallskip

\item The map $f$ is not ramified over $\mbC\backslash\Cr^s$.

\smallskip

\item We have $f_*([C])=(d_1,d_2)\in\mbZ\oplus\mbZ=H_2(\CP^1,S^1;\mbZ)$, where $[C]\in H_2(C,\d C;\mbZ)$ is the fundamental class of $C$.
\end{enumerate}
We will say that a component of the boundary $\d C$ is \emph{negative} (respectively, \emph{positive}) if the image of a small neighbourhood of this component belongs to the south (respectively, north) hemisphere. See an illustration in Fig.~\ref{fig:open covering}. 

\smallskip

\item Let $N\in\mbZ$. An \emph{open Hurwitz number} $h^o_N(\lambda,m_1,m_2,d_1)$ is the number of isomorphism classes $[f]$ of open simple ramified coverings counted with the weight
$$
\frac{(-1)^{\sharp\left\{\begin{minipage}{2.4cm}\tiny negative\\ components of $\d C$\end{minipage}\right\}}N^{\sharp\left\{\begin{minipage}{2.4cm}\tiny components of $\d C$\end{minipage}\right\}}}{\sharp\Aut(f)}.
$$
\end{enumerate}
\end{definition}

\begin{figure}[t]
\centering
\begin{tikzpicture}[scale=0.6]

\draw[->] (0,3.75) to (0,2.25);
\node[left] at (0,3) {$f$};
\draw (5,0) node {$\bullet$}; \node[right] at (5,0) {$\infty$};
\draw (-3.5,0) node {$\bullet$}; \draw (-2.5,0) node {$\ldots$};\draw (-1.5,0) node {$\bullet$};
\draw (-2.5,-1.8	) node {$\underbrace{\phantom{aaaaaaa.}}_{\begin{minipage}{1.9cm}\tiny $m_1$ simple\\critical values\end{minipage}}$};
\draw (3.5,0) node {$\bullet$}; \draw (2.5,0) node {$\ldots$};\draw (1.5,0) node {$\bullet$};
\draw (2.5,-1.8	) node {$\underbrace{\phantom{aaaaaaa.}}_{\begin{minipage}{1.9cm}\tiny $m_2$ simple\\critical values\end{minipage}}$};
\draw (-4,1.5) node {$\CP^1_+$}; \draw (4,1.5) node {$\CP^1_-$};
\draw (0,0) ellipse (5 and 1);
\draw (0,1) arc(90:270:0.5cm and 1cm);
\draw[dashed] (0,-1) arc(-90:90:0.5cm and 1cm);

\begin{scope}[shift={(0,6)}]
\draw (0,1) arc(90:270:0.5cm and 1cm);
\draw (0,-1) arc(-90:90:0.5cm and 1cm);
\end{scope}

\begin{scope}[shift={(0,13.7)}]
\draw (0,1) arc(90:270:0.5cm and 1cm);
\draw[dashed] (0,-1) arc(-90:90:0.5cm and 1cm);
\end{scope}

\draw (0,10.8) to (-4,10.8) to[out=180,in=90] (-5,9.5) to[out=270,in=90] (-4.5,7.9) to[out=270,in=90] (-5,6.3) to[out=270,in=180] (-4,5) to (0,5);
\draw (0,8.8) to[out=180,in=180] (0,7);

\begin{scope}[shift={(0,-0.3)}]
\draw (0,15) to (4.3,15) to[out=0,in=90] (5,14.5) to[out=-90,in=90] (4.3,13.6) to[out=-90,in=90] (5,12.8) to[out=-90,in=90] (4.3,12) to[out=-90,in=90] (5,11.2) to[out=-90,in=90] (4.3,10.4) to[out=-90,in=90] (5,9.6) to[out=-90,in=0] (4.3,9.1) to (0,9.1);
\end{scope}

\draw (0,10.8) to[out=0,in=0] (0,12.7);

\begin{scope}[shift={(-3.7,9.5)},scale=0.4]
\draw (0,0) to [out=-35,in=-145] (4,0);
\draw (0.6499,-0.37) to [out=35,in=145] (4-0.6499,-0.37);
\end{scope}

\begin{scope}[shift={(-2.5,6.5)},scale=0.4]
\draw (0,0) to [out=-35,in=-145] (4,0);
\draw (0.6499,-0.37) to [out=35,in=145] (4-0.6499,-0.37);
\end{scope}

\begin{scope}[shift={(1.7,12.5)},scale=0.4]
\draw (0,0) to [out=-35,in=-145] (4,0);
\draw (0.6499,-0.37) to [out=35,in=145] (4-0.6499,-0.37);
\end{scope}

\draw (5,14.2) node {$\bullet$}; \node[right] at (5,14.2) {$\lambda_1$};
\draw (5,12.5) node {$\bullet$}; \node[right] at (5,11.9) {$\vdots$};
\draw (5,10.9) node {$\bullet$};
\draw (5,9.3) node {$\bullet$}; \node[right] at (5,9.3) {$\lambda_l$};
\draw (-1.5,13.7) node {\begin{minipage}{2cm}\tiny positive \\ boundary \\ component\end{minipage}};
\draw (2.5,6) node {\begin{minipage}{2cm}\tiny negative \\ boundary \\ component\end{minipage}};
\end{tikzpicture}
\caption{Open simple ramified covering}
\label{fig:open covering}
\end{figure}
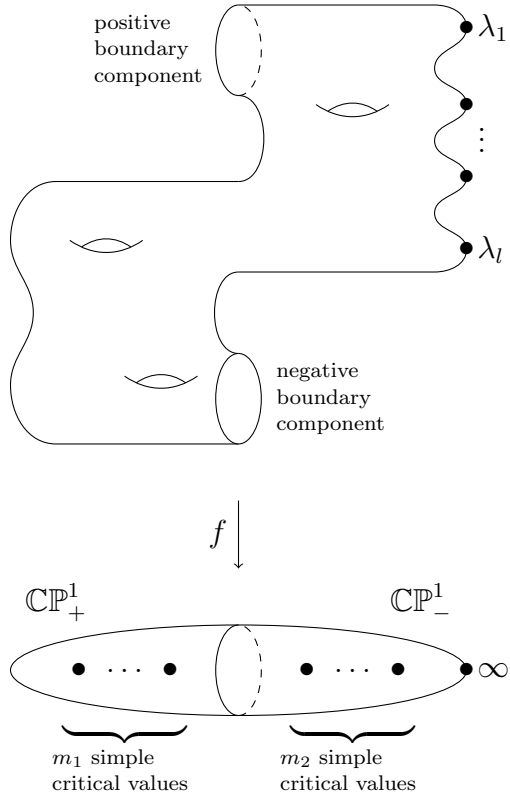

\medskip

\subsection{Open partition function as an mKP tau-sequence}

Let us introduce formal variables $q_1,q_2,\beta_1,\beta_2$ and form the following generating series:
$$
H^o_N(\bfp,\beta_1,\beta_2,q_1,q_2):=\sum_{\substack{(d_1,d_2)\in\mbZ_{\ge 0}^2\backslash\{0\}\\m_1,m_2\ge 0}}\sum_{\lambda\in\mcP_{d_2}}h^o_N(\lambda,m_1,m_2,d_1)\frac{\beta_1^{m_1}}{m_1!}\frac{\beta_2^{m_2}}{m_2!}q_1^{d_1}q_2^{d_2}p_\lambda.
$$
Define an \emph{open partition function} $\tau^o_N$ by
$$
\tau^o_N(\bfp,\beta_1,\beta_2,q_1,q_2):=e^{H^o_N(\bfp,\beta_1,\beta_2,q_1,q_2)+H^c(\bfp,\beta_1+\beta_2,q_1q_2)}.
$$
Note that 
\begin{gather}\label{eq:open tau-function for N=0}
\tau^o_0(\bfp,\beta_1,\beta_2,q_1,q_2)=\tau^c(\bfp,\beta_1+\beta_2,q_1q_2).
\end{gather}

\medskip

\begin{theorem}\label{theorem:open Hurwitz and BD}
Define $\ttau_N:=\left.\tau^o_N\right|_{q_2\mapsto e^{N\beta_2}q_2}$.
Then the sequence $(\ttau_N)_{N\in\mbZ}$ is a tau-sequence of the mKP hierarchy.
\end{theorem}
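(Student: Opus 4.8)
The plan is to lift the explicit infinite-wedge description of $\tau^c$ in \eqref{eq:Fock and Hurwitz} to the open setting, realizing the whole sequence $(\ttau_N)_{N\in\mbZ}$ through the charge-$N$ projections $\Psi_N$ of a \emph{single} operator $g^o$ acting on formal Laurent series in $z$. Recall the standard fact, reviewed in Appendix~\ref{section:bd}, that for a fixed $g^o$ the subspaces $V_N:=g^o\cdot\spn\{z^{j}:j\le N-1\}$ are nested with codimension one, so that the associated charge-$N$ tau-functions automatically form a tau-sequence of the mKP hierarchy, consecutive members being related by a B\"acklund--Darboux transformation. Thus each $\ttau_N$ will be a KP tau-function (being $\Psi_N$ of a point of the Sato Grassmannian), and it suffices to produce one operator $g^o$ whose charge-$N$ component reproduces $\ttau_N$ for every $N$. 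The entire content of the theorem then concentrates in this identification, and in particular the rescaling $q_2\mapsto e^{N\beta_2}q_2$ must be exactly what removes the residual $N$-dependence from the Grassmannian point and pushes it into the choice of charge.

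First I would unpack the open Hurwitz count geometrically by cutting a covering $f\colon(C,\d C)\to(\CP^1,S^1)$ along the preimage $f^{-1}(S^1)$. This splits $C$ into a north part $C_+=f^{-1}(\overline{\CP^1_+})$ and a south part $C_-=f^{-1}(\overline{\CP^1_-})$: the former is a branched cover of a disk with $m_1$ simple branch points, the latter a branched cover of a disk with $m_2$ simple branch points and ramification profile $\lambda$ over the centre $\infty$. The circles of $f^{-1}(S^1)$ are of two kinds: the genuine boundary circles of $C$ (positive ones lying in $C_+$, negative ones in $C_-$), and interior circles along which $C_+$ and $C_-$ are glued. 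Summing over the common gluing profile over $S^1$ turns the count into a convolution of a north contribution and a south contribution, paired in the bosonic Fock space via the standard inner product; this is the open analogue of the cut-and-join computation underlying \eqref{eq:Fock and Hurwitz}.

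Next I would translate each piece into an operator on the Fock space. The disk covers with simple branch points contribute the cut-and-join dressing already visible in \eqref{eq:Fock and Hurwitz}: writing $G(\beta,q)=e^{\beta D}e^{qz}e^{-\beta D}$ with $D=\tfrac12\bigl(z\partial_z+\tfrac12\bigr)^2$ and $e^{qz}$ denoting multiplication, one checks directly that $f^\HH_i(z,\beta,q)=G(\beta,q)\,z^{-i}$, so that the north and south halves carry the dressings $G(\beta_1,q_1)$ and $G(\beta_2,q_2)$. The new ingredients are the genuine boundary circles, each weighted by $N$ and by an extra $-1$ when negative: resumming over all numbers of boundary circles of each degree should exponentiate these weights into a vertex-type operator, the positive and negative circles producing modes of opposite sign and the factor $N$ appearing linearly in the exponent, while $p_\lambda$ enters as the usual Heisenberg insertion over $\infty$. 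Assembling these pieces gives a fermionic expression for $\tau^o_N$.

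The crucial and most delicate point -- which I expect to be the main obstacle -- is to verify that after the substitution $q_2\mapsto e^{N\beta_2}q_2$ this expression is exactly the charge-$N$ projection $\Psi_N$ of a single, $N$-independent operator $g^o$. Concretely, a charge shift by $N$ reindexes the south basis functions $f^\HH_i(z,\beta_2,q_2)$, altering their Gaussian weights $e^{\beta_2[(l-i+\frac12)^2-(i-\frac12)^2]/2}$; the rescaling of $q_2$ by $e^{N\beta_2}$ is designed to cancel precisely this alteration, so that the boundary vertex operator together with the rescaling amounts to the clean codimension-one enlargement $V_N\subset V_{N+1}$ that defines the B\"acklund--Darboux step. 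Verifying this cancellation is a bookkeeping computation with the explicit weights, but it is exactly where the normalization in the theorem is used. Once it is in place, the characterization recalled in Appendix~\ref{section:bd} shows that $(\ttau_N)$ is an mKP tau-sequence, and tracking the single basis vector added between charges $0$ and $1$ yields the explicit transformation of Theorem~\ref{theorem:BD from tauzero to tauone}.
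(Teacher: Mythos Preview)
Your endpoint is exactly right: the paper ultimately produces a single $N$-independent sequence $(\tf_k)_{k\in\mbZ}$ of Laurent series with $\ttau_N=\Psi_N\big(\bigwedge_{i\ge 1}\tf_{N-i}\big)$, which is equivalent to your flag $V_N=g^o\cdot\spn\{z^j:j\le N-1\}$ and immediately gives the mKP structure. But the paper reaches this endpoint by a different and considerably lighter route than the one you sketch.

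Rather than performing the equatorial cut for general $\beta_2$, the paper first specializes to $\beta_2=0$. With no branching in the south, the geometry becomes trivial: every open covering either lives entirely over $\overline{\CP^1_-}$ (a single disk $z\mapsto z^n$, contributing the factor $e^{N\sum_n p_n q_2^n/n}$) or is obtained from a \emph{closed} covering by excising some of the south disks around preimages of $\infty$ (Lemma~\ref{lemma:tau-o-beta2-zero}). This yields the closed formula
\[
\tau^o_N\big|_{\beta_2=0}=\tau^c\big(\bfp-N[q_2^{-1}],\beta_1,q_1q_2\big)\,e^{N\sum_{n\ge 1}p_n q_2^n/n},
\]
and a standalone proposition (Proposition~\ref{prop:tau-sequence-from-shifts}) shows that for \emph{any} KP tau-function $\tau$ the sequence $\big(\tau(\bft-N[x^{-1}])e^{N\xi(\bft,x)}\big)_{N\in\mbZ}$ is an mKP tau-sequence. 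The $\beta_2$-dependence is then restored by the cut-and-join equation $\partial_{\beta_2}\tau^o_N=\cA\tau^o_N$ (Lemma~\ref{lemma:cut-and-join-for-tau-o}), and Proposition~\ref{proposition:cut-and-join and mKP} shows that $e^{\beta\cA}$ sends mKP tau-sequences to mKP tau-sequences after the rescaling $q\mapsto e^{N\beta}q$; the rescaling verification you flag as ``the main obstacle'' becomes the one-line exponent identity $\tfrac12\big[(j+k-N+\tfrac12)^2-(k-N+\tfrac12)^2\big]+jN=\tfrac12\big[(j+k+\tfrac12)^2-(k+\tfrac12)^2\big]$.

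Your direct equatorial cut for arbitrary $\beta_2$ is plausible in principle, but the proposal leaves the substantive work undone: you never write down the convolution over gluing profiles in the presence of \emph{both} glued and free boundary circles, you do not identify the vertex-type operator the free circles produce, and you explicitly defer the rescaling cancellation. Carrying this out would essentially require re-deriving the cut-and-join action on the south side in fermionic form, at which point you would be redoing the paper's Proposition~\ref{proposition:cut-and-join and mKP} inside a more complicated ambient computation. The paper's decomposition---strip off $\beta_2$, handle the vertex-operator step once via Proposition~\ref{prop:tau-sequence-from-shifts}, then flow back---isolates the two mechanisms cleanly and makes each verification elementary.
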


In other words, for each $N\in\mbZ$ we have the following:
\begin{enumerate}
	\item $\ttau_N$ is a tau-function of the KP hierarchy.  
	
	\smallskip
	
	\item $\ttau_{N+1}$ is related to $\ttau_N$ by a forward B\"acklund--Darboux transformation (see Appendix~\ref{section:bd} for a brief review).
\end{enumerate}

\begin{proof}
The proof will proceed in four steps.
\begin{enumerate}
	\item Lemma \ref{lemma:tau-o-beta2-zero} gives a formula for $\tau^o_N|_{\beta_2=0}$ in terms of $\tau^c$.
	
\smallskip	
	
	\item Proposition \ref{prop:tau-sequence-from-shifts} for $\tau = \tau^c$ implies that $(\tau^o_N|_{\beta_2=0})_{N\in\mbZ}$ is an mKP tau-sequence.
	
\smallskip	
	
	\item By Lemma \ref{lemma:cut-and-join-for-tau-o}, the full series $\tau^o_N$ can recovered as $\tau^o_N=e^{\beta_2 \cA}\lb\tau^o_N|_{\beta_2=0}\rb$, where $\cA$ is the classical cut-and-join operator for closed Hurwitz numbers.
	
\smallskip	
	
	\item Finally, Proposition \ref{proposition:cut-and-join and mKP} implies that $(\ttau_N)_{N\in\mbZ}$ is an mKP tau-sequence, which completes the proof.
\end{enumerate}

\smallskip

\begin{lemma}
\label{lemma:tau-o-beta2-zero}
We have
\begin{gather}\label{eq:tauc with beta2 zero}
\left.\tau^o_N\right|_{\beta_2=0}=\tau^c(\bfp-N[q_2^{-1}],\beta_1,q_1q_2)e^{N\sum_{n\ge 1}\frac{p_n}{n}q_2^n},
\end{gather}
where $\bfp-N[q_2^{-1}]:=\lb p_1-\frac{N}{q_2},p_2-\frac{N}{q_2^2},\ldots\rb$.
\end{lemma}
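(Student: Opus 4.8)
The plan is to use the geometric content of the condition $\beta_2 = 0$, namely that it removes all simple critical values in the southern hemisphere, so that over $\overline{\CP^1_-}$ the map $f$ is ramified only over $\infty$. First I would determine the shape of the southern part $f^{-1}(\overline{\CP^1_-})$ of a connected open covering with $m_2 = 0$. Since $\overline{\CP^1_-}$ is a disk centered at $\infty$ and $f$ is unramified away from $\infty$, each connected component of $f^{-1}(\overline{\CP^1_-})$ is a cyclic branched covering $z \mapsto z^k$ of the disk; hence it is itself a disk with a single boundary circle winding $k$ times around $S^1$ and with automorphism group $\mbZ/k$. Thus the southern part of $C$ is a disjoint union of such disks, one for each part of the ramification profile over $\infty$.

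Next I would record a structural dichotomy for connected $C$. Either $C$ is a single southern disk --- this is exactly the locus $d_1 = 0$, carrying one negative boundary component and $\#\Aut(f) = n$ --- or the northern part $C_+ := f^{-1}(\overline{\CP^1_+})$ is nonempty, and then each boundary circle of $C_+$ is either capped by one of the southern disks, contributing its part to the profile over $\infty$ (recorded by $\bfp$) and to the southern degree $d_2$, or left uncapped as a boundary component of $C$.

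The crux is to pass to closed coverings. Capping every remaining boundary circle of $C_+$ by the standard southern disk of the matching winding yields a connected closed covering $\hat C \to \CP^1$ with the same $m_1$ simple critical values and with profile $\mu \sqcup \nu$ over $\infty$, where $\mu$ collects the parts that stayed southern disks in $C$ and $\nu$ collects the parts that were boundary components. Conversely, a connected closed covering together with a choice of a nonempty sub-multiset $\nu$ of its profile over $\infty$ recovers such a $C$ by deleting the corresponding southern disks. Under this correspondence one has $d_1 = |\mu| + |\nu|$ and $d_2 = |\mu|$, so that after the substitution $q \mapsto q_1 q_2$ each part $k \in \nu$ produces the factor $q_2^{-k}$ weighted by the boundary weight --- precisely the effect of $p_k \mapsto p_k - N q_2^{-k}$ in $H^c$. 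The single southern disks have no such closed counterpart, since the closed correspondence always leaves a nonempty northern part ($d_1 > 0$); they are exactly the $d_1 = 0$ contributions and, with the factor $\tfrac1n$ coming from $\#\Aut(f) = n$, they assemble into the correction $e^{N \sum_{n \ge 1} \frac{p_n}{n} q_2^n}$.

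I expect the main obstacle to be the bookkeeping of automorphisms and signs under this marking. One must verify that deleting the disks indexed by $\nu$ relates $\#\Aut(f)$ to $\#\Aut(\hat f)$ in exactly the way that distinguishing a sub-multiset of equal parts of the profile does, so that summing over markings reproduces the product $\prod_i (p_{\lambda_i} - N q_2^{-\lambda_i})$; and that the weight $(-1)^{\#\mathrm{neg}} N^{\#\mathrm{bdry}}$ attached to the boundary components matches, component by component, the sign and the power of $N$ carried by the Miwa shift. Once this is checked on connected coverings, exponentiating the resulting identity for the connected series $H^o_N|_{\beta_2=0}$ against $H^c(\bfp, \beta_1, q_1 q_2)$ yields the stated formula for $\tau^o_N|_{\beta_2=0}$.
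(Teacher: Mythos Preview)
Your proposal follows essentially the same route as the paper: the dichotomy ``pure southern disk'' versus ``nonempty northern part,'' together with the bijection in the second case between open coverings and closed coverings equipped with a marked subset of the profile over~$\infty$ (which the paper calls the \emph{cutting procedure}), is exactly the paper's argument, and your capping construction is precisely the inverse of that procedure. One caution on the sign bookkeeping you rightly flag as the main obstacle: your labeling of the pure-southern-disk boundary as \emph{negative} (consistent with the textual definition) would yield weight $-N/n$ and hence $e^{-N\sum p_n q_2^n/n}$ rather than $e^{+N\sum}$, and dually would give $+N q_2^{-k}$ in the shift; the paper's figure and its computation $h^o_N(\lambda_1,0,0,0)=+N/\lambda_1$ in fact use the opposite convention, so make sure to fix the convention that makes both halves of the formula come out with the stated signs.
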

\begin{proof}
Suppose that $f\colon C \to\CP^1$ is a closed simple ramified covering of degree~$d$ and without critical values in $\CP^1_-\backslash\{\infty\}$. Let $f^{-1}(\infty)=\{p_1,\ldots,p_l\}$ and the multiplicity of $f$ at $p_i$ is equal to $\lambda_i$. Denote by $C_i$, $1\le i\le l$, the connected component of $f^{-1}(\CP^1_-)$ containing $p_i$. Since the map $f$ doesn't have critical values in $\CP^1_-\backslash\{\infty\}$, we have $C_i\cap C_j=\emptyset$ for $i\ne j$, and there exist biholomorphisms $\phi_i\colon\CP^1_-\to C_i$ such that $(f\circ\phi_i)(z)=z^{\lambda_i}$. One can remove some of the sets $C_1,\ldots,C_l$, say $C_{i_1},\ldots,C_{i_k}$, and obtain an open simple ramified covering of bidegree $(d,d-\sum_{j=1}^k\lambda_{i_j})$ with only negative boundary components. Let us call this procedure the \emph{cutting procedure}, see Fig.~\ref{fig:cutting procedure}.
\begin{figure}[t]
\centering
\begin{tikzpicture}[scale=0.5]

\draw[->] (0,3.75) to (0,2.25);
\draw (5,0) node {$\bullet$}; \node[right] at (5,0) {$\infty$};
\draw (-3.5,0) node {$\bullet$}; \draw (-2.5,0) node {$\ldots$};\draw (-1.5,0) node {$\bullet$};
\draw (-2.5,-1.8	) node {$\underbrace{\phantom{aaaaaaa.}}_{\begin{minipage}{1.9cm}\tiny $m_1$ simple\\critical values\end{minipage}}$};

\draw (-4,1.5) node {$\CP^1_+$}; \draw (4,1.5) node {$\CP^1_-$};
\draw (0,0) ellipse (5 and 1);
\draw (0,1) arc(90:270:0.5cm and 1cm);
\draw[dashed] (0,-1) arc(-90:90:0.5cm and 1cm);

\begin{scope}[shift={(0,6)}]
\draw (0,1) arc(90:270:0.5cm and 1cm);
\draw[dashed] (0,-1) arc(-90:90:0.5cm and 1cm);
\end{scope}
\begin{scope}[shift={(0,9.8)}]
\draw (0,1) arc(90:270:0.5cm and 1cm);
\draw[dashed] (0,-1) arc(-90:90:0.5cm and 1cm);
\end{scope}

\draw (0,5) arc(-90:90:5cm and 1cm);
\draw (0,8.8) arc(-90:90:5cm and 1cm);
\draw (5,9.8) node {$\bullet$};
\draw (5,6) node {$\bullet$};

\draw (0,10.8) to (-4,10.8) to[out=180,in=90] (-5,9.5) to[out=270,in=90] (-4.5,7.9) to[out=270,in=90] (-5,6.3) to[out=270,in=180] (-4,5) to (0,5);
\draw (0,8.8) to[out=180,in=180] (0,7);

\begin{scope}[shift={(-3.7,9.5)},scale=0.4]
\draw (0,0) to [out=-35,in=-145] (4,0);
\draw (0.6499,-0.37) to [out=35,in=145] (4-0.6499,-0.37);
\end{scope}

\begin{scope}[shift={(-2.5,6.5)},scale=0.4]
\draw (0,0) to [out=-35,in=-145] (4,0);
\draw (0.6499,-0.37) to [out=35,in=145] (4-0.6499,-0.37);
\end{scope}

\draw[->,thick] (7.6,5) to (9.6,5);

\begin{scope}[shift={(17,0)}]
\draw[->] (0,3.75) to (0,2.25);
\draw (5,0) node {$\bullet$}; \node[right] at (5,0) {$\infty$};
\draw (-3.5,0) node {$\bullet$}; \draw (-2.5,0) node {$\ldots$};\draw (-1.5,0) node {$\bullet$};
\draw (-2.5,-1.8	) node {$\underbrace{\phantom{aaaaaaa.}}_{\begin{minipage}{1.9cm}\tiny $m_1$ simple\\critical values\end{minipage}}$};

\draw (-4,1.5) node {$\CP^1_+$}; \draw (4,1.5) node {$\CP^1_-$};
\draw (0,0) ellipse (5 and 1);
\draw (0,1) arc(90:270:0.5cm and 1cm);
\draw[dashed] (0,-1) arc(-90:90:0.5cm and 1cm);

\begin{scope}[shift={(0,6)}]
\draw (0,1) arc(90:270:0.5cm and 1cm);
\draw (0,-1) arc(-90:90:0.5cm and 1cm);
\end{scope}
\begin{scope}[shift={(0,9.8)}]
\draw (0,1) arc(90:270:0.5cm and 1cm);
\draw[dashed] (0,-1) arc(-90:90:0.5cm and 1cm);
\end{scope}

\draw (0,8.8) arc(-90:90:5cm and 1cm);
\draw (5,9.8) node {$\bullet$};

\draw (0,10.8) to (-4,10.8) to[out=180,in=90] (-5,9.5) to[out=270,in=90] (-4.5,7.9) to[out=270,in=90] (-5,6.3) to[out=270,in=180] (-4,5) to (0,5);
\draw (0,8.8) to[out=180,in=180] (0,7);

\begin{scope}[shift={(-3.7,9.5)},scale=0.4]
\draw (0,0) to [out=-35,in=-145] (4,0);
\draw (0.6499,-0.37) to [out=35,in=145] (4-0.6499,-0.37);
\end{scope}

\begin{scope}[shift={(-2.5,6.5)},scale=0.4]
\draw (0,0) to [out=-35,in=-145] (4,0);
\draw (0.6499,-0.37) to [out=35,in=145] (4-0.6499,-0.37);
\end{scope}
\end{scope}
\end{tikzpicture}
\caption{Cutting procedure}
\label{fig:cutting procedure}
\end{figure}
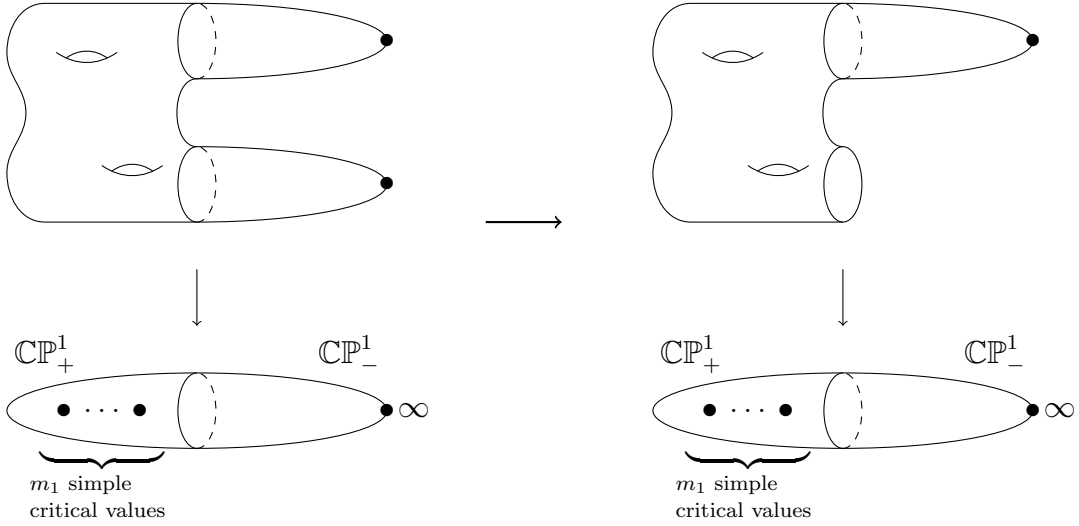
Conversely, if $f\colon (C,\d C)\to (\CP^1,S^1)$ is an open simple ramified covering with only negative boundary components, then one can glue to~$C$ closed disks along the components of~$\d C$ and obtain a closed simple ramified covering $\tf\colon\tC\to\CP^1$ such that $f$ is obtained from $\tf$ by the cutting procedure.

\medskip

Consider an open simple ramified covering $f\colon (C,\d C)\to (\CP^1,S^1)$ with $m_2=0$. There are two cases.

\medskip

\underline{\it Case 1: $f^{-1}(\CP^1_+)=\emptyset$}. Then one can easily see that $l(\lambda)=1$ and $f$ is isomorphic to the open simple ramified covering $(\overline{\CP^1_-},S^1)\to (\overline{\CP^1_-},S^1)$ mapping $z$ to $z^{\lambda_1}$. Therefore, $h^o_N(\lambda_1,0,0,0)=\frac{N}{\lambda_1}$. This gives the term $e^{N\sum_{n\ge 1}\frac{p_n}{n}q_2^n}$ on the right-hand side of~\eqref{eq:tauc with beta2 zero}. 

\medskip

\underline{\it Case 2: $f^{-1}(\CP^1_+)\ne\emptyset$}. Then $C$ has only negative boundary components and is obtained from some closed simple ramified covering by the cutting procedure. This gives the term $\tau^c(\bfp-N[q_2^{-1}],\beta_1,q_1q_2)$ on the right-hand side of~\eqref{eq:tauc with beta2 zero}.
\end{proof}

\medskip

\begin{proposition}
	\label{prop:tau-sequence-from-shifts}
	Consider an arbitrary tau-function $\tau(\bft)$ of the KP hierarchy, with a base algebra $K$. Let $\tau_N(\bft, x) := \tau(\bft - N[x^{-1}])e^{N\xi(\bft, x)}$, $N\in\mbZ$. Then $(\tau_N(\bft,x))_{N \in \mbZ}$ is a tau-sequence of the mKP hierarchy, with the base algebra $K((x^{-1}))$.
\end{proposition}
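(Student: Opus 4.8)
The plan is to pass to the fermionic (infinite-wedge) picture and realize the whole sequence as a complete flag of decomposable states. First I would use the boson--fermion correspondence recalled in Appendix~\ref{section:kp}: since $\tau$ is a KP tau-function, it is the image $\sigma_0(|W\rangle)$ of a decomposable state $|W\rangle=v_1\wedge v_2\wedge\cdots\in\mathcal F^{(0)}$, i.e.\ of a point $W$ of the Sato Grassmannian over the base algebra $K$. The elementary shift $\tau\mapsto\tau(\bft-[x^{-1}])e^{\xi(\bft,x)}$ is exactly the action of the vertex operator $\Gamma_+(x)=e^{\xi(\bft,x)}e^{-\xi(\tilde\partial,x^{-1})}$, which under the correspondence is the fermion field $\psi(x)=\sum_j\psi_j x^j$ (up to the charge-shift monomial $x^m$ on the sector $\mathcal F^{(m)}$, which equals $1$ on $\mathcal F^{(0)}$). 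Thus $\tau_1=\sigma_1(\psi(x)|W\rangle)=\sigma_1(e(x)\wedge v_1\wedge v_2\wedge\cdots)$, where $e(x)=\sum_j x^j e_j$; this state is again decomposable, so $\tau_1$ is a KP tau-function with $W_1=K\,e(x)\oplus W$.

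For general $N>0$ the key point is that $\tau_N$ is \emph{not} $\Gamma_+(x)^N\tau$ (which vanishes, matching $\psi(x)^2=0$), but rather the regularized confluent limit
\begin{gather*}
\tau_N\;=\;e^{N\xi(\bft,x)}e^{-N\xi(\tilde\partial,x^{-1})}\tau\;=\;\lim_{x_1,\dots,x_N\to x}\frac{\Gamma_+(x_1)\cdots\Gamma_+(x_N)\,\tau}{\prod_{i<j}\lb1-x_j/x_i\rb}.
\end{gather*}
On the fermionic side $\Gamma_+(x_1)\cdots\Gamma_+(x_N)\tau$ corresponds to $e(x_1)\wedge\cdots\wedge e(x_N)\wedge|W\rangle$, and dividing by the Vandermonde factor before letting the points collide turns this into the Wronskian-type wedge $e(x)\wedge\partial_x e(x)\wedge\cdots\wedge\partial_x^{N-1}e(x)\wedge|W\rangle$. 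Hence $\tau_N=\sigma_N(|W_N\rangle)$ with
\begin{gather*}
W_N=\spn_K\{e(x),\partial_x e(x),\dots,\partial_x^{N-1}e(x)\}\oplus W,
\end{gather*}
again a decomposable state, so each $\tau_N$ is a KP tau-function. For $N<0$ I would run the dual argument with $\Gamma_-(x)\leftrightarrow\psi^*(x)$, obtaining $W_N\subset W_0$ of codimension $|N|$ cut out by $e^*(x),\partial_x e^*(x),\dots$. In every case $W_N\subset W_{N+1}$, the passage adjoining the single new vector $\partial_x^N e(x)$ (resp.\ dropping one constraint), so the $W_N$ form a complete flag with codimension-one steps, all carried over the base algebra $K((x^{-1}))$ in which $x$ is the formal generator.

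Finally I would invoke the description of forward B\"acklund--Darboux transformations from Appendix~\ref{section:bd}: adjoining a single vector to the Grassmannian point, equivalently the charge-raising step $|W_N\rangle\mapsto|W_{N+1}\rangle$, is precisely a forward BD transformation, and a complete flag of such steps is by definition a tau-sequence of the mKP hierarchy. This yields both assertions, that each $\tau_N$ is a KP tau-function and that $\tau_{N+1}$ is a forward BD transform of $\tau_N$.

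I expect the main obstacle to be the second and third steps: justifying the confluent limit as a genuine identity of formal power series over $K((x^{-1}))$, and keeping exact track of the charge-shift monomials in $x$, so as to confirm that the flag has \emph{codimension-one} steps and that increasing $N$ realizes the \emph{forward} (rather than backward) BD direction in the convention of Appendix~\ref{section:bd}. The genericity ensuring $e(x),\dots,\partial_x^{N-1}e(x)\notin W$ must be handled in the same formal sense, so that the claimed dimensions hold identically as elements of the base algebra.
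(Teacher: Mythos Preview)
Your proposal is correct and follows essentially the same route as the paper: both express $\tau_N$ as the regularized specialization $x_1,\dots,x_N\to x$ of the $N$-fold vertex-operator action (dividing by the Vandermonde factor), identify this fermionically as the wedge of $\bigwedge f_i$ with $N$ new vectors spanned by $\theta(x)$ and its derivatives, conclude that the resulting flag has codimension-one steps (hence forward B\"acklund--Darboux transformations), and treat $N<0$ via the adjunction involution. The paper carries out your ``confluent limit'' step as an explicit Schur-polynomial specialization $s_\lambda(x,\dots,x)=x^{|\lambda|}\det(a_i^{N-j})/\prod(i-1)!$ and resolves your flagged ``genericity'' obstacle by an explicit triangularization showing the leading coefficient of each new vector is $\tau(-(i{+}1)[x^{-1}])/\tau(-i[x^{-1}])$, hence invertible in $K((x^{-1}))$.
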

\begin{proof}
	Let us fix $N\ge 1$. Introduce a \emph{shift operator} $G_x$ by $G_x(f):=f|_{\bft \mapsto \bft - [x^{-1}]}$. Consider the group $K[[x_1^\pm,\ldots,x_N^\pm]]$ and denote by $K_N$ its subgroup defined as the sum of the subgroups of the form
	$$
	K[[x_i^{-1}]]_{i\in I}[x_j]_{j\in[N]\backslash I},\quad I\subset[N]:=\{1,\ldots,N\}.
	$$
	Clearly, the product of the elements from $K_N$ is well defined, so $K_N$ is an algebra. We have the inclusion of algebras 
	$$
	K_N\subset \underbrace{K((x_N^{-1}))((x_{N-1}^{-1}))\ldots ((x_1^{-1}))}_{\tK_N:=}.
	$$
	Note that for any element $f\in K_N$ the substitution $f|_{x_i\mapsto x}$ is well defined giving an element from $K((x^{-1}))$.
	
	\medskip
	
	We have $G_{x_1} G_{x_2} \dots G_{x_N} \tau \times \prod_{1 \leq i \leq N} e^{\xi(\bft, x_i)}\in K_N[[\bft]]$ and
	\begin{equation*}
		\tau_N = \tau(\bft - N[x^{-1}])e^{N\xi(\bft, x)} = 	\Big( G_{x_1} G_{x_2} \dots G_{x_N} \tau \times\prod_{1 \leq i \leq N} e^{\xi(\bft, x_i)} \Big) \Big|_{x_i \mapsto x}.
	\end{equation*}
	Also, we have $G_{x_1} e^{\xi(\bft, x_2)} =  \frac{x_1 - x_2}{x_1} e^{\xi(\bft, x_2)}\in K((x_2^{-1}))((x_1^{-1}))[[\bft]]\subset\tK_N[[\bft]]$. Therefore,
	\begin{multline}\label{eq:two N-shifts}
		G_{x_1} G_{x_2} \dots G_{x_N} \tau  \, \times \!\!\prod_{1 \leq i \leq N}\!\! e^{\xi(\bft, x_i)} =\\
		= \underbrace{G_{x_1} \left( G_{x_2} \left(\dots \left(G_{x_N} \tau \cdot e^{\xi(\bft, x_N)}\right) \dots\right) \cdot e^{\xi(\bft, x_2)} \right) \cdot e^{\xi(\bft, x_1)}}_{\in\tK_N[[\bft]]}\, \times \!\! \underbrace{\prod_{1\leq i < j \leq N} \frac{x_i}{x_i - x_j}}_{\in\tK_N}.
	\end{multline}
	
	\medskip
	
	Without loss of generality, we can assume that $\tau(0)=1$. We know that our tau-function~$\tau$ has a fermionic representation
	$$
	\tau(\bft) = \Psi_0 \Big(\bigwedge_{i\le -1} f_i(z)\Big),\quad f_i(z)=z^{i}\Big(1+\sum_{j\ge 1}f_{i,j}z^j\Big),\quad f_{i,j}\in K.
	$$	
	By Proposition \ref{proposition:wave-fermionic}, the right-hand side of equation~\eqref{eq:two N-shifts} can be represented as
	\begin{multline*}
		\Psi_N \Big(  x_1^{-N+1}\theta(x_1) \, x_2^{-N+2}\theta(x_2) \dots \theta(x_N) \bigwedge f_i \Big) \, \times \!\! \prod_{1\leq i < j \leq N} \frac{x_i}{x_i - x_j} =\\
		= \Psi_N \Big(\theta(x_1)\theta(x_2) \dots \theta(x_N) \bigwedge f_i \Big) \, \times \!\! \prod_{1\leq i < j \leq N} \frac{1}{x_i - x_j},
	\end{multline*}
	where we view $\theta(x_1)\theta(x_2) \dots \theta(x_N) \bigwedge f_i$ as an element of the Fock space $\cF^{[N]}_{\tK_N}$. 
	
	\medskip
	
	Consider integers $a_1 > a_2 > \dots > a_N$. The coefficient of $\theta_{a_1} \theta_{a_2} \dots \theta_{a_N}$ in the operator $\theta(x_1) \theta(x_2) \dots \theta(x_N) \prod_{1\leq i < j \leq N} \frac{1}{x_i - x_j}$ is equal to the Laurent polynomial
	\begin{equation*}
		s_{\lambda}(x_1, \dots, x_N) = \frac{ \det \left( x_i^{a_j} \right)_{i,j=1}^N}{\prod_{1\leq i < j \leq N} (x_i - x_j)}, \quad \lambda_i = a_i+ i - N, \quad i=1,\dots, N.
	\end{equation*}
	The specialization to $x_i = x$ is found by the following standard computation: 
	\begin{multline*}
		s_\lambda(x, \dots, x) = x^{|\lambda|}\left.s_\lambda(1, q, \dots, q^{N-1})\right|_{q=1}= x^{|\lambda|}\left.\frac{ \det \left(q^{(i-1)a_j} \right)}{\prod_{i < j} (q^{i-1} - q^{j-1})}\right|_{q=1}=\\
		=x^{|\lambda|}\left.\prod_{i<j}\frac{q^{a_j}-q^{a_i}}{q^{i-1} - q^{j-1}}\right|_{q=1}=x^{|\lambda|}\prod_{i < j} \frac{a_i - a_j}{j - i} =  x^{|\lambda|}\frac{\det \left(a_i^{N-j}\right)}{\prod_{i=1}^N(i-1)!}.
	\end{multline*}
	This is the coefficient of $\theta_{a_1} \theta_{a_2} \dots \theta_{a_N}$ in
	\begin{equation*}
		\prod_{i=1}^{N} \frac{x^{-i + 1}}{(i-1)!} \times \big( (x\d_x)^{N-1} \theta(x) \big) \circ \dots \circ \big( (x \d_x) \theta(x) \big) \circ \theta(x).
	\end{equation*}
	We conclude that 
	\begin{multline*}
		\tau_N=\Psi_N\bigg(\bigg[\prod_{i=1}^{N} \frac{x^{-i + 1}}{(i-1)!} \times \big( (x\d_x)^{N-1} \theta(x) \big) \circ \dots \circ \big( (x \d_x) \theta(x) \big) \circ \theta(x)\bigg]\bigwedge_{i\le -1} f_i\bigg)=\\
		=\Psi_N\bigg(h_{N-1}\wedge h_{N-2}\wedge\ldots\wedge h_0\wedge\bigwedge_{i\le -1} f_i\bigg),
	\end{multline*}
	where $h_j=\sum_{a\in\mbZ}\frac{(a+j)^j}{j!}x^a z^{a+j}$, $0\le j\le N-1$. 
	
	\medskip
	
	We can decompose $h_0=\sum_{a\in\mbZ}x^a z^a$ in the following way:
	$$
	h_0=\sum_{i\le -1}c_{0,i}f_i+\th_0,\quad \th_0\in K((x^{-1}))[[z]],\quad c_{0,i}\in x^i\mbC[[x^{-1}]].
	$$
	Therefore, $\tau_1=\Psi_1\bigg(\th_0\wedge\bigwedge_{i\le -1} f_i\bigg)$, and we see that $\Coef_{z^0}\th_0=\tau(-[x^{-1}])\in K[[x^{-1}]]$. Continuing this process, we construct a sequence of elements $\th_i\in z^i K((x^{-1}))[[z]]$, $i\ge 0$, such that
	$$
	h_i=\sum_{j\le -1}c_{i,j}f_j+\sum_{k=0}^{i-1} d_{i,k}\th_k+\th_i,\quad c_{i,j}\in x^{j-i}K[[x^{-1}]],\quad d_{i,k}\in K((x^{-1})),
	$$
	and moreover $\Coef_{z^i}\th_i=\frac{\tau(-(i+1)[x^{-1}])}{\tau(-i[x^{-1}])}$. 
	
	\medskip
	
	We have $\tau_M=\Psi_M\bigg(\th_{M-1}\wedge\th_{M-2}\wedge\ldots\wedge \th_0\wedge\bigwedge_{i\le -1} f_i\bigg)$, $M\ge 1$, and therefore $\tau_{M}$ is obtained from $\tau_{M-1}$ by a forward B\"acklund--Darboux transformation. Since $\tau_{-M}=((\tau^*)_M)^*$, we see that $\tau_{-M}$ is obtained from $\tau_{-(M-1)}$ by a backward B\"acklund--Darboux transformation, and thus~$\tau_{-(M-1)}$ is obtained from~$\tau_{-M}$ by a forward B\"acklund--Darboux transformation, which completes the proof of the proposition.  
\end{proof}

\smallskip

\begin{lemma}
\label{lemma:cut-and-join-for-tau-o}
Define the (classical) \emph{cut-and-join operator} $\cA$ as:
\begin{equation*}
	\cA:=\frac{1}{2}\sum_{i,j\ge 1}\left((i+j)p_ip_j\frac{\d}{\d p_{i+j}}+ijp_{i+j}\frac{\d^2}{\d p_i\d p_j}\right).
\end{equation*}
Then we have $\frac{\d\tau^o_N}{\d\beta_2}=\cA \tau^o_N$.
\end{lemma}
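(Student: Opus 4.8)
The plan is to establish the cut-and-join equation $\d\tau^o_N/\d\beta_2=\cA\tau^o_N$ through the monodromy (Frobenius) description of branched coverings, carried out at the level of \emph{possibly disconnected} coverings -- which is exactly what the partition function $\tau^o_N=e^{H^o_N+H^c}$ enumerates, by the standard passage from a connected generating series to its exponential. Concretely, I would first restrict attention to the closed southern hemisphere $\overline{\CP^1_-}$, which is topologically a disk with center $\infty$ and boundary $S^1$. For a covering $f\colon(C,\d C)\to(\CP^1,S^1)$ the restriction $f^{-1}(\overline{\CP^1_-})\to\overline{\CP^1_-}$ is a degree-$d_2$ branched covering of the disk with $m_2$ simple interior branch points and ramification type $\lambda$ over $\infty$; it is encoded by a monodromy $\sigma_\infty\in S_{d_2}$ of cycle type $\lambda$, transpositions $\tau_1,\dots,\tau_{m_2}$ around the branch points, and the boundary monodromy $\sigma_{S^1}:=\sigma_\infty\tau_1\cdots\tau_{m_2}$, whose cycles index the circles of $f^{-1}(S^1)$. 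The variables $p_\lambda$ record precisely the cycle type of $\sigma_\infty$.

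Next I would observe that the whole covering is reconstructed by gluing this southern covering, along $S^1$, to a northern covering over $\overline{\CP^1_+}$, and that every ingredient of the weight attached to $[f]$ other than $p_\lambda$ depends on the southern data only through $\sigma_{S^1}$. This includes the number and signs of the boundary components of $C$, the factor $(-1)$ to the number of negative components times $N$ to the number of all components, the northern data $m_1,\beta_1,q_1$, the bidegree bookkeeping $q_1^{d_1}q_2^{d_2}$, and the automorphism factor. This is the decoupling that makes the equation local to the south: a closed component contributes circles over $S^1$ that are all glued to the north, an open component contributes in addition its negative (or positive) boundary circles, but in either case the relevant combinatorics is governed by $\sigma_{S^1}$ and the northern monodromy, not by $m_2$ nor by $\lambda$ except through $\sigma_{S^1}$.

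The key step is then to compute the effect of $\d/\d\beta_2$. Differentiating in $\beta_2$ inserts one extra simple branch point in $\CP^1_-$, i.e.\ one more transposition, and I would perform this insertion while holding $\sigma_{S^1}$ \emph{fixed}, so that by the previous paragraph none of the northern, boundary, sign, or $N$-weights change. Solving the relation $\sigma_{S^1}=\sigma_\infty^{\mathrm{new}}\tau_1\cdots\tau_{m_2}\tau_{m_2+1}$ shows that the new monodromy over $\infty$ differs from the old one by multiplication by a transposition (a suitable conjugate of $\tau_{m_2+1}$); since these conjugates again run over all transpositions, summing over the position of the new branch point multiplies $\sigma_\infty$ by the central class sum $T=\sum_{\tau}\tau$ of all transpositions in $S_{d_2}$. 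By the classical cut-and-join identity for the symmetric group (the action of the class sum of transpositions corresponds, under the characteristic map to symmetric functions, to the differential operator $\cA$ in the power-sum variables $\bfp$), this is exactly $\cA$ acting on $p_\lambda$. Because $\cA$ involves only the $\bfp$-variables it commutes with all the inert northern and boundary data, giving $\d\tau^o_N/\d\beta_2=\cA\tau^o_N$; as a consistency check, the $N=0$ case reduces, via \eqref{eq:open tau-function for N=0}, to applying the ordinary closed cut-and-join equation $\d_\beta\tau^c=\cA\tau^c$ to $\tau^c(\bfp,\beta_1+\beta_2,q_1q_2)$.

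The main obstacle will be the bookkeeping in the second and third steps: verifying rigorously that the boundary monodromy $\sigma_{S^1}$, and hence the entire northern/boundary weight -- including the precise power of $N$ and the signs of the components -- is genuinely preserved under the south branch-point insertion, and that the Hurwitz normalization (the factor $1/\sharp\Aut(f)$, the weights $\beta_2^{m_2}/m_2!$, and the disconnected-to-connected exponential) matches the symmetric-group count so that the monodromy sum reproduces $\tau^o_N$ term by term. Once this decoupling is in place, the cut-and-join identity itself is the standard closed-surface computation applied to $\sigma_\infty$.
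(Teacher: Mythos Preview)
Your proposal is correct and follows essentially the same approach the paper indicates: the paper's proof is a two-sentence pointer to the classical cut-and-join argument (via permutation factorizations as in \cite{GJ97}, or geometrically as in \cite{EMS11}) together with the remark that ``exactly the same arguments work in the case of arbitrary~$N$''. Your monodromy description of the southern covering, with $\sigma_{S^1}$ held fixed so that the northern gluing data and the $(-1)^{\#\text{neg}}N^{\#\text{bdry}}$ weight are untouched, is precisely the adaptation the paper is alluding to, and your identification of the obstacle (the automorphism/normalization bookkeeping) is accurate but routine.
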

\begin{proof}
Taking into account~\eqref{eq:open tau-function for N=0}, for $N=0$ this becomes the famous cut-and-join equation for the closed Hurwitz numbers, see the paper~\cite{GJ97}, where the authors gave a proof using the interpretation of the closed Hurwitz numbers as the number of factorisations of a permutation into transpositions. The cut-and-join equation for the closed Hurwitz numbers has also a geometric proof, see e.g.~\cite[Section~3]{EMS11}. One can easily see that exactly the same arguments work in the case of arbitrary~$N$.
\end{proof} 

\smallskip

\begin{proposition}\label{proposition:cut-and-join and mKP}
	Let $\bff(z, q) = (f_k(z, q))_{k \in \mbZ}$ be a sequence of Laurent series of the form
	\begin{equation*}
		f_k(z, q) = z^k \left( 1 + \sum_{j \geq 1} f_{k,j} (q z)^j \right) \in \mbC[[q]]((z)), \quad f_{k,j} \in \mbC.
	\end{equation*}
	Let $(\tau_N)_{N \in \mbZ}$ be the corresponding mKP tau-sequence given by \eqref{eq:mkp-sequence-from-series}.	 Then the sequence $(\ttau_N(\beta))_{N \in \mbZ}$, where $\ttau_N(\beta):=\left. (e^{\beta \cA}\tau_N) \right|_{q \mapsto e^{N \beta} q}$, is also an mKP tau-sequence.
\end{proposition}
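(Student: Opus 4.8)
The plan is to pass to the fermionic (infinite-wedge) description underlying \eqref{eq:mkp-sequence-from-series} and to realize the two operations defining $\ttau_N$ — the cut-and-join exponential and the charge-dependent substitution $q\mapsto e^{N\beta}q$ — as second quantizations of explicit diagonal single-particle operators. Writing $\tau_N=\Psi_N\big(\bigwedge_{k\le N-1}f_k\big)$ as in \eqref{eq:mkp-sequence-from-series}, the first step is to identify $e^{\beta\cA}$ with the second quantization of the diagonal operator $E_\beta\colon z^m\mapsto e^{\frac{\beta}{2}(m+\frac12)^2}z^m$. This is precisely the content of the closed-Hurwitz formula \eqref{eq:Fock and Hurwitz}: the weight there attached to the coefficient of $z^m$ in $f^{\HH}_i$ is $e^{\beta[\omega(m)-\omega(-i)]}$ with $\omega(m)=\tfrac12(m+\tfrac12)^2$, i.e.\ the reweighting by $E_\beta$ normalized at the leading term. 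Since $E_\beta$ is diagonal it acts on the wedge as a derivation, so $e^{\beta\cA}\tau_N=\Psi_N\big(\bigwedge_{k\le N-1}E_\beta f_k\big)$ up to a charge-dependent scalar; as the reweighted series $E_\beta f_k$ are independent of $N$, this already shows that $(e^{\beta\cA}\tau_N)_N$ is again an mKP tau-sequence, built from $E_\beta\bff$.

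The second step is to compute the combined effect of $e^{\beta\cA}$ and $q\mapsto e^{N\beta}q$ on each fermion. Because the coefficient of $z^{k+j}$ in $f_k$ carries $q^{j}$, the substitution multiplies it by $e^{N\beta j}$, and a short computation using $\omega(m)+Nm=\omega(m+N)-\tfrac12 N(N+1)$ shows that the $k$-th fermion of $\ttau_N$ equals, up to a $k$-dependent scalar, $P_N f_k$, where $P_N\colon z^m\mapsto e^{\beta\,\omega(m+N)}z^m$. Thus $P_N=S^{-N}E_\beta S^{N}$ is the shift-conjugate of the cut-and-join operator, with $S\colon z^m\mapsto z^{m+1}$, and $\ttau_N=\Psi_N\big(\widehat{P_N}\,\Omega_N\big)$ up to scalar, where $\Omega_N=\bigwedge_{k\le N-1}f_k$. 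This representation makes manifest that each $\ttau_N$ is a KP tau-function, and it isolates the only genuinely $N$-dependent feature of the construction, namely the energy shift $S^{-N}(\cdot)S^{N}$ encoded in $P_N$.

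The third and decisive step is to deduce that $\ttau_{N+1}$ is a forward B\"acklund--Darboux transform of $\ttau_N$. Comparing the two transformed wedges through the identity $P_{N+1}=e^{\beta(N+1)}Z\,P_N$, where $Z\colon z^m\mapsto e^{\beta m}z^m$ is the dilation, shows that passing from charge $N$ to charge $N+1$ amounts to adjoining one new fermion on top while simultaneously applying the relative dilation $Z$ to the fermions already present. I expect this to be the main obstacle: one must show that this relative dilation is compatible with the B\"acklund--Darboux relation, i.e.\ that the rescaling exponent $N\beta$ is tuned exactly so that the dilation $Z$ is reconciled with the shift $S^{-N}(\cdot)S^{N}$ built into $P_N$, and that $\ttau_{N+1}$ is still obtained from $\ttau_N$ by adjoining a single fermion in the sense of Appendix~\ref{section:bd}. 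I would attack this by tracking the reweighted fermions $P_N f_k$ explicitly and invoking the fermionic characterization of the forward transformation used in the proof of Proposition~\ref{prop:tau-sequence-from-shifts} (adding a vector on top of a semi-infinite wedge), after which properties~(1) and~(2) of an mKP tau-sequence follow as there; the interplay between the charge-dependent dilation and the stacking operation is where I expect the real work to lie.
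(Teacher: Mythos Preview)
Your Step~1 contains an error that propagates through the rest of the argument. You claim that $e^{\beta\cA}\tau_N=\Psi_N\big(\bigwedge_{k\le N-1}E_\beta f_k\big)$ up to a charge-dependent scalar, with $E_\beta\colon z^m\mapsto e^{\beta\omega(m)}z^m$ independent of $N$, and you justify this by appealing to formula~\eqref{eq:Fock and Hurwitz}. But~\eqref{eq:Fock and Hurwitz} is the case $N=0$. The general statement is Proposition~\ref{proposition:cut-and-join and Fock}, and there the reweighting depends explicitly on the charge: the coefficient of $z^{k+j}$ in the transformed $f_k$ picks up the factor $e^{\beta[\omega(k+j-N)-\omega(k-N)]}$, not $e^{\beta[\omega(k+j)-\omega(k)]}$. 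The single-particle operator you want is $z^m\mapsto e^{\beta\omega(m-N)}z^m$, i.e.\ $S^{N}E_\beta S^{-N}$ rather than $E_\beta$; the ``charge-dependent scalar'' cannot absorb this discrepancy, since it is a genuine $q$-dependent substitution $q\mapsto e^{-N\beta}q$, not a multiplicative constant.

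Because of this sign, your Step~2 produces fermions $P_N f_k$ with exponent $\omega(k+j+N)-\omega(k+N)$, which still depend on $N$, and you are then forced into Step~3, which you yourself flag as the place where ``the real work'' lies. In fact there is no work there at all: if you redo Step~2 with the correct $N$-dependent reweighting from Proposition~\ref{proposition:cut-and-join and Fock} and then apply $q\mapsto e^{N\beta}q$, the exponent becomes
\[
\omega(k+j-N)-\omega(k-N)+Nj=\omega(k+j)-\omega(k),
\]
which is independent of $N$. The rescaling $q\mapsto e^{N\beta}q$ is chosen precisely to cancel the $N$-dependence coming from the boson--fermion correspondence at level $N$. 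Once the transformed fermions $\tilde f_k$ are seen to be the same for every $N$, the mKP property is immediate from~\eqref{eq:mkp-sequence-from-series}, and your entire Step~3 (the dilation $Z$, the identity $P_{N+1}=e^{\beta(N+1)}Z P_N$, the compatibility argument) becomes unnecessary. This is exactly the paper's proof.
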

\begin{proof}
By Proposition~\ref{proposition:cut-and-join and Fock}, we have $\ttau_{N}(\beta) = \Psi_N \left( \bigwedge_{i \geqslant 1} \tilde{f}^{(N)}_{N-i}(z, q, \beta) \right)$, where
	\begin{gather*}
		\tilde{f}^{(N)}_{k}(z, q, \beta):=z^{k}\Big(1 +\sum_{j \geq 1}e^{\beta\frac{(j + k - N + \frac{1}{2})^2-(k -N +\frac{1}{2})^2}{2}} e^{\beta jN} f_{k,j} (qz)^j \Big).
	\end{gather*}
	We observe that $\tilde{f}^{(N)}_k(z,q, \beta)$ does not depend on $N$ by transforming the exponent of $e^{\beta}$: $\frac{(j + k - N + \frac{1}{2})^2-(k -N +\frac{1}{2})^2}{2} + jN  = \frac{(j + k + \frac{1}{2})^2-(k +\frac{1}{2})^2}{2}$, and therefore, by formula~\eqref{eq:mkp-sequence-from-series}, $(\ttau_N(\beta))_{N \in \mbZ}$ is an mKP tau-sequence, as desired.
\end{proof}
\end{proof}

\medskip

The next theorem describes explicitly the B\"acklund--Darboux transformation relating~$\ttau_0=\tau^c(\bfp,\beta_1+\beta_2,q_1q_2)$ and~$\ttau_1=\tau^o_1|_{q_2\mapsto e^{\beta_2}q_2}$. 

\begin{theorem}\label{theorem:BD from tauzero to tauone}
We have
$$
\tau^o_1=\Coef_{x^0}\left[D(x,\beta_1,\beta_2,q_1,q_2)\tau^c(\bfp-[x^{-1}],\beta_1+\beta_2,e^{-\beta_2}q_1q_2)e^{\sum_{n\ge 1}\frac{p_n}{n}x^n}\right],
$$
where
$$
D(z,\beta_1,\beta_2,q_1,q_2)=\sum_{l\ge 0}\lb 1+\sum_{k\ge 1}e^{\beta_1\frac{l^2+l-k^2+k}{2}}\frac{(-1)^k q_1^{k+l}}{(l+k)(k-1)!l!}\rb  e^{\beta_2\frac{l^2-l}{2}}\lb\frac{q_2}{z}\rb^l.
$$
\end{theorem}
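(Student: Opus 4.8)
The plan is to read the transformation off the fermionic description of $\tau^o_1$ that already underlies the proof of Theorem~\ref{theorem:open Hurwitz and BD}, and then bosonize. Combining Lemma~\ref{lemma:tau-o-beta2-zero}, Proposition~\ref{prop:tau-sequence-from-shifts} and Lemma~\ref{lemma:cut-and-join-for-tau-o}, one has the exact fermionic representation $\tau^o_1=\Psi_1\lb\widehat{\th}_0\wedge\bigwedge_{i\le-1}\widehat f_i\rb$, where $f_i$ is the closed Hurwitz basis of $\tau^c(\bfp,\beta_1,q_1q_2)$ (from~\eqref{eq:Fock and Hurwitz}), $\th_0$ is the top vector produced by Proposition~\ref{prop:tau-sequence-from-shifts} applied to $\tau^c(\bfp,\beta_1,q_1q_2)$ with $x=q_2$, and the hat denotes the image under the cut-and-join weight (Proposition~\ref{proposition:cut-and-join and Fock}). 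First I would check, term by term, that $\widehat f_i$ is exactly the closed Hurwitz basis vector with the shifted parameters $(\beta_1+\beta_2,e^{-\beta_2}q_1q_2)$; this is what identifies the base tau-function in the statement as $\tau^c(\bfp,\beta_1+\beta_2,e^{-\beta_2}q_1q_2)$.

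The core of the argument is the explicit computation of $\th_0$. Writing $\th_0=h_0-\sum_{i\le-1}c_{0,i}f_i$ with $h_0=\sum_{a}q_2^a z^a$ and requiring that $\th_0$ contain only nonnegative powers of $z$, the coefficients $c_{0,i}$ solve a lower-triangular convolution system. The decisive simplification is that, after pulling out a factor $e^{\beta_1(m^2+m)/2}$, the quadratic $\beta_1$-exponent telescopes and the system becomes a convolution against $l\mapsto q_1^l/l!$; inverting it against $l\mapsto(-q_1)^l/l!$ gives the closed form
\[
c_{0,i}=q_2^i\sum_{l\ge0}e^{\beta_1\lb il+\frac{l-l^2}{2}\rb}\frac{(-q_1)^l}{l!}.
\]

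Substituting this back expresses the coefficients $A_m:=q_2^{-m}(\th_0)_m$ as a double sum. Grouping the terms by total $q_1$-degree $P$, the key observation is that the full exponent collapses to $\tfrac{P-P^2}{2}+mP$, independent of the inner summation index; the inner sum then becomes a truncated alternating binomial sum, which I would evaluate using $\sum_{l=0}^{P}(-1)^{P-l}\binom{P}{l}=0$ together with the elementary identity $\sum_{l=0}^{m}\frac{(-1)^{m+k-l}}{l!\,(m+k-l)!}=\frac{(-1)^k}{(m+k)(k-1)!\,m!}$. This shows that $A_m$ equals exactly the bracket multiplying $(q_2/z)^m$ in $D$.

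It remains to apply the cut-and-join weight, which gives $\widehat{\th}_0=\sum_{m\ge0}e^{\beta_2(m^2-m)/2}A_m(q_2z)^m$, and to bosonize the operation of prepending a top vector. By Proposition~\ref{proposition:wave-fermionic} (see also Appendix~\ref{section:bd}), $\Psi_1\lb\widehat{\th}_0\wedge\bigwedge_{i\le-1}\widehat f_i\rb=\Coef_{x^0}\lb\widehat{\th}_0(x^{-1})\,\tau^c(\bfp-[x^{-1}],\beta_1+\beta_2,e^{-\beta_2}q_1q_2)\,e^{\xi(\bft,x)}\rb$, and since $\widehat{\th}_0(x^{-1})=\sum_{m\ge0}e^{\beta_2(m^2-m)/2}A_m(q_2/x)^m=D(x)$ and $e^{\xi(\bft,x)}=e^{\sum_n p_n x^n/n}$, this is the claimed identity. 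I expect the combinatorial collapse in the third step to be the main obstacle---establishing both the disappearance of the inner-index dependence in the exponent and the binomial identity. A parallel subtlety is bookkeeping the normalization: one must keep $\th_0$ \emph{un-normalized} (its leading coefficient $A_0=\tau^c(-[q_2^{-1}],\beta_1,q_1q_2)$ is not $1$), since normalizing it would multiply the answer by the spurious scalar $A_0$.
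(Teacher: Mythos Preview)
Your proposal is correct and follows the same overall architecture as the paper's proof: represent $\tau^o_1$ fermionically as $\Psi_1(\text{top vector}\wedge\bigwedge_{i\le -1}\widehat f_i)$, identify the side vectors with the Hurwitz basis at parameters $(\beta_1+\beta_2,e^{-\beta_2}q_1q_2)$, compute the top vector, and bosonize via Proposition~\ref{proposition:wave-fermionic}. The difference is in how the top vector is obtained. You solve directly for $\th_0$ against the full Hurwitz basis $f^\HH_j(z,\beta_1,q_1q_2)$, inverting the convolution to get the closed form for $c_{0,i}$, and then collapse the double sum for $A_m$ via the exponent cancellation and the truncated alternating-binomial identity. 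The paper instead first rewrites $e^{\beta_2\cA}=e^{(\beta_1+\beta_2)\cA}e^{-\beta_1\cA}$, so that the side vectors become the trivial basis $z^{-j}e^{\tq_1q_2 z}$; the top-vector computation then reduces to the short Lemma evaluating $z^{-k}\wedge\bigwedge_j z^{-j}e^{qz}$ by writing $z^{-k}=e^{-qz}(e^{qz}z^{-k})$, after which one reapplies $e^{(\beta_1+\beta_2)\cA}$. The binomial identity you isolate is exactly the one appearing in that Lemma. Your route is more head-on and requires keeping track of the $\beta_1$-weights throughout; the paper's route buys a cleaner central computation at the cost of two cut-and-join passes. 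Both are valid, and your remark about keeping $\th_0$ un-normalized is well taken.
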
 
\begin{proof}
Using Propositions~\ref{proposition:wave-fermionic},~\ref{proposition:cut-and-join and Fock}, and formula~\eqref{eq:Fock and Hurwitz}, we compute
\begin{align*}
\tau^o_1=&e^{\beta_2 \cA}\lb\tau^c(\bfp-[q_2^{-1}],\beta_1,q_1q_2)e^{\sum_{n\ge 1}\frac{p_n}{n}q_2^n}\rb=\\
=&e^{\beta_2 \cA}\Psi_1\lb\sum_{i\in \mbZ}q_2^i z^i\wedge\bigwedge_{j\ge 1}f^\HH_j(z,\beta_1,q_1q_2)\rb=\\
=&e^{(\beta_1+\beta_2)\cA}\Psi_1\lb\sum_{i\in \mbZ}q_2^i e^{-\beta_1\frac{i^2-i}{2}} z^i\wedge\bigwedge_{j\ge 1}f^\HH_j(z,0,e^{\beta_1}q_1q_2)\rb=\\
=&e^{(\beta_1+\beta_2)\cA}\Psi_1\lb\sum_{i\in \mbZ}q_2^i e^{-\beta_1\frac{i^2-i}{2}} z^i\wedge\bigwedge_{j\ge 1} z^{-j} e^{\tq_1 q_2 z}\rb,
\end{align*}
where $\tq_1:=e^{\beta_1}q_1$

\medskip

\begin{lemma}
For $k \geq 1$, we have 
$$
z^{-k}\wedge\bigwedge_{j\ge 1} z^{-j} e^{q z}=(-q)^k\bigg( \sum_{l \geq 0}  \frac{q^{l}}{(l+k)(k-1)!l!} z^l \bigg) \wedge \bigwedge_{j\ge 1} z^{-j} e^{q z}.
$$
\end{lemma}
\begin{proof}
We have
\begin{multline*}
z^{-k} = e^{-q z} (e^{q z} z^{-k})= \sum_{m \geq 0} \frac{(-q)^m}{m!} \cdot e^{q z}z^{m-k}=\\
= \sum_{0\le m<k} \frac{(-q)^m}{m!} \cdot e^{q z}z^{m-k} + (-q)^k\sum_{m \geq 0} \frac{(-q)^{m}}{(m+k)!}e^{q z} z^{m}.
\end{multline*}
The first sum vanishes in the wedge product, and we transform the second sum:
\begin{equation*}
\sum_{m \geq 0} \frac{(-q)^{m}}{(m+k)!}e^{q z} z^{m} = \sum_{m, n \geq 0} \frac{(-q)^{m}}{(m+k!)} \cdot \frac{q^n}{n!} z^{m+n}= \sum_{l \geq 0} \left( \sum_{0 \leq m \leq l} \frac{(-1)^{m}}{(k+m)!(l - m)!} \right) q^{l} z^l.
\end{equation*}
The sum in the parenthesis can be simplified via the binomial formula:
\begin{multline*}
\sum_{0 \leq m \leq l}\frac{(-1)^m}{(k+m)!(l - m)!} = \frac{1}{(l+k)!} \sum_{0 \leq m \leq l} (-1)^{m} \binom{l+k}{k+m} = \\
= \frac{1}{(l+k)!} \binom{l+k-1}{k-1}= \frac{1}{(l+k)(k-1)!l!},
\end{multline*}
which completes the proof of the lemma.
\end{proof}

\medskip

So we obtain
\begin{align*}
\tau^o_1=&e^{(\beta_1+\beta_2)\cA}\Psi_1\lb\sum_{l\ge 0}\lb e^{-\beta_1\frac{l^2-l}{2}}+\sum_{k\ge 1}e^{-\beta_1\frac{k^2+k}{2}}\frac{(-1)^k\tq_1^{k+l}}{(l+k)(k-1)!l!}\rb (q_2 z)^l\wedge\bigwedge_{j\ge 1} z^{-j} e^{\tq_1 q_2 z}\rb=\\
=&\Psi_1\lb\sum_{l\ge 0}\lb 1+\sum_{k\ge 1}e^{\beta_1\frac{l^2+l-k^2+k}{2}}\frac{(-1)^k q_1^{k+l}}{(l+k)(k-1)!l!}\rb  e^{\beta_2\frac{l^2-l}{2}} (q_2 z)^l\wedge\bigwedge_{j\ge 1} f^\HH_j(z,\beta_1+\beta_2,e^{-\beta_2} q_1 q_2)\rb,
\end{align*}
which completes the proof of the theorem.
\end{proof}

\medskip

{\appendix	

\section{KP hierarchy}\label{section:kp}

\subsection{KP hierarchy in the Lax form}
\label{subsection:kp-lax}

In this section, we recall the definition of the Kadomtsev--Petviashvili (KP) hierarchy in terms of formal pseudo-differential operators. For more detailed information, see, e.g., \cite{Dic03}.

\medskip

Let $\bft := (t_1, t_2, \dots)$ be an infinite sequence of variables and $\mbC[[\bft]] := \mbC[[t_1, t_2, \dots]]$. Let $\d=\d_1 := \frac{\d}{\d t_1}$ and $\d_n := \frac{\d}{\d t_n}$ for $n>1$ be the partial derivatives acting on the algebra $\mbC[[\bft]]$.  A \emph{formal pseudo-differential operator} is an expression of the form 
$$
A = \sum_{i=-N}^{\infty} a_i \d^{-i}, \quad a_i \in \mbC[[\bft]].
$$	
We define multiplication of two such expressions using the following generalization of the Leibniz rule: $\d^n \circ a:=\sum_{i=0}^\infty \binom{n}{i}(\d^i a) \circ \d^{n-i}$, $n$, $a \in \mbC[[\bft]]$, where $\binom{n}{i} := \frac{1}{i!} \cdot n(n-1)\cdots (n-i+1)$ for any $n \in \mbZ$. For a pseudo-differential operator $A$ as above, we define the residue $\res\,A:=a_{-1}$ and the decomposition $A = A_+ + A_-$: $A_+:=\sum_{i \leq 0} a_i \d^{-i}$, $A_-:=\sum_{i\geq 1} a_i \d^{-i}$. We use the shorthand $A^n_+ := (A^n)_+$, and similarly, $A^n_- := (A^n)_-$.

\medskip

For a sequence $(u_1(\bft), u_2(\bft), \dots)$, $u_i(\bft) \in \mbC[[\bft]]$, let the pseudo-differential operator $L(\bft)$ be given~by
\begin{equation*}
	L(\bft) = \d + \sum_{i = 1}^{\infty} u_i(\bft) \d^{-i}.
\end{equation*}

\medskip

\begin{definition}
A sequence $(u_i(\bft))$, or the corresponding pseudo-differential operator $L(\bft)$, is a solution of the \emph{KP hierarchy} if, for each $n > 1$, the operator $L(\bft)$ satisfies
	\begin{equation*}
		\d_n L(\bft) = [(L(\bft)^n)_+,\, L(\bft)],
	\end{equation*}
	where derivatives $\d_n$ act on $L(\bft)$ coefficient-wise in $\d$.
\end{definition}

\medskip

\begin{proposition}
	For each solution $L = L(\bft)$ of the KP hierarchy, there exists a pseudo-differential operator $P$, called a \emph{dressing operator}, of the form $P = 1 + \sum_{i = 1}^\infty w_i(\bft) \d^{-i}$, such that $L = P \circ \d \circ P^{-1}$ and $\d_n P=-L^n_-\circ P$. Such operator $P$ is defined up to \emph{gauge transformations} of the form $P \longmapsto P \circ \left( 1 + \sum_{i=1}^\infty c_i \d^{-i}\right)$, where $(c_i)$ is an arbitrary sequence of constant coefficients.
\end{proposition}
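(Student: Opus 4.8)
The plan is to recognize the statement as the standard construction of the Sato dressing operator, using the Lax equation together with the commutativity of the centralizer of $L$. I first note that the dressing relation is the $n=1$ flow in disguise. Writing $P=1+\sum_{i\ge1}w_i\d^{-i}$ and expanding the operator product via $\d\circ P=(\d_1 P)+P\circ\d$ (where $\d_1 P=\sum_i(\d_1 w_i)\d^{-i}$ is the coefficient-wise derivative), one computes $L\circ P-P\circ\d=(\d_1 P)+L_-\circ P$. Hence $L=P\circ\d\circ P^{-1}$ is equivalent to $\d_1 P=-L_-\circ P$, the $n=1$ case of the flow equations, so it suffices to solve the full linear system $\d_n P=-L^n_-\circ P$, $n\ge1$. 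I would first produce one operator $P_0=1+\sum w_i\d^{-i}$ satisfying merely $L=P_0\d P_0^{-1}$: comparing the coefficient of $\d^{-m}$ in $L P_0=P_0\d$ yields $\d_1 w_m=R_m$ with $R_m$ an explicit differential polynomial in $u_1,\dots,u_m$ and $w_1,\dots,w_{m-1}$, and since $\d_1$ is surjective on $\mbC[[\bft]]$ this recursion solves for each $w_m$.

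To upgrade $P_0$ to all flows, set $A_n:=(\d_n P_0)\circ P_0^{-1}$ and $M_n:=A_n+L^n_-$. Differentiating $L=P_0\d P_0^{-1}$ gives $\d_n L=[A_n,L]$, while the Lax equation reads $\d_n L=[L^n_+,L]=-[L^n_-,L]$; subtracting shows $[M_n,L]=0$, and $M_n$ clearly has negative order. The centralizer of $L$ equals $P_0\,\mathcal Z\,P_0^{-1}$, where $\mathcal Z$ is the centralizer of $\d$, namely all pseudo-differential operators with $\d_1$-constant coefficients; the key structural point is that $\mathcal Z$ is commutative, because $\d$ commutes with every coefficient killed by $\d_1$. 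Thus $\mathcal B_n:=P_0^{-1}M_n P_0$ is a negative-order operator with $\d_1$-constant coefficients, and the $\mathcal B_n$ all commute with one another and with $\d$. Looking for $P=P_0\circ C$ with $C=1+\sum c_i\d^{-i}$, $\d_1 c_i=0$ (which preserves both $L$ and the dressing relation), the remaining flow equations become $\d_n C=-\mathcal B_n\circ C$.

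The main obstacle is the integrability of this auxiliary system, i.e. the compatibility $\d_m\mathcal B_n=\d_n\mathcal B_m$ (the bracket $[\mathcal B_m,\mathcal B_n]$ already vanishes by commutativity of $\mathcal Z$). I would prove it by a direct computation. Setting $\tilde A_n:=P_0^{-1}(\d_n P_0)$ and $\Lambda_n:=P_0^{-1}L^n_+ P_0$, one has $\mathcal B_n=\tilde A_n+\d^n-\Lambda_n$. Commutativity of mixed partials of $P_0$ gives $\d_m\tilde A_n-\d_n\tilde A_m=[\tilde A_n,\tilde A_m]$, and the Zakharov--Shabat relation $\d_m L^n_+-\d_n L^m_+=[L^m_+,L^n_+]$ (a consequence of the Lax equations, obtained from $[L^m,L^n]=0$) gives the analogous identity for the $\Lambda_n$. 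Assembling these, the whole expression collapses to $\d_m\mathcal B_n-\d_n\mathcal B_m=[\Gamma_n,\Gamma_m]$ with $\Gamma_n:=\mathcal B_n-\d^n$; since each $\mathcal B_n$ has $\d_1$-constant coefficients it commutes with the powers $\d^k$, so $[\Gamma_n,\Gamma_m]=[\mathcal B_n,\mathcal B_m]=0$. With compatibility established, the system $\d_n C=-\mathcal B_n C$ integrates inside the ring of $\d_1$-constant coefficients (solving recursively in the formal variables $t_2,t_3,\dots$ starting from $C|_{t_2=t_3=\cdots=0}=1$), and $P:=P_0\circ C$ is the desired dressing operator.

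Finally, for the gauge freedom: if $P$ and $P'$ both satisfy the conclusions, put $C:=P^{-1}P'=1+\sum c_i\d^{-i}$. The two dressing relations give $P^{-1}P'\circ\d=\d\circ P^{-1}P'$, so $[\d,C]=0$ and each $c_i$ is $\d_1$-constant; the two flow equations give $P\circ\d_n C=0$, hence $\d_n C=0$ for all $n\ge1$, so every $c_i$ is a genuine constant. Conversely, any constant-coefficient $C=1+\sum c_i\d^{-i}$ visibly preserves both $L=P\d P^{-1}$ and the equations $\d_n P=-L^n_-\circ P$, which is exactly the asserted freedom.
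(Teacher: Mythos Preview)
The paper does not supply its own proof of this proposition: it is stated in the review appendix with an implicit reference to standard sources such as \cite{Dic03}. So there is nothing to compare against, and the relevant question is simply whether your argument is sound.

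Your proof is correct and follows the classical Sato--Wilson construction. The reduction of the dressing relation to the $n=1$ flow, the recursive construction of a preliminary $P_0$, the identification of $M_n=(\d_nP_0)P_0^{-1}+L^n_-$ with an element of the (commutative) centralizer $P_0\,\mathcal Z\,P_0^{-1}$, and the integration of the residual system $\d_nC=-\mathcal B_nC$ are all carried out properly. The compatibility computation is the most delicate step, and your bookkeeping there checks out: with $\mathcal B_n=\tilde A_n+\d^n-\Lambda_n$ one indeed gets $\d_m\mathcal B_n-\d_n\mathcal B_m=[\tilde A_n-\Lambda_n,\tilde A_m-\Lambda_m]=[\mathcal B_n-\d^n,\mathcal B_m-\d^m]$, which vanishes because every $\mathcal B_k$ lies in the commutative centralizer of $\d$. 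Two small points worth making explicit: first, $\mathcal B_1=0$ (since $P_0$ already satisfies the $n=1$ flow), so the requirement that $C$ have $\d_1$-constant coefficients is compatible with the system; second, because all coefficients of $C$ and of the $\mathcal B_n$ are $\d_1$-constant, the operator products $\mathcal B_nC$ reduce to ordinary convolution of coefficient sequences, which is what makes the recursive integration in $t_2,t_3,\dots$ straightforward. The gauge-freedom argument is fine as written.
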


\medskip

Let $\xi(\bft, x): = \sum_{i\geq 1} t_i x^i$. To each solution $L$ of the KP hierarchy and its dressing operator~$P$, we associate the \emph{wave function} $\psi(\bft, x):=Pe^{\xi(\bft, x)}$. Explicitly, $\psi(\bft, x) = \left( 1 + \sum_{i=1}^\infty w_i(\bft)z^{-i} \right) e^{\xi(\bft, x)}$. 

\medskip

\begin{proposition}\label{proposition:wave-is-eigenfunction}
The wave function satisfies the following equations for all $n \geq 1$: $\d_n \psi = L^n_+ \psi$.
\end{proposition}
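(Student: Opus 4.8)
The plan is to exploit the fact that the formal exponential $e^{\xi(\bft,x)}$ is an eigenfunction of every flow derivative $\d_n$ and of the bare operator $\d$, and then to differentiate the wave function directly. First I would record the two elementary identities on which everything rests. Since $\xi(\bft,x)=\sum_{i\ge 1}t_ix^i$, differentiating gives $\d_n\xi=x^n$, hence $\d_n e^{\xi}=x^n e^{\xi}$; in particular $\d e^{\xi}=\d_1 e^{\xi}=x\,e^{\xi}$, and iterating, $\d^k e^{\xi}=x^k e^{\xi}$ for all $k\in\mbZ$ (with the convention $\d^{-1}e^{\xi}=x^{-1}e^{\xi}$ that makes the action of an arbitrary pseudo-differential operator $A=\sum_i a_i\d^{-i}$ on $e^{\xi}$ well defined, namely as multiplication by its symbol $\sum_i a_i x^{-i}$ evaluated at $x$).

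With these in hand, I would first show that $\psi$ is an eigenfunction of $L$. Using $L=P\circ\d\circ P^{-1}$ and $\psi=Pe^{\xi}$, one computes
\[
L\psi=P\circ\d\circ P^{-1}\,(Pe^{\xi})=P(\d e^{\xi})=P(x\,e^{\xi})=x\,Pe^{\xi}=x\,\psi,
\]
where the crucial point is that $x$ is a scalar parameter commuting with $P$, whose coefficients $w_i$ depend only on $\bft$. Consequently $L^n\psi=x^n\psi$ for every $n\ge 1$.

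The final step is a direct differentiation of $\psi=Pe^{\xi}$. By the product rule together with the dressing relation $\d_nP=-L^n_-\circ P$,
\[
\d_n\psi=(\d_nP)e^{\xi}+P(\d_n e^{\xi})=-L^n_-\,(Pe^{\xi})+P(x^n e^{\xi})=-L^n_-\psi+x^n\psi.
\]
Now I would substitute $x^n\psi=L^n\psi=L^n_+\psi+L^n_-\psi$, using the splitting $L^n=L^n_++L^n_-$; the two occurrences of $L^n_-\psi$ cancel and leave $\d_n\psi=L^n_+\psi$, which is the assertion.

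I do not expect a genuine obstacle: the statement is a formal consequence of the dressing construction. The only point that really requires care is the well-definedness of the action of pseudo-differential operators on the formal exponential $e^{\xi}$, i.e.\ justifying $\d^{-1}e^{\xi}=x^{-1}e^{\xi}$ and that $P$ commutes past the scalar $x$; once this is in place, every step above is a finite symbolic manipulation.
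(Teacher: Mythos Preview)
Your argument is correct and is the standard textbook derivation (exactly the one found, e.g., in \cite{Dic03}, which the paper cites for this appendix). The paper itself states this proposition without proof, treating it as a well-known fact, so there is nothing to compare; your write-up fills in precisely the expected computation.
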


\medskip

It will be convenient to introduce additional dressing operators and wave functions \emph{of level $N \in \mbZ$} by the following formulas: $P_N: = P \circ \d^N$, $\psi_N(\bft, x) := x^N \psi(\bft, x)$.

\medskip

The KP hierarchy has the following remarkable property: even though \emph{a priori} its solution is given in terms of an infinite sequence of unknown functions $u_i(\bft)$, in fact, each solution may be described by a single function of $\bft$, called the \emph{tau-function} $\tau(\bft)$. Let $\bft - [x^{-1}]$ be the shifted sequence of variables $\left(t_1 - x^{-1}, \, t_2 - \frac{1}{2}x^{-2}, \, t_3 - \frac{1}{3}x^{-3}, \dots \right)$.

\medskip

\begin{theorem}\label{theorem:tau-functions}
	For each solution $L(\bft)$ of the KP hierarchy and its associated wave function~$\psi(\bft, x)$, there exists a function $\tau(\bft)$ that satisfies
	\begin{equation*}
		\psi(\bft, x) = \frac{\tau(\bft - [x^{-1}])}{\tau(\bft)} e^{\xi(\bft, x)}, \quad \tau(0) \ne 0.
	\end{equation*}
	The function $\tau$ is unique up to the multiplication $\tau\mapsto\lambda\tau$, $\lambda\ne 0$. It is called a \emph{tau-function} corresponding to the solution $L(\bft)$.
\end{theorem}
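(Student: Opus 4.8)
The plan is to follow the classical Sato-theoretic route (cf.~\cite{Dic03}): encode the whole hierarchy in a single bilinear residue identity, and then build $\tau$ as a potential for the resulting system of equations. First I would introduce the adjoint wave function $\psi^*(\bft,x):=(P^*)^{-1}e^{-\xi(\bft,x)}$, which is again of the form $\lb1+O(x^{-1})\rb e^{-\xi}$, and prove the bilinear identity $\res_x\psi(\bft,x)\psi^*(\bft',x)=0$ for all $\bft,\bft'$, where $\res_x$ extracts the coefficient of $x^{-1}$. The engine is the residue lemma $\res_x\big((Ae^{\xi(\bft,x)})(Be^{-\xi(\bft,x)})\big)=\res(A\circ B^*)$, which converts a residue in $x$ into the residue in $\d$ of a product of pseudo-differential operators; applied with $A=P$ and $B=(P^*)^{-1}$ it gives the identity at $\bft'=\bft$ since $\res(P\circ P^{-1})=0$, and Taylor-expanding in the times $\bft'$ together with the flows $\d_nP=-L^n_-\circ P$ upgrades it to all $\bft'$. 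This identity is equivalent to $L$ solving the KP hierarchy.

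Next I would construct $\tau$. Writing $w(\bft,x):=\psi(\bft,x)e^{-\xi(\bft,x)}=1+\sum_{i\ge1}w_i(\bft)x^{-i}$, the target identity $w(\bft,x)=\tau(\bft-[x^{-1}])/\tau(\bft)$ is, after taking logarithms and using that shifting the argument commutes with $\log$, equivalent to
$$\log w(\bft,x)=\Big(e^{-\sum_{n\ge1}\frac{x^{-n}}{n}\d_n}-1\Big)\log\tau(\bft).$$
Expanding the right-hand side in powers of $x^{-1}$ expresses each coefficient of $\log w$ through $\d_n\log\tau$ and lower derivatives (for instance $\d_1\log\tau=-w_1$), so the relation can be solved recursively for the candidate quantities $H_n:=\d_n\log\tau$ as explicit functions of the $w_i$. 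The content of the theorem is then that the one-form $\omega:=\sum_nH_n\,dt_n$ is closed, i.e.\ that $\d_mH_n=\d_nH_m$; this is precisely what the bilinear identity supplies, by differentiating it in $\bft'$, specializing $\bft'=\bft-[x^{-1}]$, and comparing residues. Integrating the closed form and normalizing gives $\log\tau$, hence $\tau$ with $\tau(0)\ne0$.

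Finally I would verify that this $\tau$ reproduces the wave function: by construction its logarithmic derivatives match those forced by $w$, and a second use of the bilinear identity shows that the same $\tau$ also produces $\psi^*(\bft,x)=\tfrac{\tau(\bft+[x^{-1}])}{\tau(\bft)}e^{-\xi(\bft,x)}$, which pins down all coefficients. Uniqueness up to a nonzero scalar is then immediate: if $\tau_1,\tau_2$ both satisfy the formula, then $g:=\tau_1/\tau_2$ obeys $g(\bft-[x^{-1}])=g(\bft)$ for every $x$, and a formal power series invariant under all these shifts must be constant. The main obstacle is the closedness of $\omega$ in the second step: proving the symmetry of the candidate mixed second derivatives of $\log\tau$ is the one place where the full hierarchy — packaged in the bilinear identity — is indispensable, a single KP flow being insufficient, and it rests on the residue lemma together with the fact that the residue of a commutator of pseudo-differential operators is a total $\d_1$-derivative.
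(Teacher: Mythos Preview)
The paper does not give its own proof of this theorem: it appears in Appendix~\ref{section:kp} as part of the review of standard KP theory, stated without proof and with a reference to~\cite{Dic03} for details. So there is nothing in the paper to compare your argument against directly.

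That said, your outline is the classical Sato--Date--Jimbo--Kashiwara--Miwa construction, essentially as presented in~\cite{Dic03}: prove the bilinear residue identity from the residue lemma and the dressing-operator flows, rewrite the target relation as a system for $\d_n\log\tau$, use the bilinear identity to verify closedness of the resulting one-form, and integrate. The uniqueness argument via invariance under all shifts $\bft\mapsto\bft-[x^{-1}]$ is also the standard one. One small point of caution: you write $\psi^*$ and $P^*$ for the adjoint wave function and dressing operator, but in this paper the superscript $*$ is reserved for the adjunction involution $L^*=(L^!)^\dagger$ of Section~\ref{subsection:adjoint}, while the formal adjoint is denoted by $\dagger$; to match the paper's conventions you should write $\psi^\dagger(\bft,x)=(P^\dagger)^{-1}e^{-\xi(\bft,x)}$.
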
 

\medskip

The action of gauge transformations of the dressing operator on the wave and tau-function can be written as follows:
\begin{equation*}
	P \longmapsto P \circ \exp\!\left(\sum_{i\ge 1} d_i \d^{-i}\right), \quad \psi \longmapsto \psi \exp \! \left(\sum_{i\ge 1} d_i x^{-i} \right), \quad \tau \longmapsto  \tau \exp \! \left(-\sum_{i\ge 1} id_i t_i \right),
\end{equation*}
where $\sum d_i x^i = \log(1 + \sum c_i x^i)$.

\medskip

\subsection{Adjoint operators}\label{subsection:adjoint}

For a pseudo-differential operator $A = \sum a_i \d^i$, we define its formal adjoint by $A^\dagger:=\sum (-\d)^i \circ a_i$. Adjoint operators satisfy the following identities:
\begin{enumerate}
	\item $(A \circ B)^\dagger = B^\dagger \circ A^\dagger$, hence $[A, B]^\dagger = -[A^\dagger, B^\dagger]$;

\smallskip	
	
	\item $(A_+)^\dagger = (A^\dagger)_+$ and $(A_-)^\dagger = (A^\dagger)_-$;

\smallskip

	\item $(A^\dagger)^\dagger = A$.
\end{enumerate}

\medskip

Given a KP solution $L$ with a dressing operator $P$, we define its \emph{adjoint wave function} $\psi^\dagger(\bft, x):=(P^\dagger)^{-1} e^{-\xi(\bft, x)}$. It satisfies a system of linear equations analogous to those in Proposition \ref{proposition:wave-is-eigenfunction}: $\d_n \psi^\dagger = -(L^n_+)^\dagger \psi^\dagger$. The adjoint wave function can be expressed in terms of the tau-function as
\begin{equation*}
	\psi^\dagger(\bft, x) = \frac{\tau(\bft + [x^{-1}])}{\tau(\bft)} e^{-\xi(\bft, x)},
\end{equation*}
where $\bft + [x^{-1}] := (t_1 + x^{-1}, t_2 + \frac{1}{2}x^{-2}, t_3 + \frac{1}{3}x^{-3}, \dots)$. The adjoint wave function at level $N$ is introduced as $	\psi^\dagger_N(\bft, x) := x^{-N} \psi^\dagger(\bft, x)$.

\medskip

For a pseudo-differential operator $A = \sum a_i(\bft)\d^i$, define $A^!:=\sum a_i(-\bft) (-\d)^i$. This transformation satisfies the following properties:
\begin{equation*}
 (A \circ B)^! = A^!\circ B^!,\qquad (A_\pm)^! = A^!_\pm,\qquad (A^!)^! = A,\qquad (A^\dagger)^!=(A^!)^\dagger.
\end{equation*}

\medskip

Let $L = \d + \sum u_i(\bft) \d^{-i}$ be a solution of the KP hierarchy. Define 
\begin{equation*}
	L^*(\bft): = (L^!)^\dagger.
\end{equation*}
Then $L^*$ also satisfies the KP equations. This operation defines an involution on the set of all KP solutions, which we call the \emph{adjunction involution}. It acts on the dressing operator and on the wave and tau-function by the following formulas:
\begin{equation*}
	P^*=((P^\dagger)^{-1})^!,\qquad\psi^*(\bft, x) = \psi^\dagger(-\bft, x), \qquad \tau^*(\bft) = \tau(-\bft).
\end{equation*}

\medskip

\subsection{Fock space formalism}\label{subsection:Fock}

We begin by recalling the notion of the fermionic Fock space, also known as the semi-infinite wedge space $\Lambda^{\!\frac{\infty}{2}} \mbC((z))$.

\medskip

A sequence of integers $\bfa = (a_1, a_2, \dots)$ is called \emph{admissible of level $N$} if it satisfies $a_i = -i + N$ for all sufficiently large $i$. The \emph{degree} of such a sequence is defined as $\deg(\bfa) := \sum_i (a_i + i - N)$. To any admissible sequence $\bfa$, we associate the infinite Grassmann monomial $z^\bfa := z^{a_1} \wedge z^{a_2} \wedge \cdots$, subject to the following identification:  $z^\bfa = (-1)^{|\sigma|} z^{\sigma(\bfa)}$ for any finitely supported permutation $\sigma \colon \mbN \to \mbN$. In particular, if $a_i = a_j$ for some $i \neq j$, then $z^{\bfa} = 0$. It follows that if $\deg(\bfa) < 0$, then $z^\bfa = 0$.

\medskip  

Let $\cF^{[N]}_d$ denote the vector space spanned by all monomials $z^\bfa$ of level $N$ and degree $d$. The \emph{fermionic Fock space of level $N$}, denoted $\cF^{[N]}$, consists of all formal sums $w_0 + w_1 + \cdots, \quad \text{where } w_d \in \cF^{[N]}_d$. The \emph{full fermionic Fock space} is then defined as the direct sum $\cF := \bigoplus_{N \in \mathbb{Z}} \cF^{[N]}$.

\medskip

A \emph{partition} $\lambda$ is a non-increasing sequence of non-negative integers $\lambda_1 \geq \lambda_2 \geq \dots$, where $\lambda_i = 0$ for $i \gg 0$. The space $\cF^{[N]}_d$ has a basis $\{v_\lambda^{[N]}\}$, $v_\lambda^{[N]}:=z^{\lambda_1 - 1 + N} \wedge z^{\lambda_2 - 2 + N} \wedge \dots$, where $\lambda$ runs over all partitions of weight $d$. Denote the monomial $v^{[N]}_0 = z^{N-1} \wedge z^{N-2} \wedge \dots$ by~$\left| N \right \rangle$ (the \emph{vacuum vector of level~$N$}). It is the unique monomial of the lowest possible degree~$0$ in $\cF^{[N]}$. We define the dot product $\left \langle \cdot \mid \cdot \right \rangle$ on $\cF^{[N]}_{d}$ by declaring $\{v_\lambda^{[N]}\}$ to be an orthonormal basis.

\medskip

For $a \in \mbZ$, define linear operators $\theta_a\colon \cF^{[N]}_d \to \cF^{[N+1]}_{d+a-N}$ and $\theta_a^\dagger\colon \cF^{[N]}_d \to \cF^{[N-1]}_{d-a+N-1}$ as follows:
\begin{align*}
	\theta_a(z^{a_1} \wedge z^{a_2} \wedge \dots)&:=z^a \wedge z^{a_1} \wedge z^{a_2} \wedge \dots ,\\
	\theta^\dagger_a(z^{a_1} \wedge z^{a_2} \wedge \dots)&:=\sum_{j \geq 1} \delta_{a, a_j} (-1)^{j-1} z^{a_1} \wedge \dots \wedge z^{a_{j-1}} \wedge \widehat{z^{a_j}} \wedge z^{a_{j+1}} \wedge \dots .
\end{align*}
We also define the generating series of these operators: $\theta(x):=\sum_{i\in\mbZ}\theta_i x^i, \qquad \theta^\dagger(x) := \sum_{i\in\mbZ}\theta^\dagger_i x^{-i}$.

\medskip

For $0 \neq n \in \mbZ$, define $\alpha_n\colon \cF^{[N]}_d \to \cF^{[N]}_{d-n}$ as $\alpha_n: = \sum_{i \in \mbZ} \theta_i \circ \theta^\dagger_{i+n}$. A direct computation shows that $\alpha_n$ act on the monomials $z^\bfa$ as follows:
\begin{equation*}
	\alpha_n(z^{a_1} \wedge z^{a_2} \wedge \dots) = \sum_{j\ge 1} z^{a_1} \wedge \dots \wedge z^{a_{j}-n} \wedge \dots.
\end{equation*}
Define $\Gamma(\bft):=\sum_{n\ge 1} t_n\alpha_n$.

\medskip

The \emph{bosonic Fock space} is the space of formal power series $\mbC[[\bft]]$. The \emph{bosonic-fermionic correspondence at level $N$} is a linear operator $\Psi_N : \cF^{[N]} \to \mbC[[\bft]]$ defined by the following formula:
\begin{equation*}
	\Psi_N(u) = \left \langle N \mid e^{\Gamma(\bft)} u \right \rangle, \quad u \in \cF^{[N]}.
\end{equation*}
The operator $\Psi_N$ is an isomorphism between $\cF^{[N]}$ and $\mbC[[\bft]]$.

\medskip

The bosonic-fermionic correspondence will be used in Theorem \ref{theorem:Fock and KP} to give a geometric intepretation of KP tau-functions in terms of decomposable Fock space vectors, or, equivalently, elements of the infinite-dimensional Grassmannian.

\medskip

\subsection{Sato Grassmannian}

\label{subsection:grassmannian}

Let $\mbC((z))$ be the field of formal Laurent series. As a vector space over $\mbC$, it admits a direct sum decomposition $\mbC((z)) = \mbC[[z]] \oplus z^{-1}\mbC[z^{-1}]$. Let $\pi\colon \mbC((z)) \to z^{-1} \mbC[z^{-1}]$ be the corresponding projection along $\mbC[[z]]$.

\medskip

\begin{definition}
The \emph{Sato Grassmannian} $\Gr$ is the set of all $\mbC$-linear subspaces $H$ in $\mbC((z))$ such that the restriction $\pi|_H\colon H \to z^{-1}\mbC[z^{-1}]$ has finite-dimensional kernel and cokernel. 
\end{definition}

\medskip

$\Gr$ is the disjoint union of its components $\Gr_N$, defined as follows:
\begin{equation*}
	\Gr_N := \{H \in \Gr \mid \dim \ker \pi|_H - \dim \coker \pi|_H = N\}.
\end{equation*}
The \emph{big cell} $\Gr_N^0$ is the set of subspaces $H$ such that the projection $\pi_N|_H$ from $H$ to $z^{N-1}\mbC[z^{-1}]$ along $z^N \mbC[[z]]$ is an isomorphism.
Each subspace $H \in \Gr^0_N$ has an \emph{adapted basis} $(f_{N-1}, f_{N-2}, \dots)$ of the following form:
$f_k(z) = (\pi_N|_H)^{-1}\big(z^k\big) = z^k \Big(1 + \sum_{j \geq 0} f_{k, j} z^{j}\Big)$, $k < N$.

\medskip

Given an adapted basis of $H \in \Gr_N^0$, we can construct a decomposable vector $\bigwedge_i f_{N-i}(z)$ in the Fock space $\cF^{[N]}$. The relation of decomposable vectors to the KP hierarchy is revealed in the following theorem. 

\begin{theorem}[\cite{DKJM83, MJD00}]\label{theorem:Fock and KP}
For some $N \in \mbZ$, consider a sequence of Laurent series $\bff=(f_{N-1}(z), f_{N-2}(z),\ldots)$, such that for all $k < N$, $f_{k}(z) \in \mbC((z))$ has the following form:
$$
f_{k}(z)=z^{k} \Big(1 + \sum_{j \geq 1} f_{k,j} z^j\Big), \quad f_{k,j} \in \mbC.
$$
Then the formal power series 
\begin{equation}\label{eq:tau-function from f-sequence}
\tau_{\bff} := \Psi_N \Big( \bigwedge_{i\ge 1}f_{N-i}(z)\Big) \in \mbC[[\bft]]
\end{equation}
is a tau-function of the KP hierarchy. Moreover, an arbitrary tau-function $\tau$ of the KP hierarchy satisfying $\tau(0)=1$ can be represented in the form~\eqref{eq:tau-function from f-sequence}.
\end{theorem}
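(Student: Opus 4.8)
The plan is to prove this via the \emph{boson--fermion correspondence} together with the classical dictionary between decomposable Fock vectors, the Plücker relations, and the Hirota bilinear form of the KP hierarchy. The organizing principle is that ``$\tau$ is a KP tau-function with $\tau(0)\neq 0$'' is equivalent to the single scalar bilinear identity
\begin{equation*}
\Coef_{x^{-1}}\left[e^{\xi(\bft-\bft',x)}\,\tau(\bft-[x^{-1}])\,\tau(\bft'+[x^{-1}])\right]=0\qquad\text{for all }\bft,\bft',
\end{equation*}
so the goal is to show that the decomposable vectors $\bigwedge_{i\ge 1}f_{N-i}(z)$ are precisely those whose image under $\Psi_N$ satisfies this identity.

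First I would record the \emph{fermionic bilinear identity}: a nonzero homogeneous vector $g\in\cF^{[N]}$ is decomposable, i.e.\ of the form $\bigwedge_{i\ge 1}f_{N-i}$ for an adapted sequence (equivalently, a point of the big cell $\Gr_N^0$), if and only if
\begin{equation*}
S(g\otimes g)=0,\qquad S:=\sum_{a\in\mbZ}\theta_a\otimes\theta^\dagger_a\colon\cF\otimes\cF\to\cF\otimes\cF.
\end{equation*}
This is the infinite-dimensional Plücker embedding: $S(g\otimes g)=0$ encodes the Plücker relations, which cut out the $GL_\infty$-orbit of the vacuum $\left|N\right\rangle$. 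Note that $\theta_a g\in\cF^{[N+1]}$ and $\theta^\dagger_a g\in\cF^{[N-1]}$, so $S(g\otimes g)\in\cF^{[N+1]}\otimes\cF^{[N-1]}$, which fixes the levels at which I must bosonize each tensor factor.

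Next I would bosonize. The central computation is the vertex-operator realization of the fermion fields under $\Psi$, of the schematic form
\begin{equation*}
\Psi_{N+1}(\theta(x)g)=x^{N}e^{\xi(\bft,x)}\,(\Psi_N g)(\bft-[x^{-1}]),\qquad \Psi_{N-1}(\theta^\dagger(x)g)=x^{1-N}e^{-\xi(\bft,x)}\,(\Psi_N g)(\bft+[x^{-1}]).
\end{equation*}
Since $S=\Coef_{x^0}[\theta(x)\otimes\theta^\dagger(x)]$ (the coefficient of $x^0$ in $\sum_{i,j}\theta_i\otimes\theta^\dagger_j\,x^{i-j}$), applying $\Psi_{N+1}\otimes\Psi_{N-1}$ to $S(g\otimes g)=0$, with independent times $\bft,\bft'$ in the two tensor slots, collapses the fermionic bilinear identity to exactly the Hirota bilinear identity displayed above for $\tau=\tau_{\bff}=\Psi_N g$. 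Invoking the standard equivalence between this bilinear identity and the Lax-form KP hierarchy then shows $\tau_{\bff}$ is a tau-function. The normalization $f_k(z)=z^k(1+\sum_{j\ge 1}f_{k,j}z^j)$ is precisely membership of $g$ in the big cell, and it forces $\tau_{\bff}(0)=\langle N\mid g\rangle=1\neq 0$.

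For the converse (``moreover''), I would run the argument in reverse: an arbitrary tau-function with $\tau(0)=1$ satisfies the Hirota bilinear identity, so $g:=\Psi_N^{-1}(\tau)\in\cF^{[N]}$ satisfies $S(g\otimes g)=0$, hence is decomposable, and $\tau(0)\neq 0$ places it in the big cell, yielding an adapted basis $(f_{N-1},f_{N-2},\dots)$ of the required form. The main obstacle is the classical-but-delicate bookkeeping in the two pillars: (i) deriving the vertex-operator formulas with the correct signs and powers of $x$ directly from the definitions of $\theta_a,\theta^\dagger_a,\Gamma(\bft)$ and $\Psi_N$; and (ii) the nontrivial reconstruction showing that the scalar Hirota identity is equivalent to the existence of a dressing operator $P=1+\sum w_i\d^{-i}$ with $L=P\circ\d\circ P^{-1}$ solving the KP equations. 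Both are standard, which is exactly why the statement is attributed to \cite{DKJM83,MJD00} and cited rather than reproved here.
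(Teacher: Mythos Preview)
The paper does not prove this theorem at all: it is stated with attribution to \cite{DKJM83,MJD00} and used as a black box, so there is no ``paper's own proof'' to compare against. Your outline is the standard route taken in those references---fermionic Pl\"ucker relations $S(g\otimes g)=0$, bosonization via the vertex-operator formulas (which match the paper's Proposition~\ref{proposition:wave-fermionic}), and the equivalence of the resulting scalar bilinear identity with the Lax form---and is correct as sketched.
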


Thus, via the bosonic-fermionic correspondence, from each subspace $H \in \Gr^0_N$ we obtain a tau-function of the KP hierarchy $\tau_H(\bft)$.

\medskip

Let $(-,-)$ be an inner product on $\mbC((z))$ given by the formula $(f(z), g(z)) = \res_{z=0} f(z) g(z)$. Then, for any $H \in \Gr$, its orthogonal subspace $H^\perp$ is also an element of $\Gr$. Denote by $\iota$ the resulting map $\Gr \to \Gr, H \mapsto H^\perp$. The map $\iota$ is an involution and satisfies $\iota(\Gr_N) = \Gr_{-N}$ and $\iota(\Gr^0_N) = \Gr^0_{-N}$.

\medskip

\begin{proposition}[{\cite[{Section 4.14}]{HB21}}]
Let $H \in \Gr^0_N$ and $H^\perp = \iota(H) \in \Gr^0_{-N}$. Then the corresponding tau-functions are related as follows: $\tau_{H^\perp}(\bft) = \tau_H(-\bft)$. 
\end{proposition}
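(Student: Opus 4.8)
The plan is to realize the orthogonal-complement involution $\iota$ on $\Gr$ as an explicit involution on the fermionic Fock space and then bosonize. Since the residue pairing satisfies $(z^a, z^b) = \delta_{a+b,-1}$, the single-particle duality sends the basis vector $z^a$ to the direction $z^{-1-a}$. Accordingly I would introduce a Fock-space ``star'' operator $\star\colon \cF^{[N]} \to \cF^{[-N]}$ that sends a monomial $z^{a_1}\wedge z^{a_2}\wedge\cdots$ with exponent set $A = \{a_i\}$ to a signed multiple of the monomial whose exponent set is $\{-1-b : b \in \mbZ\setminus A\}$; this is the usual particle--hole conjugation composed with the reflection $a\mapsto -1-a$. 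A direct check on the vacua shows $\star\,|N\rangle = |-N\rangle$ (matching $(z^{N-1}\mbC[z^{-1}])^\perp = z^{-N-1}\mbC[z^{-1}]$), and the essential content to verify is that on decomposable vectors $\star$ reproduces $\iota$, i.e. $\star\,\Phi_H = \Phi_{H^\perp}$ up to a nonzero scalar, where $\Phi_H := \bigwedge_{i\ge 1} f_{N-i}$ is the vector of Theorem~\ref{theorem:Fock and KP}.

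Granting that identification, the proof reduces to two computations. First, I would record how $\star$ interacts with the bosonizing operators: under the exchange $\theta_a \leftrightarrow \theta^\dagger_{-1-a}$ implementing the star, one finds $\star\,\alpha_n = -\alpha_n\,\star$ for $n\ne 0$, the sign coming from reindexing and the anticommutator $\{\theta_j,\theta^\dagger_{j+n}\}=\delta_{n,0}=0$; hence $\star\, \Gamma(\bft) = \Gamma(-\bft)\,\star$ and therefore $e^{\Gamma(\bft)}\,\star = \star\, e^{\Gamma(-\bft)}$. Second, using that $\star$ is compatible with the inner product and sends $\langle -N\mid$ to $\langle N\mid$, I would assemble
$$\tau_{H^\perp}(\bft) = \langle -N\mid e^{\Gamma(\bft)}\,\Phi_{H^\perp}\rangle = \langle -N\mid e^{\Gamma(\bft)}\,\star\,\Phi_H\rangle = \langle N\mid e^{\Gamma(-\bft)}\,\Phi_H\rangle = \tau_H(-\bft),$$
and finally fix the undetermined scalar to $1$ by evaluating at $\bft=0$: both $\tau_H(0)$ and $\tau_{H^\perp}(0)$ equal $1$, since the leading terms $z^k$ of the adapted bases contribute exactly the vacuum monomial with coefficient $1$.

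The main obstacle is establishing the defining property $\star\,\Phi_H = \Phi_{H^\perp}$ with the correct signs, i.e. that orthogonal complement of subspaces corresponds precisely to hole-filling-with-reflection on decomposable wedge vectors. This demands a sign-consistent definition of $\star$ on the semi-infinite wedge, so that it is well defined modulo the wedge relations and genuinely intertwines $\iota$; the Plücker-coordinate bookkeeping here is exactly the delicate point treated in~\cite[Section~4.14]{HB21}. As an independent cross-check and a conceptual backbone, one may instead route through the adjunction involution already set up above: the adjoint wave function $\psi^\dagger(\bft,x)$ spans the orthogonal complement of the Grassmannian point of $\psi(\bft,x)$ (the bilinear identity $\res_z\,\psi(\bft,z)\psi^\dagger(\bft',z)=0$ encoding $H\perp H^\perp$), while the recorded relation $\psi^*(\bft,x)=\psi^\dagger(-\bft,x)$ identifies $H^\perp$ as the Grassmannian point of the adjoint solution $L^*$; since the Grassmannian point is $\bft$-independent, this gives $\tau_{H^\perp}=(\tau_H)^*$, and the identity $\tau^*(\bft)=\tau(-\bft)$ yields the claim.
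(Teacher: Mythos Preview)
The paper does not supply its own proof of this proposition; it is stated with a citation to \cite[Section~4.14]{HB21} and is immediately followed by the remark that $\iota$ acts on tau-functions as the adjunction involution $\tau\mapsto\tau^*$ of Section~\ref{subsection:adjoint}. So there is nothing in the paper to compare your argument against.

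That said, your sketch is sound. The fermionic route via a particle--hole ``star'' operator is the standard proof, and your key intertwining $\star\,\alpha_n=-\alpha_n\,\star$ is correct: conjugating $\alpha_n=\sum_i\theta_i\theta^\dagger_{i+n}$ by the exchange $\theta_a\leftrightarrow\theta^\dagger_{-1-a}$ gives $\sum_j\theta^\dagger_{j+n}\theta_j=-\alpha_n$ for $n\ne 0$. The normalization at $\bft=0$ is also right. Your alternative route through the bilinear identity and the relation $\psi^*(\bft,x)=\psi^\dagger(-\bft,x)$ is in fact the argument most directly suggested by the paper's own framing, since the very next sentence after the proposition records $\tau_{H^\perp}=(\tau_H)^*$; either approach is acceptable, and the sign bookkeeping you flag as delicate is indeed the only place requiring care.
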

 
Thus, the orthogonality involution $\iota$ acts on KP tau-functions as the adjunction involution introduced in \ref{subsection:adjoint}: $\tau_{H^\perp} = (\tau_H)^*$.

\medskip

The wave and the adjoint wave functions also admit fermionic representations, given in the following proposition.

\begin{proposition}[{\cite{MJD00}}]
	\label{proposition:wave-fermionic}
	Let $\tau_{\bff} = \Psi_N \Big( \bigwedge_{i\ge 1}f_{N-i}(z)\Big)$. Then the wave function $\psi_{N}(\bft, x)$ and the adjoint wave function $\psi^\dagger_{N}(\bft, x)$ at level $N$, associated to $\tau_{\bff}$, are given by
	\begin{gather}\label{eq:psi in terms of fi}
		\psi_{N}(\bft, x) = \frac{\Psi_{N+1} \Big( \theta(x) \bigwedge_{i\ge 1}f_{N-i}(z)\Big)}{\Psi_N \Big( \bigwedge_{i\ge 1}f_{N-i}(z)\Big) },
		\quad \psi^\dagger_{N}(\bft, x) = x^{-1}\frac{\Psi_{N-1} \Big( \theta^\dagger(x) \bigwedge_{i\ge 1}f_{N-i}(z)\Big)}{\Psi_N \Big( \bigwedge_{i\ge 1}f_{N-i}(z)\Big) }.
	\end{gather}
	In other words, the following equations hold:
	\begin{align*}
		\tau_\bff(\bft - [x^{-1}])e^{\xi(\bft, x)} &= x^{-N} \Psi_{N+1} \Big( \theta(x) \bigwedge f_{N-i}(z)\Big),\\
		 \tau_\bff(\bft + [x^{-1}])e^{-\xi(\bft, x)} &= x^{N-1} \Psi_{N-1} \Big( \theta^\dagger(x) \bigwedge f_{N-i}(z)\Big).
	\end{align*}
\end{proposition}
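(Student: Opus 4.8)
The plan is to reduce the proposition to two boson--fermion intertwining identities and then prove these by a Heisenberg-algebra argument. By Theorem~\ref{theorem:tau-functions} and the adjoint formula of Section~\ref{subsection:adjoint}, combined with the definitions $\psi_N=x^N\psi$ and $\psi^\dagger_N=x^{-N}\psi^\dagger$, the two equalities in~\eqref{eq:psi in terms of fi} are equivalent to the two ``in other words'' displayed equations. Writing $\tau_\bff=\Psi_N(\bigwedge_{i\ge 1}f_{N-i})$ as in Theorem~\ref{theorem:Fock and KP}, these in turn follow once I establish, for an \emph{arbitrary} $u\in\cF^{[N]}$, the vertex-operator formulas
\begin{gather*}
\Psi_{N+1}(\theta(x)u)=x^N e^{\xi(\bft,x)}\,(\Psi_N u)(\bft-[x^{-1}]),\\
\Psi_{N-1}(\theta^\dagger(x)u)=x^{1-N}e^{-\xi(\bft,x)}\,(\Psi_N u)(\bft+[x^{-1}]).
\end{gather*}
So the first step is this reduction; the substance lies in proving the two displayed identities.

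To prove the first identity I would begin by recording the Heisenberg commutators with the fermion field. A direct computation from the definitions of $\theta_a,\theta^\dagger_a,\alpha_n$ gives $[\alpha_n,\theta_a]=\theta_{a-n}$, hence $[\alpha_m,\theta(x)]=x^m\theta(x)$ for every nonzero $m$. I would then establish the bosonic realization of the positive and negative modes under the correspondence: since $[\Gamma(\bft),\alpha_n]=0$ and $[\Gamma(\bft),\alpha_{-n}]=nt_n$ is central for $n\ge 1$, and since $\alpha_n|N\rangle=0=\langle N|\alpha_{-n}$, one obtains $\Psi_N(\alpha_n u)=\partial_{t_n}\Psi_N(u)$ and $\Psi_N(\alpha_{-n}u)=nt_n\,\Psi_N(u)$ for all $n\ge 1$.

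Conjugating $\theta(x)$ through the correspondence, the operator $V(x):=\Psi_{N+1}\circ\theta(x)\circ\Psi_N^{-1}$ then satisfies $[\partial_{t_m},V(x)]=x^mV(x)$ and $[t_m,V(x)]=\tfrac1m x^{-m}V(x)$ for all $m\ge 1$. The first relation forces $V(x)=e^{\xi(\bft,x)}W$ with $W$ commuting with every $\partial_{t_m}$; feeding this into the second relation forces $W=C(x)\,G_x$, where $G_x:=\exp(-\sum_{m\ge 1}\tfrac1m x^{-m}\partial_{t_m})$ is precisely the shift $\bft\mapsto\bft-[x^{-1}]$ and $C(x)$ is a scalar in $x$. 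Finally I would pin down $C(x)$ by evaluating on the vacuum: $\theta(x)|N\rangle=\sum_{a\ge N}x^a\,(z^a\wedge z^{N-1}\wedge\cdots)$ has lowest term $x^N|N+1\rangle$, so the constant term in $\bft$ of $\Psi_{N+1}(\theta(x)|N\rangle)$ equals $x^N$, whence $C(x)=x^N$. This yields the first identity; the second is entirely parallel, using $[\alpha_{\pm m},\theta^\dagger(x)]=-x^{\pm m}\theta^\dagger(x)$ and the lowest term $x^{-(N-1)}|N-1\rangle$ of $\theta^\dagger(x)|N\rangle$ to get $C'(x)=x^{1-N}$.

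I expect the main obstacle to be the operator-differential step: solving $[\partial_{t_m},V]=x^mV$ and $[t_m,V]=\tfrac1m x^{-m}V$ rigorously and concluding $V=C(x)e^{\xi(\bft,x)}G_x$ requires working in a suitable completion in which the infinite wedge $\bigwedge_{i\ge 1} f_{N-i}$, the series $\theta(x)u$, and the two exponentiated Heisenberg half-series all make sense as formal objects in $x,x^{-1}$ and $\bft$. Keeping the conventions straight (the identification $p_k=kt_k$, the precise shift $[x^{-1}]$, and the signs in the definition of $\theta^\dagger$) is where the bookkeeping is most delicate, but no genuinely new input beyond the standard boson--fermion correspondence of~\cite{MJD00} is needed.
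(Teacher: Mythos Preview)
The paper does not give its own proof of this proposition: it is quoted as a standard result from~\cite{MJD00} and stated without argument. Your proposal is exactly the classical vertex-operator derivation found in that reference (and in~\cite{DKJM83}): compute $[\alpha_m,\theta(x)]=x^m\theta(x)$, transport through the boson--fermion correspondence to obtain the Heisenberg commutation relations for $V(x)$, solve them to get $V(x)=C(x)e^{\xi(\bft,x)}G_x$, and fix the scalar by the vacuum value. This is correct, and the only point that needs care is the one you already flag---that an operator on $\mbC[[\bft]]$ commuting with all multiplications $t_m$ and all derivations $\partial_{t_m}$ is a scalar (irreducibility of the Heisenberg module), together with the formal bookkeeping in $x,x^{-1}$.

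One small remark on the uniqueness step: rather than first asserting that $[\partial_{t_m},V]=x^mV$ ``forces'' $V=e^{\xi}W$, it is cleaner to \emph{define} $W:=e^{-\xi(\bft,x)}V$ and then $W':=G_x^{-1}W$, and compute directly that $W'$ commutes with every $t_m$ and every $\partial_{t_m}$; from $W'(1)=:g$ and commutation with multiplication one gets that $W'$ is multiplication by $g$, and commutation with derivatives forces $g$ to be constant in~$\bft$. This avoids any ambiguity about what ``forces'' means and makes the argument self-contained.
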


\medskip

\begin{remark}\label{remark:about base algebra}
All the results presented above have a direct generalization to the case when we start from the pseudo-differential operators $A = \sum_{i=-N}^{\infty} a_i \d^{-i}$ with $a_i\in K[[\bft]]=K[[t_1,t_2,\ldots]]$, where $K$ is an arbitrary commutative associative $\mbC$-algebra. In this generalization, one should require in Theorem~\ref{theorem:tau-functions} that $\tau(0)\in K$ and $\lambda\in K$ are invertible and also, in the Fock space formalism, one should consider $K$-modules instead of vector spaces over $\mbC$. In the case, when~$K$ is different from $\mbC$, we will sometimes say that the \emph{base algebra} is~$K$. The Fock space with the coefficients from $K$ will be denoted by $\cF_K$.
\end{remark}

\medskip

\subsection{Cut-and-join operator}

The \emph{cut-and-join operator} is defined by 
$$
\cA:=\frac{1}{2}\sum_{i,j\ge 1}\left(ijt_it_j\frac{\d}{\d t_{i+j}}+(i+j)t_{i+j}\frac{\d^2}{\d t_i\d t_j}\right).
$$

\begin{proposition}[see, e.g., \cite{KL07}]\label{proposition:cut-and-join and Fock}
Let $\tau\in\mbC[[\bft]]$ be an arbitrary tau-function of the KP hierarchy.
\begin{enumerate}
\item The formal power series $e^{\beta \cA}\tau\in\mbC[[\beta,\bft]]$ is also a tau-function of the KP hierarchy, with the base algebra $\mbC[[\beta]]$.
		
\smallskip
		
\item If $\tau$ is given by formula~\eqref{eq:tau-function from f-sequence} from Theorem \ref{theorem:Fock and KP}, then $e^{\beta \cA}\tau=\tau_{\bff(\beta)}$, where the sequence $\bff(\beta)=(f_{N-1}(z,\beta),f_{N-2}(z,\beta),\ldots)$ is given by 
$$
f_{k}(z,\beta):=z^{k}\Big(1 +\sum_{j \geq 1}e^{\beta\frac{(k + j - N + \frac{1}{2})^2-(k - N +\frac{1}{2})^2}{2}} f_{k,j}z^j \Big).
$$
\end{enumerate}
\end{proposition}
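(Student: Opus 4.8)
The plan is to pass to the fermionic side and realize $\cA$ as a diagonal one-body operator, after which both statements become transparent. Throughout I work over the degree grading: $\cA$ (acting on $\mbC[[\bft]]$ with $\deg t_n=n$) manifestly preserves degree, as does the grading of $\cF^{[N]}$ by the degree of monomials $z^\bfa$, and $\Psi_N$ intertwines the two gradings. Hence $e^{\beta\cA}$ is well defined as an endomorphism of $\mbC[[\beta,\bft]]$, computed degree by degree on the finite-dimensional pieces, and there is no convergence issue to worry about. By Theorem~\ref{theorem:Fock and KP} (and the Sato Grassmannian picture) every tau-function corresponds to a decomposable vector $u\in\cF^{[N]}$; when $\tau(0)\ne 0$ we may take $u=\bigwedge_{i\ge1}f_{N-i}(z)$ in the adapted form of that theorem.

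First I would establish the key operator identity $\cA\circ\Psi_N=\Psi_N\circ\hat\cA$, where $\hat\cA$ is the diagonal fermionic operator $\hat\cA\,z^\bfa=C_\bfa\,z^\bfa$ with
$$
C_\bfa:=\sum_{i\ge1}\Big(\tfrac{(a_i-N+\frac12)^2}{2}-\tfrac{(i-\frac12)^2}{2}\Big),
$$
a finite sum since $a_i=N-i$ for $i\gg 0$. The cleanest route is through Schur functions: it is classical that $\Psi_N(v^{[N]}_\lambda)=s_\lambda$ (written via $p_n=nt_n$) and that the $s_\lambda$ are eigenvectors of $\cA$ with eigenvalue the content sum $\sum_{\square\in\lambda}c(\square)$. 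On the fermionic side $v^{[N]}_\lambda=z^\bfa$ with $a_i=\lambda_i-i+N$, and a one-line computation gives $C_\bfa=\tfrac12\sum_i\lambda_i(\lambda_i-2i+1)=\sum_{\square\in\lambda}c(\square)$; comparing eigenvalues on the basis $\{v^{[N]}_\lambda\}$ yields the intertwining relation, and therefore $e^{\beta\cA}\circ\Psi_N=\Psi_N\circ e^{\beta\hat\cA}$.

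Next I would use that $\hat\cA$ is a normal-ordered \emph{one-body} operator, namely the second quantization of the single-particle operator $z^a\mapsto\tfrac{(a-N+\frac12)^2}{2}z^a$, normalized so that $\hat\cA|N\rangle=0$. Thus $e^{\beta\hat\cA}$ is the Fock-space lift of the (formal, grading-preserving) single-particle flow $g_\beta\colon z^a\mapsto e^{\beta(a-N+\frac12)^2/2}z^a$ that fixes the vacuum, and it sends any decomposable vector to a decomposable vector by acting factor by factor; this already proves part~(1) for an arbitrary tau-function, since $\Psi_N(e^{\beta\hat\cA}u)$ is then a tau-function for every decomposable $u$. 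For the explicit form, the $i$-th adapted factor $f_{N-i}(z)=z^{N-i}(1+\sum_{j\ge1}f_{N-i,j}z^j)$ has leading exponent $N-i$ matching the vacuum slot, so the per-factor subtraction is exactly $\tfrac{(i-\frac12)^2}{2}$ and one obtains
$$
e^{\beta\hat\cA}\bigwedge_{i\ge1}f_{N-i}(z)=\bigwedge_{i\ge1}f_{N-i}(z,\beta),\qquad f_{k}(z,\beta)=z^{k}\Big(1+\sum_{j\ge1}e^{\beta\frac{(k+j-N+\frac12)^2-(k-N+\frac12)^2}{2}}f_{k,j}z^j\Big),
$$
with $k=N-i$. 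Since each $f_{N-i}(z,\beta)$ is again of adapted form, Theorem~\ref{theorem:Fock and KP} over the base algebra $\mbC[[\beta]]$ (cf. Remark~\ref{remark:about base algebra}) shows $e^{\beta\cA}\tau=\Psi_N\big(\bigwedge_i f_{N-i}(z,\beta)\big)=\tau_{\bff(\beta)}$, proving part~(2).

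I expect the main obstacle to be the first step, i.e. pinning down the fermionic avatar $\hat\cA$ with the precise eigenvalue: this requires the content-sum identity together with the verification that the normal-ordering constant is exactly the one making $e^{\beta\hat\cA}$ act factor-wise with the stated leading-term normalization. The remaining steps are essentially bookkeeping, the only delicate point being that the single-particle weights $e^{\beta\epsilon_a}$ grow like $e^{\beta a^2/2}$, so $g_\beta$ does not lie in the usual restricted group of the boson--fermion correspondence; the degree grading, however, makes the entire computation well defined as a formal power series in $\beta$ on each finite-dimensional graded piece, which circumvents this difficulty.
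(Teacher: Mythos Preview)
The paper does not supply its own proof of this proposition: it is stated with the attribution ``see, e.g., \cite{KL07}'' and used as a black box. Your proposal is correct and is essentially the standard argument from the cited reference---diagonalize $\cA$ on Schur functions with eigenvalue the content sum, transfer via $\Psi_N$ to the fermionic diagonal one-body operator, and then read off the factor-wise action on decomposable wedges after normal-ordering. Your handling of the normalization (the per-slot subtraction of $\tfrac{(i-\frac12)^2}{2}$ so that the vacuum is fixed) and of the convergence issue via the degree grading is exactly what is needed.
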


\medskip

\begin{example}
The famous cut-and-join equation for the closed Hurwitz numbers (see Section~\ref{section:closed Hurwitz numbers}) claims that $\frac{\d \tau^c}{\d \beta} = \cA \tau^c$. Since $\tau^c|_{\beta = 0} = e^{t_1}=\Psi_0(\bigwedge_{i \geq 1} f_{-i})$, where $f_{-i}(z) = e^z z^{-i}$, by Proposition \ref{proposition:cut-and-join and Fock}, $\tau^c$ is a tau-function of the KP hierarchy, and moreover $\tau^c$ is given by formula~\eqref{eq:Fock and Hurwitz}.
\end{example}

\medskip

\section{B\"acklund--Darboux transformations and mKP hierarchy}\label{section:bd}

\subsection{B\"acklund--Darboux transformations in the Lax form}

Here we discuss B\"acklund--Darboux transformations and their properties, following the results presented in \cite{CSY92, HvdL01, KvdL18}. In order to make the paper more self-contained, we will give short proofs of some statements.

\medskip

Let $L(\bft)$ be a solution of the KP hierarchy, and let $D(\bft)$ be an invertible pseudo-differential operator. We formulate conditions under which $\wtL = D\circ L\circ D^{-1}$ is also a solution of the KP hierarchy.

\begin{lemma}\label{lemma:sato-wilson}
If $D$ satisfies the following equations for all $n \geq 1$: $\d_n D = \wtL^n_+\circ D - D\circ L^n_+$, then $\wtL = D\circ L\circ D^{-1}$ is a solution of the KP hierarchy.
\end{lemma}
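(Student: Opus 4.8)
The plan is to verify the Lax equation $\d_n\wtL=[(\wtL^n)_+,\wtL]$ directly for every $n\geq 1$, by differentiating the conjugation relation $\wtL=D\circ L\circ D^{-1}$ and feeding in both the hypothesis on $\d_n D$ and the KP equation satisfied by $L$. First I would assemble the two differentiation facts needed. The Leibniz rule applied to the triple product gives
$$
\d_n\wtL=(\d_n D)\circ L\circ D^{-1}+D\circ(\d_n L)\circ D^{-1}+D\circ L\circ(\d_n D^{-1}),
$$
while differentiating the identity $D\circ D^{-1}=1$ yields $\d_n D^{-1}=-D^{-1}\circ(\d_n D)\circ D^{-1}$, so the last term above equals $-\wtL\circ(\d_n D)\circ D^{-1}$.

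Next I would substitute $\d_n D=\wtL^n_+\circ D-D\circ L^n_+$ into the first and third terms and $\d_n L=[L^n_+,L]$ into the middle term, then collapse everything using $D\circ L\circ D^{-1}=\wtL$ (equivalently $\wtL\circ D=D\circ L$). Abbreviating $A:=\wtL^n_+$ and $B:=L^n_+$, the first term becomes $A\circ\wtL-D\circ B\circ L\circ D^{-1}$, the middle term becomes $D\circ B\circ L\circ D^{-1}-D\circ L\circ B\circ D^{-1}$, and the third term becomes $-\wtL\circ A+D\circ L\circ B\circ D^{-1}$ (rewriting $\wtL\circ D\circ B\circ D^{-1}=D\circ L\circ B\circ D^{-1}$). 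Summing the three lines, the two copies of $D\circ B\circ L\circ D^{-1}$ cancel and the two copies of $D\circ L\circ B\circ D^{-1}$ cancel, leaving exactly $A\circ\wtL-\wtL\circ A=[\wtL^n_+,\wtL]$. Since $\wtL^n_+=(\wtL^n)_+$, this is precisely the KP equation for $\wtL$.

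The algebra is routine once the substitutions are arranged in this order, so the step I expect to demand the most care is purely organizational: laying out the operator products so that the two cancelling pairs become manifest. It then remains only to observe that $\wtL$ is a genuine KP operator, i.e.\ of order one with leading term $\d$; this is immediate because conjugation by the invertible operator $D$ preserves both the order and the leading coefficient, and $L=\d+\sum_{i\ge 1}u_i\d^{-i}$ has order one and leading coefficient $1$. I would also record that the case $n=1$ of the hypothesis carries no content: there $\wtL^1_+=L^1_+=\d$ and the equation reduces to the tautology $\d_{t_1}D=[\d,D]$, so only the flows $n\geq 2$ are constrained, in agreement with the definition of the KP hierarchy.
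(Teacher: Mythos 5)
Your verification of the Lax equations is correct and is essentially the paper's own argument: differentiate $\wtL=D\circ L\circ D^{-1}$ by the product rule, substitute the hypothesis on $\d_n D$ together with $\d_n L=[L^n_+,L]$, and observe the pairwise cancellation that leaves $[\wtL^n_+,\wtL]$.

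However, your closing remarks contain a genuine error, and it hides the one nontrivial point of the lemma. For $\wtL$ to be a solution of the KP hierarchy in the paper's sense, it must have the form $\d+\sum_{i\ge 1}c_i\d^{-i}$, with \emph{no} $\d^0$ term. Conjugation by an invertible pseudo-differential operator preserves the order and the leading coefficient, as you say, but it does not kill the $\d^0$ term: for instance, if $D=a(\bft)$ is a multiplication operator, then $a\circ\d\circ a^{-1}=\d-\frac{\d_1 a}{a}$. So your justification that $\wtL$ is ``a genuine KP operator'' is insufficient, and your final claim that the $n=1$ hypothesis is a tautology is circular: it uses $\wtL^1_+=\d$, which is exactly the assertion (not yet proved at that point) that the $\d^0$ coefficient of $\wtL$ vanishes. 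In fact, the $n=1$ equation is precisely what closes this gap: write $\wtL_+=\d+c_0$ and $L_+=\d$; since $[\d,D]=\d_1 D$ (acting coefficient-wise), the hypothesis $\d_1 D=\wtL_+\circ D-D\circ L_+$ reduces to $c_0\circ D=0$, whence $c_0=0$ because $D$ is invertible. With this observation inserted your proof is complete; but far from ``carrying no content,'' the $n=1$ flow is the one that constrains the shape of $\wtL$ rather than its dynamics.
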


\medskip


\begin{definition}
Let $L(\bft)$ be a KP solution. An associated \emph{eigenfunction} of $L$ is a formal power series~$\Phi(\bft)\in A$ that satisfies the following equations for all $n\ge 1$: $\d_n \Phi = L^n_+ \Phi$.
\end{definition}

\medskip

\begin{proposition}\label{proposition:bd-definition}
Let $L$ be a KP solution, and let $\Phi$ be its associated eigenfunction satisfying $\Phi(0)\ne 0$. Define $D: = \Phi \circ \d \circ \Phi^{-1} = \d - \d \log\Phi$. 	Then $D$ satisfies the conditions of Lemma~\ref{lemma:sato-wilson}, and so $\whL := D\circ L\circ D^{-1}$ is again a KP solution.
\end{proposition}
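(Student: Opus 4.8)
The plan is to verify directly that the operator $D = \Phi\circ\d\circ\Phi^{-1} = \d - \d\log\Phi$ satisfies the hypothesis of Lemma~\ref{lemma:sato-wilson}, namely the Sato--Wilson type equation $\d_n D = \whL^n_+\circ D - D\circ L^n_+$ for all $n\ge 1$; once this is established, Lemma~\ref{lemma:sato-wilson} immediately gives that $\whL = D\circ L\circ D^{-1}$ is a KP solution, finishing the proof. So the entire content is the identity for $\d_n D$, and I would reduce everything to the eigenfunction equation $\d_n\Phi = L^n_+\Phi$.

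First I would record what each side of the desired equation looks like as a pseudo-differential operator. Writing $D = \d - v$ with $v := \d\log\Phi = \Phi^{-1}\,\d\Phi$ (here $\d\Phi$ means $\d_1\Phi$, i.e. the coefficient function, and $v\in A$), the left-hand side is simply $\d_n D = -\d_n v = -\d_n(\Phi^{-1}\d\Phi)$, a multiplication operator. The plan is to expand this using the product rule and the relation $\d_n\Phi = L^n_+\Phi$: one gets $\d_n v = -\Phi^{-2}(\d_n\Phi)(\d\Phi) + \Phi^{-1}\d(\d_n\Phi) = \Phi^{-1}\d(L^n_+\Phi) - \Phi^{-1}(L^n_+\Phi)\Phi^{-1}\d\Phi$, where in the last step I used $\d_1$ and $\d_n$ commute so that $\d(\d_n\Phi)=\d(L^n_+\Phi)$.

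For the right-hand side I would use the standard ``conjugation'' identity for how the projection $(\,\cdot\,)_+$ interacts with the dressing by $\Phi$. The key algebraic fact is that $D\circ L^n_+\circ D^{-1}$ and $\whL^n_+ = (D\circ L^n\circ D^{-1})_+$ differ by a purely differential-operator-of-nonnegative-order discrepancy that can be computed explicitly, and that $\whL^n_+\circ D - D\circ L^n_+$ is forced to be an operator of order zero (a multiplication operator) because both $\whL^n_+\circ D$ and $D\circ L^n_+$ have the same top-order and next-order symbols. Concretely, I would write $\whL^n_+\circ D - D\circ L^n_+ = (\whL^n_+\circ D - D\circ L^n_+)$, argue by comparing $(D\circ L^n\circ D^{-1})_+\circ D$ with $D\circ L^n_+$ that the difference equals $D\circ(L^n_+ - L^n)_+\cdot(\ldots)$—more cleanly, use that for the operator $D$ of the special form $\Phi\circ\d\circ\Phi^{-1}$ one has the classical formula that the order-zero part of $\whL^n_+\circ D - D\circ L^n_+$ equals $-\,(\text{res or coefficient extraction applied to } L^n_+\Phi)$ expressed through $\Phi$. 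The cleanest route is to apply both sides to $\Phi$ as an operator-times-function: since $D\Phi = (\d-v)\Phi = \d\Phi - (\Phi^{-1}\d\Phi)\Phi = 0$, the operator $D$ annihilates $\Phi$, and this single relation pins down the order-zero operator $\whL^n_+\circ D - D\circ L^n_+$ by testing against $\Phi$.

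The main obstacle is the bookkeeping in this last comparison: verifying that $\whL^n_+\circ D - D\circ L^n_+$ is genuinely a multiplication operator (order zero) and not merely of bounded order, and then identifying its coefficient. I expect the efficient way around this is the observation $D\Phi=0$ together with the eigenfunction equation: applying the operator identity one wants to prove to the function $\Phi$ collapses $D\circ L^n_+$ acting after $L^n_+\Phi$ and uses $\d_n(D\Phi)=0$ to relate the unknown order-zero term to $\d_n v$. Thus the strategy is to show both $\d_n D$ and $\whL^n_+\circ D - D\circ L^n_+$ are operators of order $\le 0$ and that they agree after the single test $\Phi\mapsto D\Phi=0$ is imposed, whence they coincide as operators. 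I would then invoke Lemma~\ref{lemma:sato-wilson} to conclude that $\whL$ solves the KP hierarchy.
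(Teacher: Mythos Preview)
Your overall architecture is right—show both sides are multiplication operators, then identify them—but the crucial first step has a genuine gap. Your justification that $\whL^n_+\circ D - D\circ L^n_+$ has order $\le 0$ ``because both $\whL^n_+\circ D$ and $D\circ L^n_+$ have the same top-order and next-order symbols'' is insufficient: matching the two leading symbols only bounds the order by $n-1$, not by $0$. You acknowledge this as ``the main obstacle'' but then propose to get around it using $D\Phi=0$, which is circular: testing against a single function cannot by itself establish that a pseudo-differential operator is a multiplication operator; it only identifies the function \emph{once you already know} the operator has order $\le 0$. The paper closes this gap with a one-line algebraic identity: since $(D\circ L^n_-\circ D^{-1})_+=0$ (as $D$ has order $1$), one has $\whL^n_+=(D\circ L^n_+\circ D^{-1})_+$, and therefore
\[
\whL^n_+\circ D - D\circ L^n_+ \;=\; -(D\circ L^n_+\circ D^{-1})_-\circ D,
\]
which is simultaneously a differential operator (being a difference of differential operators) and of order $\le 0$ (being $(\text{order}\le -1)\circ(\text{order }1)$), hence a function.

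Once this is in place, your ``test against $\Phi$'' works and is a perfectly good alternative to what the paper does next. The paper instead reads off the function as $-\res(D\circ L^n_+\circ D^{-1})$ and uses the trace property of $\res$ together with $D=\Phi\circ\d\circ\Phi^{-1}$ to compute it as $-\d(\Phi^{-1}L^n_+\Phi)=-\d(\d_n\log\Phi)=\d_n D$. Your route, applying both sides to $\Phi$ and using $D\Phi=0$ and $\d_n\Phi=L^n_+\Phi$, gives $-(\d_n v)\Phi$ on the left and $-D(\d_n\Phi)$ on the right, which match by a direct product-rule check; since $\Phi$ is invertible this suffices. So the second half of your plan is fine and arguably slicker than the residue manipulation—you just need to borrow the paper's algebraic reduction (or supply an honest order-by-order argument) to fill the gap in the first half.
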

\begin{proof}
Let us check that $\d_j D=\whL^j_+\circ D-D\circ L^j_+$. For this, we transform the right-hand side as follows: 
	\begin{multline*}
		\whL^j_+\circ D-D\circ L^j_+=(D\circ L^j\circ D^{-1})_+\circ D-D\circ L^j_+=\\
		=(D\circ L^j_+\circ D^{-1})_+\circ D-D\circ L^j_+=-(D\circ L^j_+\circ D^{-1})_-\circ D.
	\end{multline*}
We see that the pseudo-differential operator under consideration is actually a function, which is equal to
	\begin{multline*}
		-\res(D\circ L^j_+\circ D^{-1})=-\res(\Phi\circ\d\circ\Phi^{-1}\circ L^j_+\circ\Phi\circ\d^{-1}\circ\Phi^{-1})=\\
		=-\res(\d\circ\Phi^{-1}\circ L^j_+\circ\Phi\circ\d^{-1})=-\d\left(\Phi^{-1}L^j_+\Phi\right)=-\d(\d_j\log\Phi)=\d_j D,
	\end{multline*}
	as required.
\end{proof}

\medskip

\begin{definition}
	Given $L$ and $\whL$ as in Proposition \ref{proposition:bd-definition}, we say that $\whL$ is related to $L$ by a \emph{(forward) B\"acklund--Darboux transformation}.
\end{definition}

\medskip

Under a forward B\"acklund--Darboux transformation, the dressing operator and the wave function are transformed as follows: $\widehat{P}_{N+1} = D\circ P_{N}$, $\widehat{\psi}_{N + 1} = D\psi_{N}$.

\medskip

Proposition~\ref{proposition:wave-is-eigenfunction} implies that any infinite linear combination of the coefficients of~$x^i$ of the wave function $\psi(\bft, x)$ is an eigenfunction. The converse statement is also true.

\begin{proposition}\label{proposition:eigenfunctions-from-wave}
	Consider an arbitrary solution~$L$ of the KP hierarchy.
	\begin{enumerate}
		\item For any $k\in\mbZ$ and any formal power series $C(x) = \sum_{i \geq 0} c_i x^{-i}$ with constant coefficients, the formal power series
		\begin{equation}\label{eq:eigenfunction from psi}
			\Phi(\bft) = \Coef_{x^k} \left( C(x) \psi(\bft, x) \right),
		\end{equation}
		is an eigenfunction.
		
		\smallskip
		
		\item An arbitrary eigenfunction $\Phi$ has the form~\eqref{eq:eigenfunction from psi} with $k=0$. Moreover, the power series $C(x)$ is determined by an eigenfunction uniquely.
	\end{enumerate}
\end{proposition}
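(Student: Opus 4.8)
The plan is to treat the two parts separately: Part~1 is a direct computation by commuting operators, while Part~2 rests on a Cauchy--Kovalevskaya type uniqueness statement together with a triangularity observation.

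For Part~1 I would first check that $\Phi$ is a well-defined element of $\mbC[[\bft]]$. Writing $\psi(\bft,x)=\big(\sum_{j\ge 0}w_j(\bft)x^{-j}\big)e^{\xi(\bft,x)}$ and $e^{\xi(\bft,x)}=\sum_{m\ge 0}p_m(\bft)x^m$, where $p_m$ is homogeneous of degree $m$ in the grading $\deg t_n=n$, the coefficient of $x^k$ in $C(x)\psi$ is $\sum_{l\ge\max(0,-k)}b_l(\bft)\,p_{l+k}(\bft)$ with $b_l=\sum_{i+j=l}c_iw_j$, and only finitely many summands contribute to each fixed total degree. The eigenfunction equations then follow by commuting operators: since $C(x)$ has constant coefficients and $\d_n$, $L^n_+$ are differential operators in $\bft$ not involving $x$, both commute with multiplication by $C(x)$ and with $\Coef_{x^k}$. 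Hence, using Proposition~\ref{proposition:wave-is-eigenfunction} (i.e. $\d_n\psi=L^n_+\psi$),
\[
\d_n\Phi=\Coef_{x^k}\!\big(C(x)\,\d_n\psi\big)=\Coef_{x^k}\!\big(C(x)\,L^n_+\psi\big)=L^n_+\,\Coef_{x^k}\!\big(C(x)\psi\big)=L^n_+\Phi.
\]

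For Part~2 I would set $\psi_i(\bft):=\Coef_{x^i}\psi(\bft,x)$ for $i\ge 0$; these are eigenfunctions by Part~1, and $\Coef_{x^0}(C(x)\psi)=\sum_{i\ge 0}c_i\psi_i$ for $C(x)=\sum_{i\ge0}c_ix^{-i}$. Since $\psi_i=\sum_{j\ge0}w_j p_{i+j}$ with $w_0=1$, restricting to the $t_1$-axis (setting $t_2=t_3=\dots=0$, where $p_m\mapsto t_1^m/m!$) gives $\psi_i|_{t_{\ge2}=0}=t_1^i/i!+O(t_1^{i+1})$. Thus $\{\psi_i|_{t_{\ge2}=0}\}_{i\ge0}$ is triangular with invertible leading coefficients, so the map $C(x)\mapsto\sum_ic_i\psi_i|_{t_{\ge2}=0}$ is a bijection onto $\mbC[[t_1]]$; in particular $C(x)$ is uniquely determined by the restriction. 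It then remains to show that an eigenfunction is determined by its restriction to the $t_1$-axis: granting this, for a given eigenfunction $\Phi$ I would solve $\sum_ic_i\psi_i|_{t_{\ge2}=0}=\Phi|_{t_{\ge2}=0}$ by triangularity and conclude $\Phi=\sum_ic_i\psi_i$, which is precisely \eqref{eq:eigenfunction from psi} with $k=0$, and whose uniqueness follows from the bijectivity above.

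The main obstacle is this uniqueness statement: \emph{an eigenfunction $\Phi$ with $\Phi|_{t_2=t_3=\dots=0}=0$ vanishes identically.} I would prove it by showing that all Taylor coefficients $\d_1^{k_1}\d_2^{k_2}\cdots\Phi|_{\bft=0}$ vanish, by induction on the \emph{level} $\sum_{n\ge2}k_n$. The level-zero coefficients are pure $\d_1$-derivatives, which vanish since $\Phi|_{t_{\ge2}=0}=0$. For the inductive step, writing $L^n_+=\sum_{i=0}^n a^{(n)}_i(\bft)\d_1^i$, I would replace one $\d_n$ with $n\ge2$ using $\d_n\Phi=L^n_+\Phi$ and apply the Leibniz rule; every resulting term evaluated at the origin contains a factor of the form $\d_1^{k_1'}\d_2^{k_2'}\cdots\Phi|_{\bft=0}$ of strictly smaller level, which vanishes by the induction hypothesis. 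This is the assertion that the compatible flows $\d_n\Phi=L^n_+\Phi$ are in Cauchy--Kovalevskaya form with respect to $t_1$, so that the datum on the $t_1$-axis determines the full formal solution; the compatibility that makes this reduction consistent is guaranteed by $L$ being a KP solution. Everything else reduces to routine bookkeeping with the degree grading.
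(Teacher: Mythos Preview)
Your proof is correct and follows essentially the same route as the paper: both reduce Part~2 to matching the restriction $\Phi|_{t_{\ge 2}=0}$ and solve a triangular system for the coefficients~$c_i$. You go slightly further than the paper by explicitly proving the Cauchy--Kovalevskaya type uniqueness (an eigenfunction vanishing on the $t_1$-axis vanishes identically), which the paper leaves implicit in the phrase ``it suffices to check that any initial condition\ldots can be obtained.''
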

\begin{proof}
	{\it 1}. As we already mentioned above, this immediately follows from Proposition~\ref{proposition:wave-is-eigenfunction}.
	
	\medskip
	
{\it 2}. It suffices to check that any initial condition $\Phi|_{t_{\ge 2}=0}=:\sum_{i\ge 0}\phi_i t_1^i$ can be obtained as $\left.\Coef_{x^0}\left(C(x)\psi(\bft,x)\right)\right|_{t_{\ge 2}=0}$. Denote $1+\sum_{i\ge 1}p_i(t_1)\d^{-i}:=P|_{t_{\ge 2}=0}$. 	Then
	\begin{align*}
		\left.\Coef_{x^0}\left(C(x)\psi(\bft,x)\right)\right|_{t_{\ge 2}=0}=&\Coef_{x^0}\bigg(\bigg(\sum_{i\ge 0}c_i x^{-i}\bigg)\bigg(1+\sum_{j\ge 1}p_j(t_1)x^{-j}\bigg)e^{x t_1}\bigg)=\\
		=&\sum_{j\ge 0}\frac{c_jt_1^{j}}{j!}+\sum_{i\ge 1,\,j\ge 0}\frac{p_i(t_1)c_jt_1^{i+j}}{(i+j)!},
	\end{align*}
	and we recursively choose $c_k$ so that $	\frac{c_k}{k!}=\phi_k-\Coef_{t_1^k}\left(\sum_{i\ge 1,\,0\le j\le k-i}\frac{p_i(t_1)c_jt_1^{i+j}}{(i+j)!}\right)$, $k\ge 0$.
\end{proof}

\medskip

Note that the eigenfunction given by~\eqref{eq:eigenfunction from psi} with $k=0$ satisfies the property $\Phi(0)\ne 0$ if and only if $c_0\ne 0$.

\medskip

\begin{proposition}\label{proposition:backlund-for-tau}
	Under the B\"acklund--Darboux transformation generated by an eigenfunction~$\Phi$ satisfying $\Phi(0)\ne 0$, the tau-function is transformed as $\widehat{\tau}=\Phi\tau$.
\end{proposition}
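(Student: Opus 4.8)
The plan is to prove that under a forward B\"acklund--Darboux transformation generated by an eigenfunction $\Phi$ with $\Phi(0)\ne 0$, the tau-function transforms as $\widehat\tau=\Phi\tau$. I would work at the level of wave functions, using the characterization of the tau-function via Theorem~\ref{theorem:tau-functions}. Recall that the transformation acts on the wave function at level $N$ by $\widehat\psi_{N+1}=D\psi_N$, where $D=\Phi\circ\d\circ\Phi^{-1}=\d-\d\log\Phi$. The strategy is to compute $\widehat\psi_{N+1}$ explicitly in terms of $\psi_N$ and $\Phi$, then read off the new tau-function from the ratio formula $\psi(\bft,x)=\frac{\tau(\bft-[x^{-1}])}{\tau(\bft)}e^{\xi(\bft,x)}$ and its level-$N$ analogue $\psi_N=x^N\psi$.

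First I would apply $D$ to the wave function. Writing $\psi(\bft,x)=\bigl(1+\sum_{i\ge 1}w_i x^{-i}\bigr)e^{\xi(\bft,x)}$, the operator $\d$ acts as $\d\psi=(x+\d\log(\cdots))\psi$, and more usefully $D\psi=\d\psi-(\d\log\Phi)\psi=\Phi\,\d(\Phi^{-1}\psi)$. So
\begin{equation*}
\widehat\psi_{N+1}=D\psi_N=\Phi\circ\d\circ\Phi^{-1}\,(x^N\psi)=\Phi\cdot\d_1\!\left(\frac{x^N\psi}{\Phi}\right).
\end{equation*}
The next step is to identify the tau-function $\widehat\tau$ such that $\widehat\psi_{N+1}=\frac{\widehat\tau(\bft-[x^{-1}])}{\widehat\tau(\bft)}e^{\xi(\bft,x)}\cdot x^{N+1}$. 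The natural guess, which I would verify, is $\widehat\tau=\Phi\tau$. Substituting this candidate into the right-hand side and using $\psi_N=x^N\frac{\tau(\bft-[x^{-1}])}{\tau(\bft)}e^{\xi(\bft,x)}$, the claim reduces to an identity relating $\d_1$ acting on shifted tau-functions to the eigenfunction $\Phi$ evaluated at shifted arguments.

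The cleanest route to verify the candidate is to use the well-known bilinear (Hirota) residue identity characterizing when two tau-functions are related by a B\"acklund--Darboux transformation, namely
\begin{equation*}
\res_{x}\Bigl[\widehat\psi_{N+1}(\bft,x)\,\psi^\dagger_{N}(\bft',x)\Bigr]=0,
\end{equation*}
together with the fermionic representations from Proposition~\ref{proposition:wave-fermionic}. Concretely, the eigenfunction $\Phi$ corresponds via Proposition~\ref{proposition:eigenfunctions-from-wave} to inserting a vector $C(z^{-1})$ into the semi-infinite wedge, and the forward transformation adds this vector to the front of the wedge product. Then $\Psi_{N+1}\bigl(C(z^{-1})\wedge\bigwedge f_{N-i}\bigr)$ computes $\widehat\tau$ directly, and comparing with the eigenfunction formula $\Phi=\Coef_{x^0}(C(x)\psi(\bft,x))$ — which by Proposition~\ref{proposition:wave-fermionic} equals $\Psi_{N+1}(C(z^{-1})\wedge\bigwedge f_{N-i})/\Psi_N(\bigwedge f_{N-i})=\widehat\tau/\tau$ — yields exactly $\widehat\tau=\Phi\tau$.

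I expect the main obstacle to be bookkeeping the level shift $N\mapsto N+1$ consistently between the pseudo-differential (Lax) picture and the fermionic (Fock space) picture, and ensuring the normalization $\widehat\tau(0)\ne 0$ (which follows from $\Phi(0)\ne 0$ together with $\tau(0)\ne 0$). The Lax-side verification via $D\psi=\Phi\,\d(\Phi^{-1}\psi)$ is elementary but requires care in matching the $e^{\xi(\bft,x)}$ factors and the powers of $x$; the fermionic-side verification is conceptually cleaner and essentially reduces the statement to the eigenfunction formula from Proposition~\ref{proposition:eigenfunctions-from-wave} combined with the first equation of Proposition~\ref{proposition:wave-fermionic}. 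I would present the fermionic argument as the main proof, since it makes the identity $\widehat\tau=\Phi\tau$ transparent and sidesteps the residue computation.
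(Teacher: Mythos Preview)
Your fermionic argument has a circularity. You assert that ``the forward transformation adds this vector to the front of the wedge product'' and that $\Psi_{N+1}\bigl(C(z^{-1})\wedge\bigwedge f_{N-i}\bigr)$ computes $\widehat\tau$ directly, but this is precisely the statement to be proved. In the paper's logical structure, the fermionic description of B\"acklund--Darboux transformations is \emph{derived from} this proposition, not the other way around. The Lax-side definition gives only $\widehat L = DLD^{-1}$ and $\widehat\psi_{N+1} = D\psi_N$; your fermionic computation, via Proposition~\ref{proposition:wave-fermionic}, correctly shows that $\Phi\tau$ equals $\Psi_{N+1}\bigl(f_N\wedge\bigwedge_{i\ge 1}f_{N-i}\bigr)$ for an appropriate $f_N$ and hence is \emph{some} KP tau-function, but it does not show that this tau-function corresponds to the specific solution $\widehat L$. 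To close the loop you must verify that the wave function determined by $\Phi\tau$ coincides with $D\psi_N$, which is exactly the Lax-side identity you set aside.

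The paper's proof takes precisely the Lax route you sketch but do not complete: it checks directly that $x^{-1}D\psi = \frac{\Phi(\bft-[x^{-1}])}{\Phi(\bft)}\psi$, feeds in the representation $\Phi = \Coef_{z^0}\bigl(C(z)\psi(\bft,z)\bigr)$ from Proposition~\ref{proposition:eigenfunctions-from-wave}, and reduces everything to a bilinear relation among $\psi(\bft-[x^{-1}],z)$, $\psi(\bft,x)$, and $\psi(\bft,z)$, which after rewriting in terms of $\tau$ is recognized as the differential Fay identity. Your first approach was on the right track; the missing ingredient is identifying the Fay identity as what makes the computation close.
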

\begin{proof}
	We have to check that $\widehat{\psi}=\frac{\Phi(\bft-[x^{-1}])}{\Phi(\bft)}\psi$, which is equivalent to the equation
	$$
	x^{-1}(\d\psi-\d\log\Phi\cdot\psi)=\frac{\Phi(\bft-[x^{-1}])}{\Phi(\bft)}\psi \quad \Leftrightarrow \quad \Phi(\bft-[x^{-1}])+x^{-1}\d\Phi=x^{-1}\d\log\psi\cdot\Phi.
	$$
	Since $\Phi$ has the form $\Phi(\bft) = \Coef_{z^0} \left( C(z) \psi(\bft, z) \right)$, the last equation follows from the equation
	\begin{multline*}
		\psi(\bft-[x^{-1}],z)+x^{-1}\d\psi(\bft,z)=x^{-1}\d\log\psi(\bft,x)\cdot \psi(\bft,z)\quad\Leftrightarrow\\
		\Leftrightarrow\quad  \frac{\psi(\bft-[x^{-1}],z)}{\psi(\bft,z)}=x^{-1}\left(\d\log\psi(\bft,x)-\d\log\psi(\bft,z)\right)\quad\Leftrightarrow\\
		\Leftrightarrow\quad \frac{\psi(\bft-[x^{-1}],z)}{\psi(\bft,z)}=x^{-1}\d\log\frac{\psi(\bft,x)}{\psi(\bft,z)}. 
	\end{multline*}
	In terms of tau-functions, the last equation is equivalent to
	$$
	\frac{\tau(\bft-[x^{-1}]-[z^{-1}])\tau(\bft)}{\tau(\bft-[x^{-1}])\tau(\bft-[z^{-1}])}(1-zx^{-1})=x^{-1}\left(\d\log\left(\frac{\tau(\bft-[x^{-1}])}{\tau(\bft-[z^{-1}])}\right)+x-z\right).
	$$
	Finally, this is equivalent to
	\begin{multline*}
		(x-z)\left(\tau(\bft-[x^{-1}]-[z^{-1}])\tau(\bft)-\tau(\bft-[x^{-1}])\tau(\bft-[z^{-1}])\right)=\\
		=\d\tau(\bft-[x^{-1}])\cdot \tau(\bft-[z^{-1}])-\d\tau(\bft-[z^{-1}])\cdot \tau(\bft-[x^{-1}]),
	\end{multline*}
	where we recognize the differential Fay identity \cite[Section~6.4.13]{Dic03} (one should substitute $x=s_1^{-1}$ and $z=s_2^{-1}$).
\end{proof}

\medskip

Combining Propositions \ref{proposition:eigenfunctions-from-wave} and \ref{proposition:backlund-for-tau}, we obtain the following formula:
\begin{equation}
	\label{eq:backlund-darboux-for-tau}
	\widehat{\tau}(\bft) = \Coef_{x^0} \left( C(x)\, \psi(\bft, x) \tau(\bft) \right) = \Coef_{x^0} \left( C(x)\, \tau(\bft - [x^{-1}])\, e^{\xi(\bft,x)} \right).
\end{equation}

\medskip

There is a dual notion of backward B\"acklund--Darboux transformations.

\begin{definition}
An \emph{adjoint eigenfunction} $\bPhi(\bft)$ for a KP solution $L(\bft)$ is a function which satisfies the following equations for all $n \geq 1$: $	\d_n \bPhi = -(L^n_+)^\dagger \bPhi$. 
\end{definition}

\medskip

Analogously to Proposition \ref{proposition:eigenfunctions-from-wave}, we have the representation of adjoint eigenfunctions in terms of the adjoint wave function:
\begin{equation}\label{eq:adjoint Phi in terms of adjoint psi}
	\bPhi(\bft) = \Coef_{x^k} \left( C(x) \psi^\dagger(\bft, x) \right),\quad k\in\mbZ,
\end{equation}
where $C(x)=\sum_{i\ge 0}c_i x^{-i}$ is a power series with constant coefficients.

\medskip

\begin{proposition}\label{proposition:backward bd}
	Let $L$ be a KP solution, and let $\bPhi$ be an associated adjoint eigenfunction satisfying $\bPhi(0)\ne 0$. Define $\bD:=\bPhi^{-1} \circ \d^{-1} \circ \bPhi$. Then we have the following:
\begin{enumerate}
		\item $\bD$ satisfies the conditions of Lemma~\ref{lemma:sato-wilson}, and so $\widecheck{L}: = \bD\circ L\circ \bD^{-1}$ is again a KP solution.  We say that $\widecheck{L}$ is related to $L$ by a \emph{backward B\"acklund--Darboux transformation}.
		
		\smallskip
		
		\item The action on the dressing operator, the wave function, and the tau-function is given by $\widecheck{P}_{N-1} = \bD \circ P_{N}$, $\widecheck{\psi}_{N - 1} = \bD \psi_{N}$, $\widecheck{\tau} = \bPhi \tau$.
\end{enumerate}
\end{proposition}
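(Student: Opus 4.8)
The plan is to reduce everything to the forward case (Propositions~\ref{proposition:bd-definition} and~\ref{proposition:backlund-for-tau}) by means of the adjunction involution $*$ from Section~\ref{subsection:adjoint}. Write $\sigma$ for the substitution $\bft\mapsto-\bft$ acting on $K[[\bft]]$, so that $A^!=\sigma\circ A\circ\sigma$ for every pseudo-differential operator $A$, and $A^*=(A^!)^\dagger$. The central step is a dictionary between the two kinds of eigenfunctions: I claim that $\bPhi(\bft)$ is an adjoint eigenfunction of $L$ if and only if $\Phi(\bft):=\bPhi(-\bft)=\sigma\bPhi$ is an ordinary eigenfunction of the KP solution $L^*$. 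To verify this I would combine $\d_n\Phi=-\sigma(\d_n\bPhi)$ with the identity $(L^*)^n_+=\sigma\,(L^n_+)^\dagger\,\sigma$, which follows from $(L^*)^n=((L^n)^!)^\dagger$ together with the compatibilities $(A_\pm)^!=A^!_\pm$, $(A_\pm)^\dagger=(A^\dagger)_\pm$ and $(A^!)^\dagger=(A^\dagger)^!$; chasing the signs then turns the equation $\d_n\bPhi=-(L^n_+)^\dagger\bPhi$ into $\d_n\Phi=(L^*)^n_+\Phi$.

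Granting this dictionary, Proposition~\ref{proposition:bd-definition} applies to the pair $(L^*,\Phi)$: the operator $D:=\Phi\circ\d\circ\Phi^{-1}$ satisfies the conditions of Lemma~\ref{lemma:sato-wilson} and $\widehat{(L^*)}:=D\circ L^*\circ D^{-1}$ is a KP solution. I then transport this back through $*$. Since a multiplication operator is $\dagger$-self-adjoint while $!$ replaces $\bft$ by $-\bft$, one gets $\Phi^*=\bPhi$, and since $\d^!=-\d$ one gets $\d^*=\d$; the anti-homomorphism property of $*$ then yields $D^*=\bPhi^{-1}\circ\d\circ\bPhi$, whence $\bD=(D^*)^{-1}=(D^{-1})^*$. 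Because $*$ is an involution on KP solutions with $(A_+)^*=(A^*)_+$ and $(A^n)^*=(A^*)^n$, applying it to $\widehat{(L^*)}$ gives $(\widehat{(L^*)})^*=(D^*)^{-1}\circ L\circ D^*=\bD\circ L\circ\bD^{-1}=\widecheck L$, a KP solution. To obtain the precise assertion of part~(1), namely the Sato--Wilson condition for $\bD$ itself, I would apply $*$ to $\d_nD=\widehat{(L^*)}^n_+\circ D-D\circ(L^*)^n_+$; using $(\d_nD)^*=-\d_n(D^*)$ this becomes $\d_n(D^*)=L^n_+\circ D^*-D^*\circ\widecheck L^n_+$, and differentiating $\bD=(D^*)^{-1}$ then gives exactly $\d_n\bD=\widecheck L^n_+\circ\bD-\bD\circ L^n_+$.

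For part~(2) I would again transport the forward formulas. The identity $\widecheck P_{N-1}=\bD\circ P_N$ is checked directly from $\widecheck L=\bD L\bD^{-1}$ and $L=P\circ\d\circ P^{-1}$, noting that $\bD P\d$ has the required form $1+O(\d^{-1})$, and then $\widecheck\psi_{N-1}=\bD\psi_N$ follows since $\d\,e^{\xi(\bft,x)}=x\,e^{\xi(\bft,x)}$. Finally, the tau-function rule is read off by feeding the forward transformation into $*$: Proposition~\ref{proposition:backlund-for-tau} gives $\widehat{\tau^*}=\Phi\,\tau^*$ for $L^*$ (whose tau-function is $\tau^*(\bft)=\tau(-\bft)$, and $\Phi(0)=\bPhi(0)\ne 0$), and since the tau-function of $(\widehat{(L^*)})^*=\widecheck L$ is $(\widehat{\tau^*})^*$, evaluation at $-\bft$ yields $\widecheck\tau(\bft)=\widehat{\tau^*}(-\bft)=\Phi(-\bft)\tau(\bft)=\bPhi(\bft)\tau(\bft)$, as claimed.

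The main obstacle is the sign bookkeeping of the first two paragraphs: one must track carefully how the involution $*=\dagger\circ{!}$ interacts with the time derivatives $\d_n$ (where the factor $!$ contributes the decisive sign through $\sigma$), with the truncation $(\,\cdot\,)_\pm$, and with composition (which $*$ reverses). Once these compatibilities are nailed down, every assertion of the proposition becomes a formal consequence of its forward counterpart, so no genuinely new computation---in particular, nothing like the differential Fay identity underlying Proposition~\ref{proposition:backlund-for-tau}---needs to be redone.
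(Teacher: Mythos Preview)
Your argument is correct and takes a genuinely different route from the paper. The paper proves part~(1) by rewriting the Sato--Wilson condition for $\bD$ as an equation for $\bD^{-1}=\bPhi^{-1}\circ\d\circ\bPhi$ and then redoing the residue computation of Proposition~\ref{proposition:bd-definition} essentially verbatim; for the tau-function formula in part~(2) it checks directly that $\widecheck\psi=\frac{\bPhi(\bft-[x^{-1}])}{\bPhi(\bft)}\psi$, reduces this to the identity $x\psi^\dagger(\bft,z)\psi(\bft,x)=\d\bigl(\psi^\dagger(\bft-[x^{-1}],z)\psi(\bft,x)\bigr)$, and derives the latter from the differential Fay identity. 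You instead transport the entire forward theory through the adjunction involution $*$: the dictionary $\bPhi\leftrightarrow\Phi=\sigma\bPhi$ converts adjoint eigenfunctions of $L$ into eigenfunctions of $L^*$, the identity $D^*=\bD^{-1}$ then lets you pull Proposition~\ref{proposition:bd-definition} back to the Sato--Wilson condition for $\bD$, and the tau-function rule follows by reading $\widehat{\tau^*}=\Phi\tau^*$ at $-\bft$. The upshot is that you never touch the Fay identity a second time---everything is formal once the sign conventions for $*=\dagger\circ{}^{!}$ acting on $\d_n$, on $(\cdot)_\pm$, and on composition are pinned down, exactly as you flag. The paper's approach is more self-contained per statement (and indeed the paper only records the interaction of $*$ with B\"acklund--Darboux transformations \emph{after} Proposition~\ref{proposition:backward bd}), whereas yours makes the duality between forward and backward transformations do all the work. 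One small point worth making explicit in your write-up: to conclude $\widecheck P_{N-1}=\bD\circ P_N$ you should also note that $\bD P\d$ satisfies the dressing equation $\d_n(\bD P\d)=-\widecheck L^n_-\circ(\bD P\d)$, which follows immediately from the Sato--Wilson condition for $\bD$ together with $\d_nP=-L^n_-P$; alternatively, the computation $(\widehat{P^*})^*=\bD P\d$ (which goes through with your sign conventions) gives this for free.
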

\begin{proof}
	{\it 1}. The conditions of Lemma~\ref{lemma:sato-wilson} are equivalent to $\d_n(\bD^{-1})=L^n_+\circ\bD^{-1}-\bD^{-1}\circ\widecheck{L}^n_+$, which can be checked analogously to the proof of Proposition~\ref{proposition:bd-definition}.
	
	\medskip
	
	{\it 2}. The formulas for $\widecheck{P}_{N-1}$ and $\widecheck{\psi}_{N - 1}$ are easily proved. The formula for $\widecheck{\tau}$ is equivalent to
	$$
	\widecheck{\psi}=\frac{\bPhi(\bft-[x^{-1}])}{\bPhi(\bft)}\psi\;\Leftrightarrow\; x\left(\bPhi^{-1}\circ\d^{-1}\circ\bPhi\right)\psi=\frac{\bPhi(\bft-[x^{-1}])}{\bPhi(\bft)}\psi\;\Leftrightarrow\; x\bPhi \psi=\d\left(\bPhi(\bft-[x^{-1}])\psi\right),
	$$ 
	which, using~\eqref{eq:adjoint Phi in terms of adjoint psi}, follows from the equation $x\psi^\dagger(\bft,z)\psi(\bft,x)=\d\left(\psi^\dagger(\bft-[x^{-1}],z)\psi(\bft,x)\right)$. 	Expressing $\psi$ and $\psi^\dagger$ in terms of the tau-function, this equation is derived from the differential Fay identity by a direct computation.
\end{proof}

\medskip

It follows that the backward B\"acklund--Darboux transformation of the tau-function can be expressed as
\begin{equation*}
	\widecheck{\tau}(\bft) = \Coef_{x^0} \left( C(x)\, \tau(\bft + [x^{-1}])\, e^{-\xi(\bft,x)} \right).
\end{equation*}

\medskip

Forward and backward B\"acklund--Darboux transformations are inverse to each other.

\begin{proposition}
	Suppose $\widehat{L}$ is related to $L$ by the forward B\"acklund--Darboux transformation generated by an eigenfunction $\Phi$. Then $\Phi^{-1}$ is an adjoint eigenfunction for $\widehat{L}$, and $L$ is related to $\widehat{L}$ by the backward B\"acklund--Darboux transformation generated by $\Phi^{-1}$.
\end{proposition}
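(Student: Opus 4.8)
The plan is to split the statement into its two assertions and to reduce the recovery claim to the eigenfunction claim. I would first record a purely algebraic identity. Associating to $\Phi^{-1}$ the operator of Proposition~\ref{proposition:backward bd}, $\bD := (\Phi^{-1})^{-1}\circ\d^{-1}\circ\Phi^{-1} = \Phi\circ\d^{-1}\circ\Phi^{-1}$, and recalling $D = \Phi\circ\d\circ\Phi^{-1}$, one sees immediately that $\bD = D^{-1}$, so that $\bD\circ\widehat{L}\circ\bD^{-1} = D^{-1}\circ(D\circ L\circ D^{-1})\circ D = L$. Since $\Phi^{-1}(0) = \Phi(0)^{-1}\neq 0$, Proposition~\ref{proposition:backward bd} then gives at once that $L$ is the backward B\"acklund--Darboux transform of $\widehat{L}$ generated by $\Phi^{-1}$, \emph{once} we know that $\Phi^{-1}$ is an adjoint eigenfunction of $\widehat{L}$. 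Thus the entire content reduces to the single claim $\d_n\Phi^{-1} = -(\widehat{L}^n_+)^\dagger\Phi^{-1}$ for all $n\ge 1$.

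To attack this claim I would first extract what the operator formalism gives for free. A direct computation yields $D^\dagger = -\Phi^{-1}\circ\d\circ\Phi$, hence $D^\dagger\Phi^{-1} = -\Phi^{-1}\d(\Phi\cdot\Phi^{-1}) = 0$, and more generally $\ker D^\dagger = \Phi^{-1}\cdot\ker\d$ on functions. Taking the formal adjoint of the Sato--Wilson equation of Lemma~\ref{lemma:sato-wilson} gives $\d_n D^\dagger = D^\dagger(\widehat{L}^n_+)^\dagger - (L^n_+)^\dagger D^\dagger$; applying both sides to $\Phi^{-1}$, using $D^\dagger\Phi^{-1} = 0$ and the Leibniz rule $\d_n(D^\dagger\Phi^{-1}) = (\d_n D^\dagger)\Phi^{-1} + D^\dagger(\d_n\Phi^{-1})$, I obtain $D^\dagger g_n = 0$, where $g_n := \d_n\Phi^{-1} + (\widehat{L}^n_+)^\dagger\Phi^{-1}$. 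Consequently $g_n = \Phi^{-1}h_n$ with $\d h_n = 0$; note also that $g_1 = 0$ automatically, since $\widehat{L}_+ = \d$ forces $(\widehat{L}_+)^\dagger = -\d$.

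The genuinely hard step is to upgrade $g_n\in\ker D^\dagger$ to $g_n\equiv 0$ for $n\ge 2$: the operator manipulation never uses integrability of the flows, and $\ker D^\dagger$ is nonzero, so it cannot conclude on its own. I would close this gap through tau-functions and the differential Fay identity, exactly paralleling the proofs of Propositions~\ref{proposition:backlund-for-tau} and~\ref{proposition:backward bd}. By Proposition~\ref{proposition:backlund-for-tau} we have $\widehat{\tau} = \Phi\tau$, so the adjoint wave function of $\widehat{L}$ is
\[
\widehat{\psi}^\dagger(\bft,x) = \frac{\widehat{\tau}(\bft+[x^{-1}])}{\widehat{\tau}(\bft)}\,e^{-\xi(\bft,x)} = \frac{\Phi(\bft+[x^{-1}])}{\Phi(\bft)}\,\psi^\dagger(\bft,x),
\]
which is an $x$-family of solutions of the very equations $\d_n(\,\cdot\,) = -(\widehat{L}^n_+)^\dagger(\,\cdot\,)$ that $\Phi^{-1}$ should satisfy. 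By the adjoint analogue of Proposition~\ref{proposition:eigenfunctions-from-wave} (the representation~\eqref{eq:adjoint Phi in terms of adjoint psi}), proving that $\Phi^{-1}$ is an adjoint eigenfunction is equivalent to finding a constant-coefficient series $C(x) = \sum_{i\ge 0}c_i x^{-i}$ with $\Phi^{-1} = \Coef_{x^0}[C(x)\widehat{\psi}^\dagger]$; clearing $\widehat{\tau} = \Phi\tau$, this is the assertion $\Coef_{x^0}[C(x)\widehat{\tau}(\bft+[x^{-1}])e^{-\xi(\bft,x)}] = \tau$, i.e.\ that the backward transformation formula applied to $\widehat{\tau}$ returns $\tau$.

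Finally, substituting the tau-function expressions for $\psi$ and $\psi^\dagger$ and writing $\Phi = \Coef_{z^0}[C'(z)\psi(\bft,z)]$ for its representation as an eigenfunction of $L$, the desired equality becomes a residue identity in the two variables $x$ and $z$ which, after the shifts $\bft\mapsto\bft\pm[x^{-1}]$ and $\bft\mapsto\bft\pm[z^{-1}]$, is precisely the differential Fay identity already used in the appendix. I expect this double residue-and-shift computation, together with fixing the normalisation constant $c_0$, to be the fiddly technical heart of the argument rather than a conceptual difficulty; once it is in place, $g_n\equiv 0$ follows, completing the verification that $\Phi^{-1}$ is an adjoint eigenfunction and hence the proof.
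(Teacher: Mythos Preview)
Your reduction of the statement to the single claim $\d_n\Phi^{-1}=-(\widehat L^n_+)^\dagger\Phi^{-1}$ is correct and matches the paper exactly (it says ``the rest is elementary'').  Where you diverge is in how you propose to prove this claim.

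The paper does \emph{not} pass through tau-functions or the Fay identity here.  It verifies the equivalent form $\d_n\Phi=\Phi^2(\widehat L^n_+)^\dagger\Phi^{-1}$ by a four--line residue computation, using only the elementary identities $Af=\res(A\circ f\circ\d^{-1})$ for a differential operator $A$ and a function~$f$, $\res(f\circ B)=\res(B\circ f)$, $\res(B^\dagger)=-\res B$, together with the Sato--Wilson relation in the form $D^{-1}\circ\widehat L^n_+=L^n_+\circ D^{-1}+D^{-1}\circ(\d_nD)\circ D^{-1}$ and the eigenfunction equation $L^n_+\Phi=\d_n\Phi$.  Writing $D^{-1}=\Phi\circ\d^{-1}\circ\Phi^{-1}$ one gets
\[
\Phi^2(\widehat L^n_+)^\dagger\Phi^{-1}
=\res\!\big(D^{-1}\circ\widehat L^n_+\circ\Phi\big)
=\res\!\big(L^n_+\circ\Phi\circ\d^{-1}\big)
=L^n_+\Phi=\d_n\Phi,
\]
since the $D^{-1}\circ(\d_nD)\circ D^{-1}\circ\Phi$ term has order $\le -2$ in $\d$ and hence vanishing residue.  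So your assertion that ``the operator manipulation \ldots\ cannot conclude on its own'' is too pessimistic: your kernel-of-$D^\dagger$ argument throws away information by only using the adjoint Sato--Wilson relation \emph{applied to $\Phi^{-1}$}; the residue computation keeps the operator identity intact and feeds in $L^n_+\Phi=\d_n\Phi$ at the last step, which is exactly what closes your gap $g_n=\Phi^{-1}h_n\Rightarrow h_n=0$.

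Your tau-function route might in principle be made to work, but as written it is a plan rather than a proof: you do not produce the series $C(x)$, you do not justify that the double residue identity in $x$ and $z$ actually reduces to the differential Fay identity, and you invoke heavier machinery (Propositions~\ref{proposition:backlund-for-tau} and~\ref{proposition:backward bd}) for a statement that the paper establishes by pseudo-differential algebra alone.
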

\begin{proof}
	We only have to prove that $\Phi^{-1}$ is an adjoint eigenfunction for $\widehat{L}$, the rest is elementary. So we have to prove that $\d_n\Phi=\Phi^2(\widehat{L}^n_+)^\dagger(\Phi^{-1})$, which is done by the following computation:
	\begin{align*}
		\Phi^2(\widehat{L}^n_+)^\dagger(\Phi^{-1})=&\res\left(\Phi^2\circ(\widehat{L}^n_+)^\dagger\circ\Phi^{-1}\circ\d^{-1}\right)=\res\left(\Phi\circ(\widehat{L}^n_+)^\dagger\circ\Phi^{-1}\circ\d^{-1}\circ\Phi\right)=\\
		=&\res\Big(\underbrace{\Phi\circ\d^{-1}\circ\Phi^{-1}}_{=D^{-1}}\circ\widehat{L}^n_+\circ\Phi\Big)=\res\left((L^n_+\circ D^{-1}+D^{-1}\circ\d_nD\circ D^{-1})\circ\Phi\right)=\\
		=&\res\left(L^n_+\circ D^{-1}\circ\Phi\right)=\res\left(L^n_+\circ \Phi\circ\d^{-1}\right)=L^n_+\Phi=\\
		=&\d_n\Phi.
	\end{align*}
\end{proof}

\medskip

The adjunction involution interchanges forward and backward B\"acklund--Darboux transformations in the following sense.

\begin{proposition}
	Suppose $L$ is a solution of the KP hierarchy, and $M$ is obtained from~$L$ by a forward B\"acklund--Darboux transformation. Then $M^*$ is related to $L^*$ by a backward B\"acklund--Darboux transformation.
\end{proposition}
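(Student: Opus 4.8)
The plan is to exhibit the adjoint eigenfunction that generates the backward transformation, and to verify the claim at the level of pseudo-differential operators by carefully pushing the forward-BD operator $D$ through the two involutions $(\cdot)^!$ and $(\cdot)^\dagger$ whose composition defines $(\cdot)^*$. Write the forward transformation as $M=D\circ L\circ D^{-1}$ with $D=\Phi\circ\d\circ\Phi^{-1}$, where $\Phi$ is an eigenfunction of $L$ with $\Phi(0)\ne 0$. I claim that $M^*$ is the backward B\"acklund--Darboux transformation of $L^*$ generated by the adjoint eigenfunction $\bPhi(\bft):=\Phi(-\bft)$.

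First I would rewrite $M^*$ in terms of $L^*$. Since $(\cdot)^!$ is an algebra homomorphism and $(\cdot)^\dagger$ is an anti-homomorphism with $((\cdot)^{-1})^\dagger=((\cdot)^\dagger)^{-1}$, applying first $!$ and then $\dagger$ to $M=D\circ L\circ D^{-1}$ gives
\[
M^* = (M^!)^\dagger = \big((D^!)^\dagger\big)^{-1}\circ (L^!)^\dagger\circ (D^!)^\dagger = \big((D^!)^\dagger\big)^{-1}\circ L^*\circ (D^!)^\dagger .
\]
Next I would compute $(D^!)^\dagger$. Since $\Phi$ and $\Phi^{-1}$ are order-zero (multiplication) operators, one has $\Phi^!=\bPhi$, $(\Phi^{-1})^!=\bPhi^{-1}$, and $\d^!=-\d$, so $D^!=-\,\bPhi\circ\d\circ\bPhi^{-1}$; applying $\dagger$ (which sends $\d\mapsto-\d$ and fixes multiplication operators) yields $(D^!)^\dagger=\bPhi^{-1}\circ\d\circ\bPhi$. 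Setting $\bD:=\big((D^!)^\dagger\big)^{-1}=\bPhi^{-1}\circ\d^{-1}\circ\bPhi$, we obtain exactly $M^*=\bD\circ L^*\circ\bD^{-1}$, which is the shape of a backward transformation in Proposition~\ref{proposition:backward bd}.

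It remains to check that $\bPhi$ is genuinely an adjoint eigenfunction of $L^*$ with $\bPhi(0)\ne 0$; the latter is immediate, as $\bPhi(0)=\Phi(0)\ne 0$. For the former, I would start from $\d_n\Phi=L^n_+\Phi$, evaluate at $-\bft$, and use $\d_n\bPhi=-(\d_n\Phi)(-\bft)$ together with the chain rule $(\d^i\Phi)(-\bft)=(-\d)^i\bPhi$ to rewrite the right-hand side as $(L^n_+)^!\,\bPhi$. Then, using $(A_+)^!=A^!_+$ and $(L^n)^!=(L^!)^n$, identify $(L^n_+)^!=(L^!)^n_+$, and finally use $\big((L^*)^n_+\big)^\dagger=(L^!)^n_+$ (which follows from $L^*=(L^!)^\dagger$, $(A_+)^\dagger=(A^\dagger)_+$, and $(A^\dagger)^\dagger=A$) to conclude $\d_n\bPhi=-\big((L^*)^n_+\big)^\dagger\bPhi$, exactly the adjoint eigenfunction equation.

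The only real obstacle is bookkeeping: keeping the sign flips and order reversals of the two involutions straight, and distinguishing the operator-level $!$ and $\dagger$ from their effect on the scalar eigenfunction. As an independent consistency check, one may verify the statement on tau-functions: since $\tau^*(\bft)=\tau(-\bft)$ and the forward transformation gives $\tau_M=\Phi\,\tau_L$ by Proposition~\ref{proposition:backlund-for-tau}, one has $\tau_{M^*}(\bft)=\tau_M(-\bft)=\Phi(-\bft)\,\tau_L(-\bft)=\bPhi(\bft)\,\tau_{L^*}(\bft)$, which matches the backward-BD rule $\widecheck{\tau}=\bPhi\,\tau$ from Proposition~\ref{proposition:backward bd}.
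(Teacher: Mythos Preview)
Your proof is correct and is precisely the ``elementary computation'' the paper alludes to (the paper's own proof consists of only those two words). Your explicit identification of $\bPhi(\bft)=\Phi(-\bft)$ as the generating adjoint eigenfunction, together with the verification $(D^!)^\dagger=\bPhi^{-1}\circ\d\circ\bPhi$ and the check that $\d_n\bPhi=-\big((L^*)^n_+\big)^\dagger\bPhi$, is exactly what is needed; the tau-function consistency check at the end is a nice bonus but not required.
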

\begin{proof}
	Elementary computation.
\end{proof}

\medskip

\subsection{Example: multi-soliton solutions}

To give a concrete example, we will recall the definition of $N$-soliton solutions of the KP hierarchy and show that they are obtained from the trivial solution by $N$ successive B\"acklund--Darboux transformations.

\medskip

Let $N$ be a natural number. Choose some constants $\alpha_i$, $\beta_i$, $a_i$ for $i=1, \dots, N$. Let $y_i(\bft): = e^{\xi(\bft, \alpha_i)} + a_i e^{\xi(\bft, \beta_i)}$. For $k=1,\dots, N$, define the Wronskian determinants
\begin{equation*}
	\Delta_k: = \det \begin{pmatrix}
		y_1 & \dots & y_k \\
		y'_1 & \dots & y'_k \\
		\dots & \dots & \dots\\
		y^{(k-1)}_1 & \dots & y^{(k-1)}_k
	\end{pmatrix},
\end{equation*}
where $y^{(i)}_j:=\d^i y_j$.

\medskip

\begin{proposition}[see, e.g., Section~6.3 in~\cite{Dic03}]
	If the constants $\alpha_i$, $\beta_i$, $a_i$ satisfy $\Delta_N(0) \neq 0$, then $\Delta_N(\bft)$ is a tau-function of the KP hierarchy. The KP solution defined by $\Delta_N$ is called the \emph{$N$-soliton solution}.
\end{proposition}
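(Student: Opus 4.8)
The plan is to exhibit the $N$-soliton solution as the outcome of $N$ successive forward B\"acklund--Darboux transformations applied to the trivial KP solution, and then to identify the resulting tau-function with the Wronskian $\Delta_N$ by a telescoping product. First I would take the trivial solution $L_0=\d$, whose tau-function is $\tau_0=1$ and whose wave function is $\psi_0(\bft,x)=e^{\xi(\bft,x)}$, so that $(L_0^n)_+=\d^n$. Since $\d_n e^{\xi(\bft,\alpha)}=\alpha^n e^{\xi(\bft,\alpha)}=\d^n e^{\xi(\bft,\alpha)}$ for every constant $\alpha$, each function $y_i=e^{\xi(\bft,\alpha_i)}+a_i e^{\xi(\bft,\beta_i)}$ satisfies $\d_n y_i=(L_0^n)_+y_i$ and is therefore an eigenfunction of $L_0$ in the sense of the definition preceding Proposition~\ref{proposition:bd-definition}.

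Next I would iterate the construction of Proposition~\ref{proposition:bd-definition}. Setting $\Phi_1:=y_1$ and $D_1:=\Phi_1\circ\d\circ\Phi_1^{-1}$ produces a KP solution $L_1=D_1\circ L_0\circ D_1^{-1}$, and by Proposition~\ref{proposition:backlund-for-tau} the tau-function becomes $\tau_1=\Phi_1\tau_0$. A one-line computation from the Sato--Wilson condition $\d_n D_1=(L_1^n)_+\circ D_1-D_1\circ(L_0^n)_+$ of Lemma~\ref{lemma:sato-wilson}, combined with $\d_n\chi=(L_0^n)_+\chi$, gives $\d_n(D_1\chi)=(L_1^n)_+(D_1\chi)$, so $D_1$ sends eigenfunctions of $L_0$ to eigenfunctions of $L_1$. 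I would therefore define inductively
$$
\Phi_k:=D_{k-1}\circ\cdots\circ D_1\,y_k,\qquad D_k:=\Phi_k\circ\d\circ\Phi_k^{-1},
$$
so that each $\Phi_k$ is a genuine eigenfunction of the current solution and at every step $\tau_k=\Phi_k\tau_{k-1}$. Hence $\tau_N=\Phi_N\Phi_{N-1}\cdots\Phi_1$, and $\tau_N$ is a KP tau-function by construction.

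The heart of the argument, and the step I expect to be the main obstacle, is the Crum--Darboux identity
$$
\Phi_k=D_{k-1}\circ\cdots\circ D_1\,y_k=\frac{W(y_1,\ldots,y_k)}{W(y_1,\ldots,y_{k-1})}=\frac{\Delta_k}{\Delta_{k-1}},
$$
where $W$ denotes the Wronskian in $t_1$ and $\Delta_0:=1$. I would prove the stronger statement that $D_m\circ\cdots\circ D_1 f=W(y_1,\ldots,y_m,f)/W(y_1,\ldots,y_m)$ for an arbitrary function $f$, by induction on $m$. The base case $D_1 f=(y_1 f'-y_1'f)/y_1$ is immediate, and the inductive step reduces, after writing $D_m g=g'-(\Phi_m'/\Phi_m)g$, to a determinantal identity of Jacobi--Sylvester type relating the bordered Wronskians $W(y_1,\ldots,y_m,f)$, $W(y_1,\ldots,y_{m-1},f)$, $W_m$ and $W_{m-1}$ (writing $W_j:=W(y_1,\ldots,y_j)$); this determinant computation is the only genuinely technical part. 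Granting it, the product for $\tau_N$ telescopes to $\tau_N=\Delta_N$, which proves the proposition.

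Finally I would settle well-definedness. Each transformation requires $\Phi_k(0)\ne 0$, i.e.\ $\Delta_k(0)\ne 0$, whereas the hypothesis supplies only $\Delta_N(0)\ne 0$. Since permuting $y_1,\ldots,y_N$ changes $\Delta_N$ at most by a sign, the condition $\Delta_N(0)\ne 0$ says exactly that the matrix $\big(y_j^{(i-1)}(0)\big)_{i,j=1}^N$ is invertible; a standard greedy (matroid-exchange) argument then reorders the $y_i$ so that all leading principal minors $\Delta_1(0),\ldots,\Delta_N(0)$ are nonzero, making the entire chain of transformations well-defined.
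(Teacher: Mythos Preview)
The paper does not actually prove this proposition: it is stated as a known result with the citation ``see, e.g., Section~6.3 in~\cite{Dic03}'' and no proof is given. So there is no paper proof to compare against directly.

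That said, your argument is correct and is close in spirit to what the paper does in the \emph{next} proposition (where, under the stronger hypothesis $\Delta_k(0)\ne 0$ for all $k$, it is shown that each $\Delta_k$ arises from $\Delta_{k-1}$ by a forward B\"acklund--Darboux transformation). The paper's method there is somewhat different from yours: rather than invoking the Crum--Darboux identity $D_{k-1}\cdots D_1 f = W(y_1,\ldots,y_{k-1},f)/W(y_1,\ldots,y_{k-1})$, the paper computes $\Delta_{k-1}(\bft-[x^{-1}])$ explicitly as a bordered determinant and reads off a generating series $\gamma_k(x)$ realizing the B\"acklund--Darboux step via formula~\eqref{eq:backlund-darboux-for-tau}. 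Your Crum-identity route is more classical and arguably cleaner, while the paper's route has the virtue of producing the explicit $C(x)$ for each step.

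Your final paragraph, reordering the $y_i$ so that all leading principal minors $\Delta_k(0)$ are nonzero, is a genuine addition: the paper simply imposes $\Delta_k(0)\ne 0$ for all $k$ as a standing assumption before its second proposition, so your argument covers the weaker hypothesis of the stated result while the paper's own development does not.
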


\medskip

Suppose that $\Delta_k(0)\ne 0$ for all $1\le k\le N$.

\begin{proposition}
	For all $1\le k\le N$, $\Delta_k$ is a KP tau-function, obtained from $\Delta_{k-1}$ by a B\"acklund--Darboux transformation.
\end{proposition}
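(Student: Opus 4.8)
The plan is to prove the statement by induction on $k$, building the entire tower of solutions $L_0, L_1, \ldots, L_N$ together with the eigenfunctions that generate the successive transformations. I start from the trivial solution $L_0 = \d$, whose tau-function is $\Delta_0 := 1$, and I record that each building block $y_i$ is an eigenfunction of $L_0$: since $\d_n e^{\xi(\bft,\alpha)} = \alpha^n e^{\xi(\bft,\alpha)} = (L_0^n)_+\, e^{\xi(\bft,\alpha)}$ (because $L_0 = \d$ forces $(L_0^n)_+ = \d^n$), the same holds for the combinations $y_i = e^{\xi(\bft,\alpha_i)} + a_i e^{\xi(\bft,\beta_i)}$. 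The inductive hypothesis will be that after $k-1$ forward B\"acklund--Darboux transformations we have a KP solution $L_{k-1}$ with tau-function $\Delta_{k-1}$, together with functions $y_k^{(k-1)}, \ldots, y_N^{(k-1)}$ that are eigenfunctions of $L_{k-1}$.

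The first key step is a short lemma asserting that forward B\"acklund--Darboux transformations carry eigenfunctions to eigenfunctions. Concretely, if $\Phi$ and $\Psi$ are eigenfunctions of $L$ with $\Phi(0)\neq 0$, and $D = \Phi\circ\d\circ\Phi^{-1}$, $\whL = D\circ L\circ D^{-1}$ as in Proposition~\ref{proposition:bd-definition}, then $D\Psi$ is an eigenfunction of $\whL$. This follows at once from the identity $\d_n D = \whL^n_+\circ D - D\circ L^n_+$ of Lemma~\ref{lemma:sato-wilson}: one computes $\d_n(D\Psi) = (\d_n D)\Psi + D(\d_n\Psi) = (\whL^n_+ D - D L^n_+)\Psi + D L^n_+\Psi = \whL^n_+(D\Psi)$. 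Taking $\Phi_k := y_k^{(k-1)}$ as the eigenfunction generating the $k$-th transformation $D_k = \Phi_k\circ\d\circ\Phi_k^{-1}$, this lemma guarantees that the pushed-forward functions $y_j^{(k)} := D_k\, y_j^{(k-1)}$, $j>k$, remain eigenfunctions of $L_k$, so the inductive hypothesis propagates.

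It then remains to identify $\Phi_k$ explicitly and to track the tau-function. By Proposition~\ref{proposition:backlund-for-tau}, each step multiplies the tau-function by its generating eigenfunction, so $\tau_k = \Phi_k\tau_{k-1}$ with $\tau_0 = 1$, whence $\tau_k = \prod_{j=1}^k\Phi_j$ by telescoping. To match this with $\Delta_k$ I would establish the Darboux--Crum formula
\begin{equation*}
y_j^{(k)} = D_k\circ\cdots\circ D_1\, y_j = \frac{W(y_1,\ldots,y_k,y_j)}{W(y_1,\ldots,y_k)},
\end{equation*}
where $W$ is the Wronskian in $\d = \d_{t_1}$, so that $\Delta_k = W(y_1,\ldots,y_k)$. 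Specializing this formula at level $k-1$ with the entry $y_k$ gives $\Phi_k = y_k^{(k-1)} = \Delta_k/\Delta_{k-1}$, and therefore $\tau_k = \prod_{j=1}^k \Delta_j/\Delta_{j-1} = \Delta_k$, which is exactly the assertion; the standing hypothesis $\Delta_j(0)\neq 0$ for all $j$ ensures $\Phi_k(0)\neq 0$, as required to apply Proposition~\ref{proposition:backlund-for-tau}.

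The main obstacle, and the only genuinely computational point, is the Darboux--Crum formula above. I would derive it by iterating the single-step Wronskian identity
\begin{equation*}
W(Dg_1,\ldots,Dg_m) = \frac{W(\phi, g_1,\ldots,g_m)}{\phi}, \qquad D = \phi\circ\d\circ\phi^{-1},
\end{equation*}
which is itself proved by induction on $m$ through elementary column operations on the defining determinant, the base case being $Dg = W(\phi,g)/\phi$. Applying this identity once with $\phi = y_1$ converts the Wronskian of the first batch of pushed-forward eigenfunctions into $W(y_1,\ldots)/y_1$; repeating with $\phi = y_2^{(1)} = \Delta_2/\Delta_1$, and so on, telescopes to the quotient of consecutive $\Delta$'s. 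Once this bookkeeping is in place, the induction described above closes the proof.
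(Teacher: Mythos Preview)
Your argument is correct but follows a genuinely different route from the paper. The paper works entirely at the level of tau-functions: it invokes the formula (from \cite{Dic03}) expressing $\Delta_{k-1}(\bft-[x^{-1}])$ as a bordered Wronskian determinant, and then verifies directly that $\Delta_k(\bft)=\Coef_{x^0}\bigl(\gamma_k(x)\,x^{k-1}\,\Delta_{k-1}(\bft-[x^{-1}])\,e^{\xi(\bft,x)}\bigr)$ with $\gamma_k(x)=\frac{1}{1-\alpha_k x^{-1}}+\frac{a_k}{1-\beta_k x^{-1}}$, which is precisely the form~\eqref{eq:backlund-darboux-for-tau} of a B\"acklund--Darboux transformation. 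By contrast, you build the tower of solutions $L_0,\ldots,L_N$ in the Lax formalism, prove the auxiliary lemma that $D$ carries eigenfunctions to eigenfunctions, and then identify the accumulated eigenfunction product with $\Delta_k$ via the classical Darboux--Crum Wronskian identity. Your approach is more structural and does not rely on the shifted-determinant formula from \cite{Dic03}, but it requires establishing the Darboux--Crum identity separately and does not immediately exhibit the explicit generating series $C(x)$; the paper's approach is shorter, stays within the tau-function formalism already set up, and yields the explicit $C(x)$ for free.
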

\begin{proof}
	As shown in \cite[Section~6.3]{Dic03}, we have
	\begin{equation*}
		\Delta_{k-1}(\bft - [x^{-1}]) = \det \begin{pmatrix}
			y_1 & \dots & y_{k-1} & x^{-k+1}\\
			y'_1 & \dots & y'_{k-1} & x^{-k+2} \\
			\dots & \dots & \dots & \dots\\
			y^{(k-1)}_1 & \dots & y^{(k-1)}_{k-1} & 1
		\end{pmatrix}.
	\end{equation*}
By direct computation, it follows that $\Delta_k(\bft) = \Coef_{x^0} \left( \gamma_{k}(x) \, \Delta_{k-1}(\bft - [x^{-1}]) \,  x^{k-1} e^{\xi(\bft, x)}  \right)$, where $	\gamma_{k}(x) = \frac{1}{1 - \alpha_k x^{-1}} + \frac{a_k}{1 - \beta_k x^{-1}}$, which completes the proof.
\end{proof}

\medskip

\subsection{B\"acklund--Darboux transformations in fermionic form}\label{subsection:bd-fermionic}

Consider a tau-function of the KP hierarchy given by $\tau = \Psi_N \Big( \bigwedge_{i\ge 1}f_{N-i}(z)\Big)$, and let $C(x) = \sum_{i \geq N} c_i x^{-i}$, where $c_i \in \mbC$ and $c_N \neq 0$. Combining formula \eqref{eq:backlund-darboux-for-tau} for a B\"acklund--Darboux transformation of a tau-function and the fermionic representation \eqref{eq:psi in terms of fi} for the wave function, we obtain the following formula:
\begin{multline*}
	\widehat{\tau}(\bft) = \Coef_{x^0} \left[C(x)\psi_{N}(\bft,x)\tau(\bft) \right] 
	= \Coef_{x^0} \left[ \Psi_{N+1}  \Big( \sum_{j \geq N} c_j x^{-j} \sum_{k \in \mbZ} \theta_k x^{k} \bigwedge_{i\ge 1}f_{N-i}(z) \Big) \right]= \\
	=\Psi_{N+1} \Big( \sum_{j \geq N} c_j \theta_j \bigwedge_{i\ge 1}f_{N-i}(z)\Big) = \Psi_{N+1} \Big( \bigwedge_{i\ge 1}f_{N+1-i}(z)\Big),
\end{multline*}
where we set $f_N(z) = C(z^{-1})$. Therefore, if $\tau = \tau_H$ for $H \in \Gr_N^0$, then $\widehat{\tau} = \tau_{\widehat{H}}$, where $H \subset \widehat{H} := H \oplus \spn(f_N) \in \Gr_{N+1}^0$. 

\medskip

In other words, in the Sato Grassmannian formalism, a forward B\"acklund--Darboux transformation corresponds to an extension of the infinite-dimensional space $H$ by some one-dimensional subspace.

\medskip

Dually, since a backward B\"acklund--Darboux transformation can be represented using the adjoint involution as $\widecheck{\tau} = (\widehat{\,\tau^*\,})^*$, we obtain $\widecheck{\, \tau_H \,} = \tau_{\widecheck{H}}$, where $H \supset \widecheck{H} := H \cap \ell^\perp$ for some one-dimensional subspace $\ell \subset \mbC((z))$. Thus, a backward B\"acklund--Darboux transformation restricts the subspace $H$ to its codimension one subspace.

\medskip

It follows that two KP solutions are related by some sequence of forward and backward B\"acklund--Darboux transformations if and only if the corresponding subspaces of $\mbC((z))$ are commensurable.

\medskip

\begin{remark}{\ }
\begin{itemize}
\item Similarly to Remark~\ref{remark:about base algebra}, the theory of B\"acklund--Darboux transformations has a direct generalization to the case of arbitrary base algebra~$K$. The necessary changes are minimal: in Propositions~\ref{proposition:bd-definition} and~\ref{proposition:backlund-for-tau} one should require that~$\Phi(0)$ is invertible, and in Proposition~\ref{proposition:backward bd} one should require that~$\bPhi(0)$ is invertible. 

\smallskip

\item Consider an arbitrary tau-function $\tau$ of the KP hierarchy, with the base algebra~$K$. By Proposition \ref{proposition:wave-is-eigenfunction}, the wave function $\psi(\bft, x)$ can be viewed as an eigenfunction of the corresponding solution, if we extend the base algebra to $K((x^{-1}))$. Note that $\psi(0,x)=\frac{\tau(-[x^{-1}])}{\tau(0)}\in K((x^{-1}))$ is an invertible element. Therefore, $\tau(\bft) \psi(\bft, x) = \tau(\bft - [x^{-1}])e^{\xi(\bft, x)}$ is a KP tau-function, which is obtained from $\tau$ by a B\"acklund--Darboux transformation.
\end{itemize}
\end{remark}

\medskip

\subsection{MKP hierarchy}

\begin{definition}
A \emph{tau-sequence of the mKP (modified KP) hierarchy} is a sequence $(\tau_N)_{N \in \mathbb{Z}}$ of KP tau-functions such that for each $N \in \mathbb{Z}$, the tau-function $\tau_{N+1}$ is obtained from $\tau_N$ via a forward B\"acklund--Darboux transformation.
\end{definition}

\medskip
 
It follows from the discussion in Section \ref{subsection:bd-fermionic} that an mKP tau-sequence $(\tau_N)$ can be represented by a point $H_{\bullet}$ in the big cell of flag manifold $\mathrm{Fl}^0$, that is, an infinite sequence of subspaces $( \ldots \subset H_{-1} \subset H_0 \subset H_1 \subset \ldots)$ with $H_N \in \Gr_N^0$.

\medskip

Equivalently, given a sequence of Laurent series $(f_N)_{N \in \mbZ}$ of the form
$$
	f_{N}(z)=z^{N} \Big(1 + \sum_{j \geq 1} f_{N,j} z^j\Big), \quad f_{N,j} \in \mbC,
$$
we define the corresponding an mKP tau-sequence $(\tau_N)_{N \in \mbZ}$ by
\begin{equation}
	\label{eq:mkp-sequence-from-series}
	\tau_{N} := \Psi_N \Big( \bigwedge_{i\ge 1}f_{N-i}(z)\Big) \in \mbC[[\bft]].
\end{equation}

\medskip

\end{document}